\newcommand{\morph}[1]{\langle #1 \rangle}
\newcommand{\mmorph}[1]{\ensuremath{\morph{#1}}}
\title{Morphing Planar Graph Drawings Optimally}
\date{}
\author{Patrizio Angelini$^1$, Giordano Da Lozzo$^1$, Giuseppe Di Battista$^1$, Fabrizio Frati$^2$, Maurizio Patrignani$^1$, Vincenzo Roselli$^1$
\institute{
$1$ Dipartimento di Ingegneria, Roma Tre University, Italy\\
\email{\{angelini,dalozzo,gdb,patrigna,roselli\}@dia.uniroma3.it}\\
$2$ School of Information Technologies, The University of Sydney, Australia\\
\email{fabrizio.frati@sydney.edu.au}
}}
\newcommand{\remove}[1]{}
\renewcommand{\int}{int}
\newcommand{\slope}{slope}
\renewenvironment{proof}
{{\bf Proof:}}{\hspace*{\fill}$\Box$\par\vspace{2mm}}
\newenvironment{proofx}
{{\bf Proof of Lemma~\ref{le:unidirectional}:}}{\hspace*{\fill}$\Box$\par\vspace{2mm}}
\begin{document}

\maketitle

\begin{abstract}
We\remove{show} provide an algorithm\remove{that computes} for computing a planar morph between any two planar straight-line drawings of any $n$-vertex plane graph in $O(n)$ morphing steps, thus improving upon the previously best known $O(n^2)$ upper bound. Further, we prove that our algorithm is optimal, that is, we show that there exist two planar straight-line drawings $\Gamma_s$ and $\Gamma_t$ of an $n$-vertex plane graph $G$ such that any planar morph between $\Gamma_s$ and $\Gamma_t$ requires $\Omega(n)$ morphing steps.
\end{abstract}

\section{Introduction} \label{se:introduction}

A {\em morph} is a continuous transformation between two topologically equivalent geometric objects. The study of morphs is relevant for several areas of computer science, including computer graphics, animation, and modeling. Many of the geometric shapes that are of interest in these contexts can be effectively described by two-dimensional planar graph drawings. Hence, designing algorithms and establishing bounds for morphing planar graph drawings is an important research challenge. We refer the reader to~\cite{ekp-ifmpg-03,fe-gdm-02,gs-gifpm-01,sg-cmcpt-01,sg-imct-03} for extensive descriptions of the applications of graph drawing morphs.

It has long been known that there always exists a {\em planar morph} (that is, a morph that preserves the planar topology of the graph at any time instant) transforming any planar straight-line drawing $\Gamma_s$ of a plane graph $G$ into any other planar straight-line drawing $\Gamma_t$\remove{of the same plane graph $G$} of $G$. However, the first proof of such a result, published by Cairns in 1944~\cite{c-dprc-44}, was ``existential'', meaning that no guarantee was provided on the complexity of the trajectories followed by the vertices during the morph. Almost 40 years later, Thomassen proved in~\cite{t-dpg-83} that a morph between $\Gamma_s$ and $\Gamma_t$ always exists in which vertices follow trajectories of exponential complexity (in the number of vertices of $G$). In other words, adopting a setting defined by Gr\"unbaum and Shepard~\cite{gs-tgopg-81} which is also the one we consider in this paper, Thomassen proved that there exists a sequence $\Gamma_s=\Gamma_1,\Gamma_2,\dots,\Gamma_k=\Gamma_t$ of planar straight-line drawings of $G$ such that, for every $1\leq i\leq k-1$, the {\em linear morph} transforming $\Gamma_i$ into $\Gamma_{i+1}$ is planar, where a linear morph moves each vertex at constant speed along a straight-line trajectory.

A breakthrough was recently obtained by Alamdari {\em et al.}\remove{at SODA~`13~\cite{aac-mpgdpns-13}. The authors proved} by proving that a planar morph between any two planar straight-line drawings of the same $n$-vertex connected plane graph exists in which each vertex follows a trajectory of polynomial complexity~\cite{aac-mpgdpns-13}. That is, Alamdari {\em et al.} showed an algorithm to perform the morph in $O(n^4)$ {\em morphing steps}, where a morphing step is\remove{ an intermediate} a linear morph. The $O(n^4)$ bound was shortly afterwards improved to $O(n^2)$ by Angelini {\em et al.}~\cite{afpr-mpgde-13}.

In this paper, we\remove{show} provide an algorithm to compute a planar morph with $O(n)$ morphing steps between any two planar straight-line drawings $\Gamma_s$ and $\Gamma_t$ of any $n$-vertex connected plane graph $G$. Further, we prove that our algorithm is optimal. That is, for every $n$, there exist two drawings $\Gamma_s$ and $\Gamma_t$ of the same $n$-vertex plane graph (in fact a path) such that any planar morph between $\Gamma_s$ and $\Gamma_t$ consists of $\Omega(n)$ morphing steps. To the best of our knowledge, no super-constant lower bound was previously known.

The schema of our algorithm is the same as in~\cite{aac-mpgdpns-13,afpr-mpgde-13}. Namely, we morph $\Gamma_s$ and $\Gamma_t$ into two drawings $\Gamma^x_s$ and $\Gamma^x_t$ in which a certain vertex $v$ can be contracted onto a neighbor $x$. Such contractions generate two straight-line planar drawings $\Gamma'_s$ and $\Gamma'_t$ of a smaller plane graph $G'$. A morph between $\Gamma'_s$ and $\Gamma'_t$ is recursively computed and suitably modified to produce a morph between $\Gamma_s$ and $\Gamma_t$. The main ingredient for our new bound is a drastically improved algorithm to morph $\Gamma_s$ and $\Gamma_t$ into $\Gamma^x_s$ and $\Gamma^x_t$. In fact, while the task of making $v$ contractible onto $x$ is accomplished with $O(n)$ morphing steps in~\cite{aac-mpgdpns-13,afpr-mpgde-13}, we devise and use properties of monotone drawings, level planar drawings, and hierarchical graphs to perform it with $O(1)$ morphing steps.

The idea behind the lower bound is that linear morphs can poorly simulate rotations, that is, a morphing step rotates an edge of an angle whose size is $O(1)$.\remove{ In fact, in a single linear morph one end vertex of an edge performs $O(1)$ rotations around the other end vertex.} We then consider two drawings $\Gamma_s$ and $\Gamma_t$ of an $n$-vertex path $P$, where $\Gamma_s$ lies on a straight-line, whereas $\Gamma_t$ has a spiral-like shape, and we prove that in any planar morph between $\Gamma_s$ and $\Gamma_t$ there is one edge of $P$ whose total rotation describes an angle whose size is $\Omega(n)$\remove{ that performs $\Omega(n)$ rotations}.

The rest of the paper is organized as follows. In Section~\ref{se:preliminaries} we give some definitions and preliminaries; in Section~\ref{se:algorithm} we present our algorithm; in Section~\ref{se:lower} we\remove{present our} discuss the lower bound; finally, in Section~\ref{se:conclusions} we conclude and\remove{present} offer some open problems.

\section{Preliminaries} \label{se:preliminaries}

In this section we give some definitions and preliminaries.

\vspace{-0.05cm}
\subsection{Drawings and Embeddings}

A \emph{planar straight-line drawing} of a graph maps each vertex to a distinct point in the plane and each edge to a straight-line segment between its endpoints so that no two edges cross. A planar drawing partitions the plane into topologically connected regions, called  {\em faces}. The bounded faces are \emph{internal}, while the unbounded face is the \emph{outer face}. A planar straight-line drawing is {\em convex} if each face is delimited by a convex polygon. A planar drawing of a graph determines a circular ordering of the edges incident to each vertex, called {\em rotation system}. Two drawings of a graph are \emph{equivalent} if they have the same rotation system and the same outer face. A \emph{plane embedding} is an equivalence class of planar drawings. A graph with a plane embedding is called a \emph{plane graph}. A plane graph is \emph{maximal} if no edge can be added to it while maintaining its planarity.

\subsection{Subgraphs and Connectivity}

A {\em subgraph} $G'(V',E')$ of a graph $G(V,E)$ is a graph such that $V'\subseteq V$ and $E'\subseteq E$; $G'$ is {\em induced} if, for every $u,v \in V'$, $(u,v)\in E'$ if and only if $(u,v)\in E$. If $G$ is a plane graph, then a subgraph $G'$ of $G$ is regarded as a plane graph whose plane embedding is the one obtained from $G$ by removing all the vertices and edges not in $G'$.

A graph $G$ is \emph{connected} if there is a path between every pair of vertices; it is \emph{$k$-connected} if removing any $k-1$ vertices leaves $G$ connected; a {\em separating $k$-set} is a set of $k$ vertices whose removal disconnects $G$. A {\em separating $3$-cycle} in a plane graph $G$ is a cycle with three vertices containing vertices both in its interior and in its exterior. Every separating $3$-set in a maximal plane graph $G$ induces a separating $3$-cycle.

\subsection{Monotonicity} \label{subse:monotonicity}

An {\em arc} $\vec{xy}$ is a line segment having $x$ and $y$ as endpoints and directed from $x$ to $y$. An arc $\vec{xy}$ is \emph{monotone} with respect to an oriented straight line $\vec d$ if it has a \emph{positive projection} on $\vec d$. That is, let $p$ and $q$ be any two distinct points in this order along $\vec {xy}$ when traversing $\vec{xy}$ from $x$ to $y$; then, the projection of $p$ on $\vec d$ precedes the projection of $q$ on $\vec d$ when traversing $\vec d$ according to its orientation. A path $P= (u_1, \dots, u_n)$ is \emph{$\vec d$-monotone} if the straight-line arc $\vec{u_i u_{i+1}}$ is monotone with respect to $\vec d$, for $i=1, \dots, n-1$; a path $P$ is \emph{monotone} if there exists an oriented straight line $\vec d$ such that $P$ is $\vec d$-monotone. A polygon $Q$ is \emph{$\vec d$-monotone} if there exist two vertices $s$ and $t$ in $Q$ such that the two paths that start at $s$, that end at $t$, and that compose $Q$ are both $\vec d$-monotone. Finally, a polygon $Q$ is {\em monotone} if there exists an oriented straight line $\vec d$ such that $Q$ is $\vec d$-monotone. We show some lemmata about monotone paths and polygons.

\begin{lemma}\label{le:convex-is-monotone}
Let $Q$ be any convex polygon and let $\vec d$ be any oriented straight line not perpendicular to any straight line through two vertices of $Q$. Then $Q$ is $\vec d$-monotone.
\end{lemma}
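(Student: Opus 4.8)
The plan is to identify the two vertices $s$ and $t$ that split the boundary of $Q$ into two $\vec d$-monotone paths, and to show that this split does the job. First I would orient $\vec d$ and consider the orthogonal projection $\pi$ of the plane onto $\vec d$. Since $Q$ is a convex polygon and $\vec d$ is not perpendicular to any line through two vertices of $Q$, the projections $\pi(v_1), \dots, \pi(v_m)$ of the vertices of $Q$ onto $\vec d$ are pairwise distinct. Let $s$ be the vertex whose projection is smallest along $\vec d$ and let $t$ be the vertex whose projection is largest; these are well defined and distinct because $Q$ has at least three vertices with distinct projections.

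Next I would argue that each of the two paths $P_1$ and $P_2$ into which $s$ and $t$ partition the boundary of $Q$ is $\vec d$-monotone. The key geometric fact is that for a convex polygon the boundary, traversed from the ``leftmost'' vertex $s$ to the ``rightmost'' vertex $t$ along either side, has vertices whose projections onto $\vec d$ are strictly monotonically increasing. This follows from convexity: if some edge $\vec{u_i u_{i+1}}$ along $P_1$ (oriented from $s$ towards $t$) had a non-positive projection on $\vec d$, then the chain would have to ``turn back'' with respect to the direction $\vec d$, and by convexity the supporting line of that edge would separate $s$ from $t$ on opposite sides in a way that contradicts $s$ (resp. $t$) being extremal, or would force two vertices of $Q$ to have the same projection (ruled out by the hypothesis on $\vec d$). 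Made precise, this shows that along $P_1$ the sequence of vertex projections is strictly increasing, hence every edge of $P_1$ has a strictly positive projection on $\vec d$, i.e.\ every arc of $P_1$ is monotone with respect to $\vec d$; the same argument applies to $P_2$. Therefore both $P_1$ and $P_2$ are $\vec d$-monotone paths from $s$ to $t$, and by definition $Q$ is $\vec d$-monotone.

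The main obstacle is the middle step: turning the intuitive ``convex boundary cannot go backwards'' statement into a clean argument. I expect the cleanest route is to use the characterization of convexity via supporting lines — for each edge $e$ of $Q$, all of $Q$ lies in one closed halfplane bounded by the line through $e$ — together with the observation that the signed length of the projection of an edge onto $\vec d$ changes sign exactly at local extrema of $\pi$ restricted to the boundary, of which a convex polygon has exactly two (namely $s$ and $t$), precisely because no two vertices share a projection. A short argument by contradiction (suppose an edge of $P_1$ has non-positive projection, locate the resulting extra local extremum, contradict convexity) then finishes it. Everything else is bookkeeping about orientations and the definitions of monotone arc and monotone polygon given above.
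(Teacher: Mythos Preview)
Your setup is identical to the paper's: project the vertices onto $\vec d$, use the hypothesis to get pairwise distinct projections, take $s$ and $t$ to be the vertices with extremal projections, and show that the two boundary paths $P_1,P_2$ from $s$ to $t$ are $\vec d$-monotone. The difference is only in how the monotonicity of $P_1$ and $P_2$ is established. The paper argues by induction on the number $k$ of vertices of $Q$: for $k=3$ it is immediate; for $k\ge 4$ it deletes the extremal vertex $u_a$, notes that the second-smallest projection $u_c$ becomes the new minimum in a smaller convex polygon $Q'$, applies induction to $Q'$, and then reattaches the two edges incident to $u_a$ (which have positive projection by extremality of $u_a$). Your route is the direct geometric one: a convex boundary, viewed through the projection $\pi$, has exactly one local minimum and one local maximum (namely $s$ and $t$), so along each of $P_1,P_2$ the projection is strictly increasing; the supporting-line formulation you sketch is the standard way to nail this down. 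Both arguments are correct. The inductive proof is entirely self-contained and avoids any appeal to ``convex curves have two $\pi$-extrema''; your argument is shorter and more conceptual but, as you note yourself, the contradiction step (an extra local extremum forces a non-convex turn, or equivalently a supporting line separating vertices of $Q$) needs one more sentence to be airtight.
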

\begin{proof}
Refer to Fig.~\ref{fig:convex-is-monotone}. Denote by $u_1,\dots,u_k$ the vertices of $Q$, in any order. Let $\vec d$ be any oriented straight line not perpendicular to any straight line through two vertices of $Q$. For $1\leq i\leq k$, let $u'_i$ be the projection of $u_i$ on $\vec d$. Since $Q$ is convex and $\vec d$ is not perpendicular to any straight line through two vertices of $Q$, we have that $u'_i$ and $u'_j$ are distinct, for $1\leq i\neq j \leq k$. Let $\sigma$ be the total order of the projections $u'_i$ as they are encountered when traversing $\vec d$ according to its orientation. Let $u'_a$ and $u'_b$ be the first and the last element in $\sigma$, respectively. We claim that the two paths $P_1$ and $P_2$ connecting $u_a$ and $u_b$ along $Q$ are $\vec d$-monotone. The claim directly implies the lemma.

\begin{figure}[htb]
\begin{center}
\mbox{\includegraphics[scale=0.45]{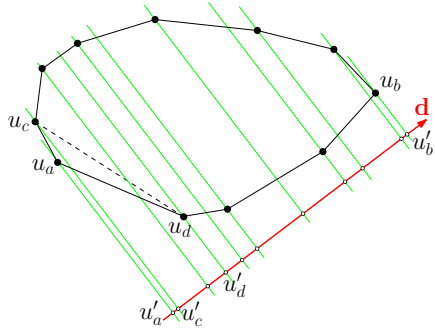}}
\caption{Illustration for the proof of Lemma~\ref{le:convex-is-monotone}.}
\label{fig:convex-is-monotone}
\end{center}
\end{figure}

We prove the claim by induction on $k$. If $k=3$, then the claim is trivially proved. If $k \geq 4$, then let $u'_c$ be the second element in $\sigma$. Assume, w.l.o.g., that $u_c$ is in $P_1$. Then, let $Q'$ be the convex polygon obtained from $Q$ by inserting a segment connecting $u_c$ with the second vertex of $P_2$, say $u_d$, and by removing $u_a$ and its two incident segments. Let $\sigma'=\sigma\setminus\{u'_a\}$. By assumption, $u'_c$ and $u'_b$ are the first and the last element in $\sigma'$, respectively. By induction, the two paths $P'_1=P_1\setminus\{(u_a,u_c)\}$ and $P'_2=P_2\setminus\{(u_a,u_d)\}\cup\{u_c,u_d\}$ are $\vec d$-monotone. Finally, arcs $\vec{u_a u_c}$ and $\vec{u_a u_d}$ have positive projections on $\vec d$, by the assumption that $u'_a$ is the first element in $\sigma$. Hence, paths $P_1$ and $P_2$ are $\vec d$-monotone and polygon $Q$ is $\vec d$-monotone.
\end{proof}

\begin{lemma} \label{le:angles-monotone-path}
Let $P=(u_1,u_2,u_3,u_4)$ be a path drawn in the plane. Denote by $\alpha$ the angle spanned by segment $\overline{u_1 u_2}$ while rotating such a segment clockwise around $u_2$ until it overlaps segment $\overline{u_2 u_3}$. Also, denote by $\beta$ the angle spanned by segment $\overline{u_2 u_3}$ while rotating such a segment clockwise around $u_3$ until it overlaps segment $\overline{u_3 u_4}$. Then, $P$ is monotone if and only if $\pi < \alpha + \beta < 3\pi$.
\end{lemma}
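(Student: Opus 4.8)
The statement is a clean characterization of monotonicity of a $4$-vertex path in terms of the two turning angles $\alpha,\beta$, so I would prove it by establishing the general principle that a polygonal path is monotone exactly when all of its edges have a positive projection on some common oriented line, and then translating this into a condition on $\alpha$ and $\beta$. First I would fix the picture: orient each of the three segments $\overline{u_1u_2}$, $\overline{u_2u_3}$, $\overline{u_3u_4}$ in the direction of increasing index, and let $\vec{e_1},\vec{e_2},\vec{e_3}$ be the corresponding direction vectors, viewed as points on the unit circle $S^1$. Monotonicity of $P$ with respect to an oriented line $\vec d$ means precisely that each $\vec{e_i}$ has a strictly positive inner product with $\vec d$, i.e. all three directions lie in the open half-circle centered at $\vec d$. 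Hence $P$ is monotone if and only if $\vec{e_1},\vec{e_2},\vec{e_3}$ are contained in some open half-circle of $S^1$.

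The second step is to express ``contained in an open half-circle'' in terms of $\alpha$ and $\beta$. By the definition in the statement, rotating $\overline{u_1u_2}$ clockwise about $u_2$ by $\alpha$ aligns it with $\overline{u_2u_3}$; but note the orientations: $\overline{u_1u_2}$ oriented from $u_1$ to $u_2$ is the reverse of $\vec{e_1}$-as-drawn-from-$u_2$, so I must be careful about whether $\alpha$ measures the angle between $\vec{e_1}$ and $\vec{e_2}$ or between $-\vec{e_1}$ and $\vec{e_2}$. The cleanest bookkeeping: going clockwise around $S^1$, the direction $\vec{e_2}$ sits at clockwise-angle $\alpha$ from $\vec{e_1}$ — more precisely, $\alpha$ is the clockwise angle from the direction $u_2\!\to\!u_1$ to the direction $u_2\!\to\!u_3$, i.e. from $-\vec{e_1}$ to $\vec{e_2}$, so $\vec{e_2}$ is at clockwise angle $\alpha-\pi$ from $\vec{e_1}$. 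Similarly $\vec{e_3}$ is at clockwise angle $\beta-\pi$ from $\vec{e_2}$, hence at clockwise angle $(\alpha-\pi)+(\beta-\pi)=\alpha+\beta-2\pi$ from $\vec{e_1}$. So the three directions, read clockwise starting from $\vec{e_1}$, are located at cumulative clockwise angles $0$, $\alpha-\pi$, and $\alpha+\beta-2\pi$ (all taken mod $2\pi$).

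Now I would finish with a short case analysis. Three points on $S^1$ lie in a common open half-circle iff one can find a starting point from which all three are encountered within a clockwise sweep of less than $\pi$. Taking $\vec{e_1}$ as reference and assuming (by genericity of a ``path drawn in the plane'' with no degenerate overlaps) that $0<\alpha<2\pi$, $0<\beta<2\pi$, the spacings between consecutive directions around the circle are determined by $\alpha-\pi$ and $\beta-\pi$ and the complementary gap. A direct check shows that the three points fail to lie in an open half-circle precisely when every gap between cyclically-consecutive points is $\le\pi$ with the relevant degenerate equalities, and working this out reduces exactly to the complementary condition $\alpha+\beta\le\pi$ or $\alpha+\beta\ge 3\pi$; equivalently, $P$ is monotone iff $\pi<\alpha+\beta<3\pi$. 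I would present this as: if $\alpha+\beta\le\pi$ then $\vec{e_1}$ and $\vec{e_3}$ (the "extremes") force a gap $\ge\pi$ on one side, blocking any half-circle; symmetrically if $\alpha+\beta\ge 3\pi$; and in the remaining range one exhibits an explicit line $\vec d$ (e.g. bisecting the angular span of the three directions) witnessing monotonicity.

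**Main obstacle.** The real subtlety is not the geometry of half-circles but the sign/orientation bookkeeping in Step 2 — correctly accounting for the fact that $\alpha$ and $\beta$ are defined as clockwise rotations of segments taken with a fixed endpoint, rather than as angles between the oriented edge directions, which introduces the $-\pi$ shifts. Getting these offsets right (and handling the boundary cases $\alpha+\beta=\pi$ and $\alpha+\beta=3\pi$, where the path is "flat"/non-strictly-monotone) is where a careless argument would go wrong; everything else is routine.
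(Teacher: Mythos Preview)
Your approach is correct and takes a genuinely different route from the paper's. You reduce monotonicity to a circle criterion---$P$ is monotone iff the three oriented edge directions $\vec{e_1},\vec{e_2},\vec{e_3}$ lie in a common open semicircle of $S^1$---and then read off the condition on $\alpha,\beta$ from their angular positions. The paper instead argues the two implications by separate geometric constructions: for the forward direction it forms the hexagon $(u_1,u_2,u_3,u_4,u'_4,u'_1)$ by dropping perpendiculars from $u_1$ and $u_4$ onto $\vec d$ and uses the interior-angle sum $4\pi$; for the converse it exhibits an explicit $\vec d$ via a case split on whether the smaller of $\alpha,\beta$ exceeds $\pi/2$. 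Your route is more conceptual and symmetric, and it generalizes immediately to longer paths; the paper's is more hands-on and directly produces a witnessing direction.

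One caution on your Step~3. The sentence ``if $\alpha+\beta\le\pi$ then $\vec{e_1}$ and $\vec{e_3}$ force a gap $\ge\pi$ on one side, blocking any half-circle'' is not the right reason as stated: two non-antipodal points always fit in \emph{some} open semicircle, so $\vec{e_1},\vec{e_3}$ alone block nothing. What actually happens when $\alpha+\beta\le\pi$ (hence $\alpha,\beta<\pi$) is that $\vec{e_2}$ sits on the \emph{long} arc from $\vec{e_1}$ to $\vec{e_3}$, and any semicircle covering all three would have to contain that arc of length $2\pi-\alpha-\beta\ge\pi$. A tidy way to do both directions at once is to center at $\vec{e_2}$: then $\vec{e_1}$ and $\vec{e_3}$ sit at signed angles $\alpha-\pi$ and $\pi-\beta$, both in $(-\pi,\pi)$, and the three real numbers $\{0,\alpha-\pi,\pi-\beta\}$ fit in an open interval of length $<\pi$ iff $\bigl|(\alpha-\pi)-(\pi-\beta)\bigr|<\pi$, i.e.\ $\pi<\alpha+\beta<3\pi$. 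With that fix your argument is complete.
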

\begin{proof}
Let $\alpha'=2\pi-\alpha$ and $\beta'=2\pi-\beta$ be the two angles incident to $u_2$ and to $u_3$ different from $\alpha$ and from $\beta$, respectively. Observe that $\pi < \alpha' + \beta' < 3\pi$ if and only if $\pi < \alpha + \beta < 3\pi$.

First, suppose that $P$ is monotone, that is, there exists an oriented straight line $\vec d$ such that $P$ is $\vec d$-monotone. We prove that $\pi < \alpha + \beta < 3\pi$. Refer to Fig.~\ref{fig:path}(a). Denote by $u'_1$ and $u'_4$ the projections of $u_1$ and $u_4$ on $\vec d$, respectively. Consider polygon $Q=(u_1,u_2,u_3,u_4,u'_4,u'_1)$. Denote by $\delta_1$, $\delta_4$, $\delta'_1$, and $\delta'_4$ the angles incident to $u_1$, $u_4$, $u'_1$, and $u'_4$ inside $Q$, respectively. We have $\alpha+\beta+\delta_1+\delta_4+\delta'_1+\delta'_4=4\pi$. Further, $\delta'_1=\delta'_4=\pi/2$. Since $0<\delta_1,\delta_4<\pi$, it follows that $\pi<\alpha+\beta<3\pi$.

\remove{

\begin{figure}[htb]
    \centering
    \subfigure[]{\includegraphics[scale=0.4]{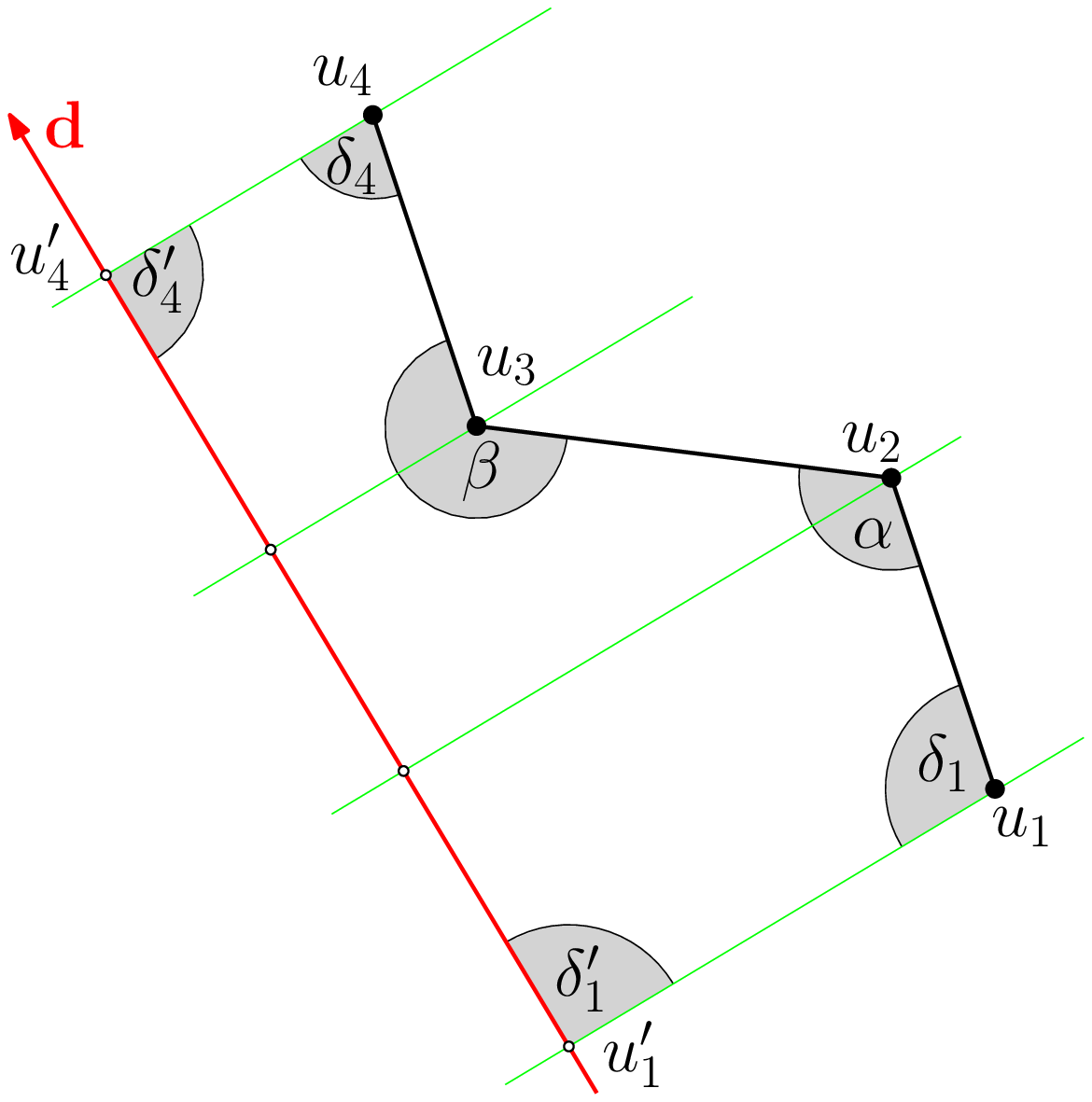}}\hspace{1cm}
    \subfigure[]{\includegraphics[scale=0.5]{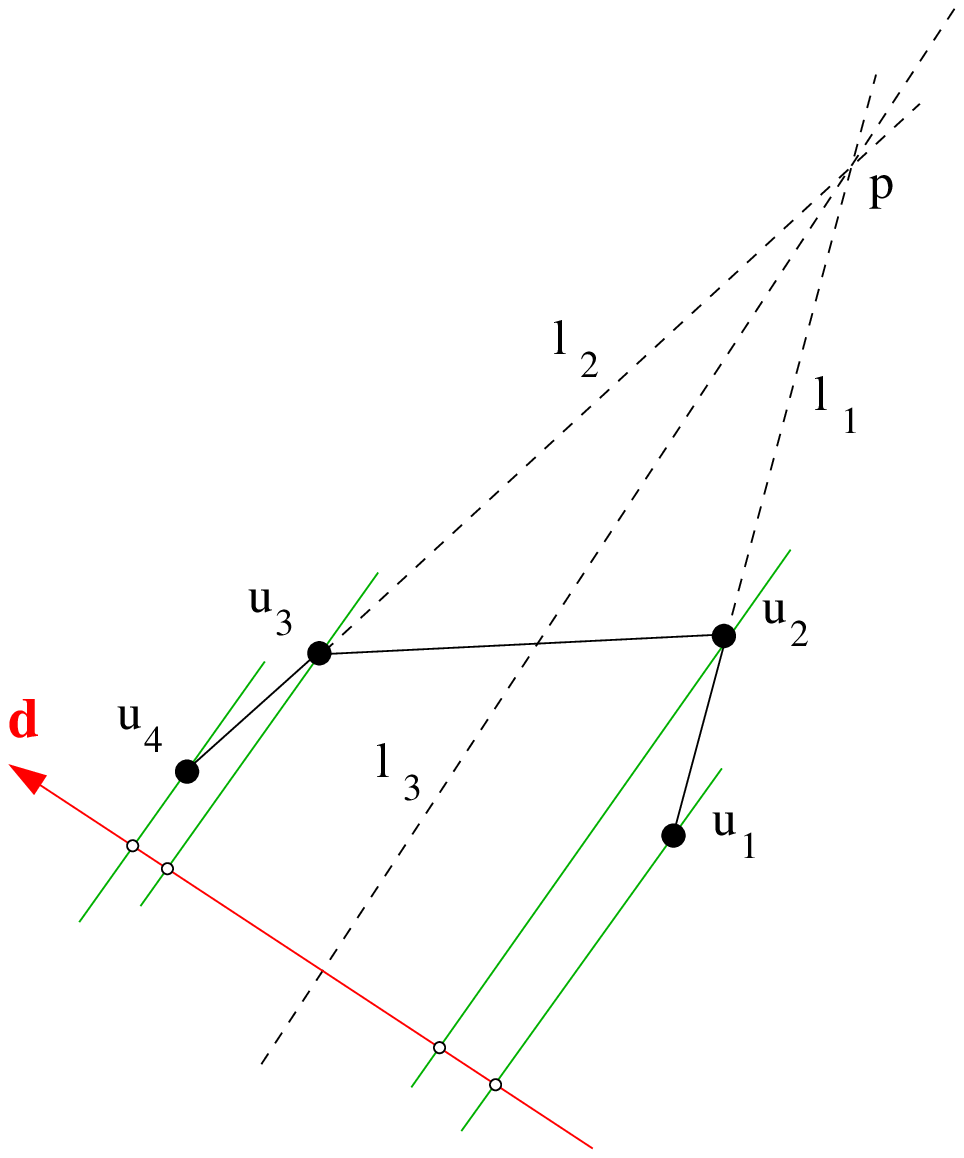}}
\caption{(a) If $P$ is monotone, then $\pi < \alpha + \beta < 3\pi$. (b) If $\pi < \alpha + \beta < 3\pi$, then $P$ is monotone. }
\label{fig:path}
\end{figure}

Second, suppose that $\pi < \alpha + \beta < 3\pi$. We prove that $P$ is monotone. We assume that $\alpha + \beta \leq 2\pi$. Indeed, if $\alpha + \beta > 2\pi$, then $\alpha' + \beta' \leq 2\pi$ and a symmetric proof can be exhibited in which $\alpha'$ and $\beta'$ replace $\alpha$ and $\beta$. 

Since $\alpha + \beta > \pi$, the line $l_1$ through $\overline{u_1 u_2}$ and the line $l_2$ through $\overline{u_3 u_4}$ cross at a point $p$, which is nearer to $u_3$ than to $u_4$ and nearer to $u_2$ than to $u_1$ (see Fig.~\ref{fig:path}(b)). Let $l_3$ be the line bisecting the angle formed by $l_1$ and $l_2$ and let $\vec d$ be orthogonal to $l_3$ and oriented in such a way that arc $\vec{u_2 u_3}$ has a positive projection on $\vec d$. We claim that $P$ is $\vec d$-monotone. In fact, arc $\vec{u_2u_3}$ has a positive projection on $\vec d$ by construction. Arc $\vec{u_3u_4}$ has a positive projection on $\vec d$ since it lays on line $l_2$ which is rotated clockwise with respect to its orthogonal $l_3$. Similarly, arc $\vec{u_2 u_1}$ has a negative projection on $\vec d$ (hence $\vec{u_1 u_2}$ has a positive projection) since it lays on line $l_2$ which is rotated counterclockwise with respect to $l_3$.
This concludes the proof of the lemma.
\end{proof}
} 

Second, suppose that $\pi < \alpha + \beta < 3\pi$. We prove that $P$ is monotone. We assume that $\alpha + \beta \leq 2\pi$. Indeed, if $\alpha + \beta > 2\pi$, then $\alpha' + \beta' \leq 2\pi$ and a symmetric proof can be exhibited in which $\alpha'$ and $\beta'$ replace $\alpha$ and $\beta$. Also, assume that $\alpha\leq \beta$, as the case $\beta\leq \alpha$ can be dealt with symmetrically.

\begin{figure}
    \centering
  \begin{minipage}{0.2\textwidth}
    \centering \vfill
    \subfigure[]{\includegraphics[scale=0.4]{Figures/3-Path-Monotone-1.eps}}
  \end{minipage}\hspace{2cm}
  \begin{minipage}{0.5\textwidth}
    \centering
\subfigure[]{\includegraphics[scale=0.4]{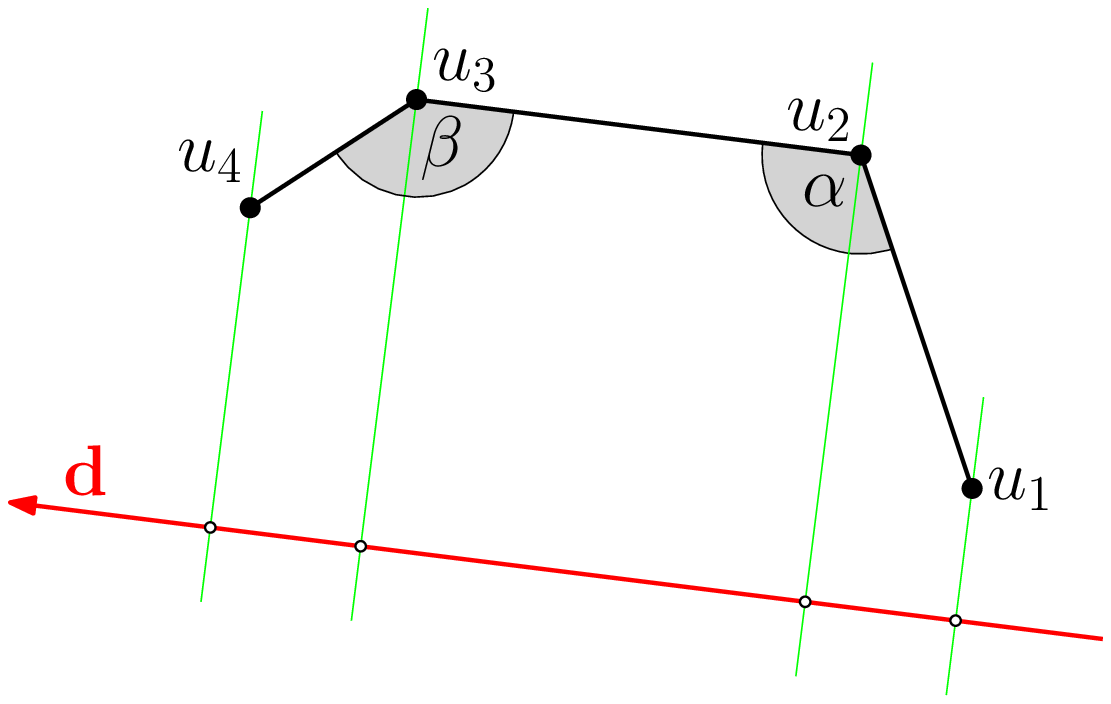}}

\subfigure[]{\includegraphics[scale=0.4]{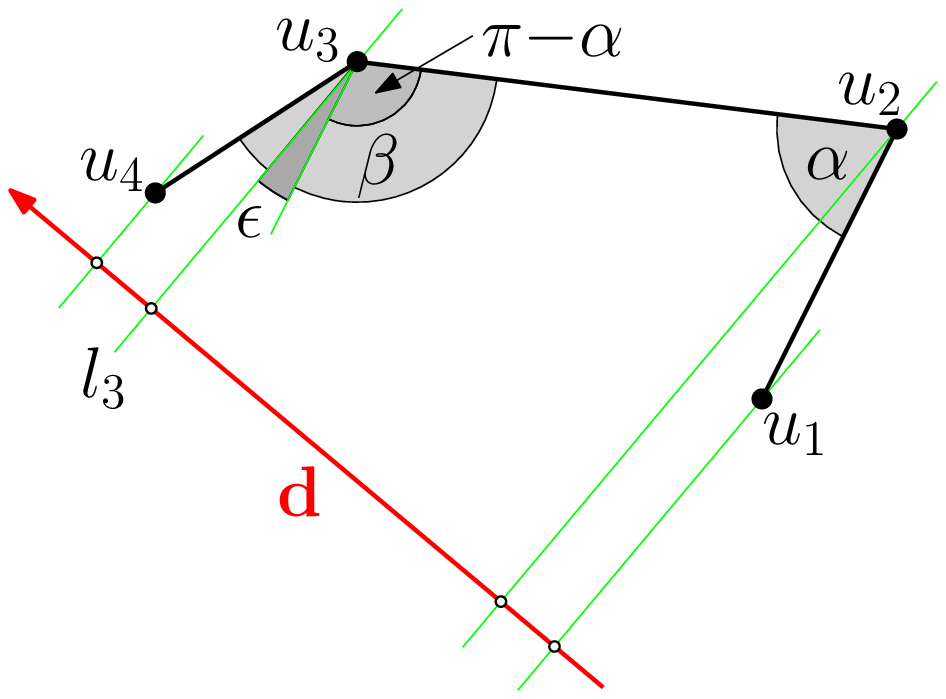}}  \end{minipage}
\caption{(a) If $P$ is monotone, then $\pi < \alpha + \beta < 3\pi$. (b) If $\pi < \alpha + \beta < 3\pi$ and $\alpha>\pi/2$, then $P$ is monotone. (c) If $\pi < \alpha + \beta < 3\pi$ and $\alpha \leq \pi/2$, then $P$ is monotone. }
\label{fig:path}
\end{figure}

If $\alpha>\pi/2$, then $\pi/2<\beta<3\pi/2$. Refer to Fig.~\ref{fig:path}(b). Let $\vec d$ be the oriented straight line parallel to segment $\overline{u_2 u_3}$ and oriented in such a way that arc $\vec{u_2 u_3}$ has a positive projection on $\vec d$. Since $\alpha,\beta>\pi/2$ and since $\alpha,\beta<3\pi/2$, it follows that arcs $\vec{u_1 u_2}$ and $\vec{u_3 u_4}$ have a positive projection on $\vec d$ as well, hence $P$ is $\vec d$-monotone.

If $\alpha\leq \pi/2$, then let $\epsilon$ be an arbitrarily small positive value such that $\beta>\pi-\alpha+\epsilon$. Such an $\epsilon$ always exist, given that $\beta>\pi-\alpha$. Refer to Fig.~\ref{fig:path}(c). Let $l_3$ be the line through $u_3$ such that the angle spanned by $\overline{u_2 u_3}$ while clockwise rotating such a segment around $u_3$ until it overlaps $l_3$ is equal to $\pi-\alpha+\epsilon$. Let $\vec d$ be an oriented straight line orthogonal to $l_3$ and directed so that arc $\vec{u_2u_3}$ has a positive projection on it. Observe that segment $\overline{u_2 u_3}$ is not perpendicular to $\vec d$, given that $\overline{u_2 u_3}$ and $l_3$ form an angle of $\pi-\alpha+\epsilon<\pi$. We claim that $P$ is $\vec d$-monotone. Arc $\vec{u_2u_3}$ has a positive projection on $\vec d$ by construction. The angle spanned by a clockwise rotation of segment $\overline{u_1 u_2}$ around $u_2$ bringing $\overline{u_1 u_2}$ to overlap with a line orthogonal to $\vec d$ passing through $u_2$ is $\epsilon$ by construction. Hence, arc $\vec{u_1u_2}$ has a positive projection on $\vec d$, given that $0<\epsilon<\pi$. Finally, to prove that arc $\vec{u_3u_4}$ has a positive projection on $\vec d$, it suffices to observe that $u_4$ is in the half-plane delimited by $l_3$ and not containing $u_2$, as a consequence of $\beta>\pi-\alpha+\epsilon$ and $\alpha + \beta \leq 2\pi$.

This concludes the proof of the lemma.
\end{proof}

\begin{lemma} \label{le:monotone-in-one-direction}
Any planar polygon $Q$ with at most $5$ vertices is monotone.
\end{lemma}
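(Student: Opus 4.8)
The plan is to prove a slightly more general statement that contains the lemma: \emph{every simple polygon $Q$ that has at most one reflex vertex, or that has exactly three convex vertices, is monotone}. This suffices, because a simple polygon has at least three convex vertices (for instance, the vertices of its convex hull), so a simple polygon with at most five vertices and at least two reflex vertices has exactly three convex vertices; hence one of the two hypotheses always holds when $Q$ has at most five vertices.

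To prove the general statement I would run an averaging argument over the circle $S^1$ of oriented directions. Call a direction $\vec d$ \emph{generic} if no two vertices of $Q$ project onto $\vec d$ to the same point; all but finitely many directions are generic. For a generic $\vec d$, call a vertex $u_i$ of $Q$ \emph{extremal} if both of its neighbors along $Q$ project onto $\vec d$ strictly on the same side of the projection of $u_i$, and let $m(\vec d)$ be the number of extremal vertices. Two facts are easy to check: (i) for every generic $\vec d$, $m(\vec d)$ is a positive even integer, since along $Q$ the local maxima and the local minima of the projection onto $\vec d$ alternate, so there are equally many of each and at least one of each; and (ii) $Q$ is $\vec d$-monotone if and only if $m(\vec d)=2$, since when $m(\vec d)=2$ the unique local maximum $t$ and the unique local minimum $s$ split $Q$ into two paths each having strictly increasing projection from $s$ to $t$, hence being $\vec d$-monotone, while the converse is immediate. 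So it suffices to exhibit one generic direction with $m(\vec d)=2$.

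For this I would compute the average of $m$ over $S^1$. Let $D_i\subseteq S^1$ be the set of directions for which $u_i$ is extremal; then $m(\vec d)=\sum_i \mathbf 1[\vec d\in D_i]$ for every generic $\vec d$, so the average of $m$ over the circle (of total length $2\pi$) is $\bar m=\frac{1}{2\pi}\sum_i |D_i|$, where $|\cdot|$ denotes arc length. The crucial geometric claim is that $|D_i|=2\,|\pi-\theta_i|$, where $\theta_i$ is the interior angle of $Q$ at $u_i$: denoting by $\vec a,\vec b$ the two vectors from $u_i$ to its neighbors, the directions making $u_i$ a local maximum are exactly those forming an angle larger than $\pi/2$ with both $\vec a$ and $\vec b$, an arc of length $\pi-\phi_i$, where $\phi_i\in[0,\pi]$ is the undirected angle between $\vec a$ and $\vec b$; the local-minimum directions form the antipodal arc; and $\phi_i=\theta_i$ if $u_i$ is convex while $\phi_i=2\pi-\theta_i$ if $u_i$ is reflex, so in both cases $|D_i|=2|\pi-\theta_i|$. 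Writing $2|\pi-\theta_i|=2(\pi-\theta_i)+4\max(0,\theta_i-\pi)$ and using that the interior angles of a simple $k$-gon sum to $(k-2)\pi$, this gives $\sum_i|D_i|=4\pi+4\sum_{u_i \mbox{ reflex}}(\theta_i-\pi)$, hence $\bar m=2+\frac{2}{\pi}\sum_{u_i \mbox{ reflex}}(\theta_i-\pi)$.

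Finally I would bound the reflex excess. If $Q$ has no reflex vertex, then $\bar m=2$. Otherwise let $r$ and $c$ be the numbers of reflex and of convex vertices. Each reflex angle satisfies $\theta_i-\pi<\pi$, so the excess is smaller than $r\pi$; and since $\sum_{u_i \mbox{ reflex}}(\theta_i-\pi)=(c-2)\pi-\sum_{u_i \mbox{ convex}}\theta_i$ while the convex angles are positive, the excess is also smaller than $(c-2)\pi$. Thus $\bar m<2+2\min(r,c-2)$, which is smaller than $4$ as soon as $r\le 1$ or $c=3$ — precisely the two cases of the general statement. Since $m$ is an even integer at least $2$ on a set of full measure and has average smaller than $4$, it cannot be at least $4$ on all generic directions; hence $m(\vec d)=2$ for some generic $\vec d$, and by (ii) the polygon $Q$ is $\vec d$-monotone. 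The step I expect to require the most care is the identity $|D_i|=2|\pi-\theta_i|$ together with the bookkeeping of degeneracies — non-generic directions are finitely many and so irrelevant to the average, and ``flat'' vertices with $\theta_i=\pi$ are never extremal and contribute $0$ on both sides — while everything else is elementary. A more hands-on alternative would be a direct case analysis on $k\in\{3,4,5\}$ and on the number and cyclic positions of the reflex vertices, invoking Lemma~\ref{le:convex-is-monotone} for the convex subcases; but that route becomes delicate for pentagons with two reflex vertices, exactly where the averaging argument is cleanest.
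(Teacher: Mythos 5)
Your argument is correct, and it is genuinely different from the one in the paper. The paper proceeds by a case analysis on the number of vertices $k\in\{3,4,5\}$: the triangle case follows from Lemma~\ref{le:convex-is-monotone}, and for $k=4,5$ the angle sum $(k-2)\pi$ is used in a pigeonhole fashion to find two consecutive interior angles $\alpha,\beta$ with $\pi<\alpha+\beta<3\pi$, whence Lemma~\ref{le:angles-monotone-path} yields a monotone four-vertex subpath of $Q$, which is then completed to a monotone splitting of the whole polygon. You instead run an averaging (integral-geometry) argument over the circle of directions: the identity $|D_i|=2|\pi-\theta_i|$ and the angle sum give $\bar m=2+\frac{2}{\pi}\sum_{\mathrm{reflex}}(\theta_i-\pi)$, and bounding the reflex excess by $\min(r,c-2)\pi$ forces some generic direction with exactly two extremal vertices, i.e.\ monotonicity. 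I checked the key steps — the equivalence of $m(\vec d)=2$ with $\vec d$-monotonicity under the paper's definition, the arc-length computation via the two open half-circles centered at the reversed edge vectors, the handling of flat vertices, and the reduction of the lemma to ``$r\le 1$ or $c=3$'' via the fact that a simple polygon has at least three convex vertices — and they all hold. What each approach buys: your route proves a strictly stronger statement (monotonicity of any simple polygon with at most one reflex vertex, or with exactly three convex vertices, regardless of its size), bypasses the case analysis and does not use Lemmata~\ref{le:convex-is-monotone} and~\ref{le:angles-monotone-path} at all; the paper's route is more elementary (no measure-theoretic averaging), stays within the toolkit it has already built, and exhibits the monotonicity direction rather concretely, which matches how the lemma is consumed later in Section~\ref{subse:convexifier} (where $\vec d$ is subsequently perturbed so that no line through two vertices of $G$ is perpendicular to it — a genericity condition your direction in fact already satisfies for the vertices of $Q$).
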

\begin{proof}
The proof distinguishes three cases, depending on the number of vertices of~$Q$.
\begin{itemize}
\item If $Q$ has three vertices, then it is convex, hence the statement follows from Lemma~\ref{le:convex-is-monotone}.

\item If $Q$ has four vertices, then it suffices to show that $Q$ contains a monotone path with four vertices. Namely, assume that $Q$ contains a path $P=(u_1,u_2,u_3,u_4)$ which is monotone with respect to some oriented straight line $\vec d$. Then, paths $(u_1,u_2,u_3,u_4)$ and $(u_1,u_4)$ are both $\vec d$-monotone, hence $Q$ is $\vec d$-monotone.

    Denote by $\alpha$, $\beta$, $\gamma$, and $\delta$ the angles internal to $Q$ in clockwise order around $Q$. Since $\alpha + \beta +  \gamma + \delta= 2\pi$, it follows that $\alpha + \beta < 3\pi$, that $\beta + \gamma < 3\pi$, that $\gamma + \delta < 3\pi$, and that $\delta + \alpha < 3\pi$. Suppose that for two consecutive angles in $Q$, say $\alpha$ and $\beta$, it holds $\alpha + \beta < \pi$; then, $\pi < \gamma + \delta < 3\pi$, and hence $Q$ contains a monotone path with four vertices by Lemma~\ref{le:angles-monotone-path}. Thus, if $Q$ does not contain any monotone path with four vertices, then every two consecutive angles in $Q$ sum up to exactly $\pi$, hence $Q$ is convex, and it is monotone with respect to every oriented straight line $\vec d$ by Lemma~\ref{le:convex-is-monotone}.

\item If $Q$ has five vertices, then again it suffices to show that $Q$ contains a monotone path with four vertices. Namely, assume that $Q$ contains a monotone path $P=(u_1,u_2,u_3,u_4)$. By definition of monotone path, there exists an oriented straight line $\vec d$ such that arcs $\vec{u_1 u_2}$, $\vec{u_2 u_3}$, and $\vec{u_3 u_4}$ have positive projections on $\vec d$. Slightly perturb the slope of $\vec d$, if necessary, so that no line through two vertices of $Q$ is orthogonal to $\vec d$. If the perturbation is small enough, then $P$ is still $\vec d$-monotone. Denote by $u_5$ the fifth vertex of $Q$ and, for $1\leq i\leq 5$, denote by $u'_i$ the projection of $u_i$ on $\vec d$. If $u'_5$ precedes $u'_1$ on $\vec d$, then paths $(u_5,u_1,u_2,u_3,u_4)$ and $(u_5,u_4)$ are both $\vec d$-monotone, hence $Q$ is $\vec d$-monotone; if $u'_5$ follows $u'_4$ on $\vec d$, then paths $(u_1,u_2,u_3,u_4,u_5)$ and $(u_1,u_5)$ are both $\vec d$-monotone, hence $Q$ is $\vec d$-monotone; finally, if $u'_5$ follows $u'_1$ and precedes $u'_4$ on $\vec d$, then paths $(u_1,u_2,u_3,u_4)$ and $(u_1,u_5,u_4)$ are both $\vec d$-monotone, hence $Q$ is $\vec d$-monotone.

    Denote by $\alpha$, $\beta$, $\gamma$, $\delta$, and $\epsilon$ the angles internal to $Q$ in clockwise order around $Q$. Since $\alpha + \beta +  \gamma + \delta + \epsilon= 3\pi$, it follows that $\alpha + \beta < 3\pi$, that $\beta + \gamma < 3\pi$, that $\gamma + \delta < 3\pi$, that $\delta + \epsilon < 3\pi$, and that $\epsilon + \alpha < 3\pi$. Suppose next that $\alpha + \beta \leq \pi$, that $\beta + \gamma \leq \pi$, that $\gamma + \delta \leq \pi$, that $\delta + \epsilon \leq \pi$, and that $\epsilon + \alpha \leq \pi$. Summing up the inequalities gives $2\alpha + 2\beta + 2 \gamma + 2\delta + 2\epsilon\leq 5\pi$, hence $\alpha + \beta +  \gamma + \delta + \epsilon\leq 5\pi/2$, a contradiction to the fact that $\alpha + \beta +  \gamma + \delta + \epsilon= 3\pi$. Hence, for at least a pair of consecutive angles of $Q$, say $\alpha$ and $\beta$, it holds $\pi < \alpha + \beta < 3\pi$. Thus, by Lemma~\ref{le:angles-monotone-path}, $Q$ contains a monotone path with four vertices.
\end{itemize}
This concludes the proof of the lemma.
\end{proof}

\subsection{Morphing}

A \emph{linear morph} \mmorph{\Gamma_1,\Gamma_2} is a continuous transformation between two straight-line planar drawings $\Gamma_1$ and $\Gamma_2$ of a plane graph $G$ such that each vertex moves at constant speed along a straight line from its position in $\Gamma_1$ to the one in $\Gamma_2$. A linear morph is {\em planar} if no crossing or overlap occurs between any two edges or vertices during the transformation. A planar linear morph is also called a {\em morphing step}. In the remainder of the paper, we will construct {\em unidirectional} linear morphs, that were defined in~\cite{bhl-mpgdum-13} as linear morphs in which the straight-line trajectories of the vertices are parallel.

A \emph{morph} \mmorph{\Gamma_s,\dots,\Gamma_t} between two straight-line planar drawings $\Gamma_s$ and $\Gamma_t$ of a plane graph $G$ is a finite sequence of morphing steps that transforms $\Gamma_s$ into $\Gamma_t$. A {\em unidirectional morph} is such that each of its morphing steps is unidirectional.

Let $\Gamma$ be a planar straight-line drawing of a plane graph $G$. The \emph{kernel} of a vertex $v$ of $G$ in $\Gamma$ is the open convex region $R$ such that placing $v$ at any point of $R$ while maintaining unchanged the position of every other vertex of $G$ yields a planar straight-line drawing of $G$. If a neighbor $x$ of $v$ lies on the boundary of the kernel of $v$ in $\Gamma$, we say that $v$ is \emph{$x$-contractible}. The \emph{contraction of $v$ onto $x$} in $\Gamma$ is the operation resulting in: (i) a simple graph $G'= G/(v,x)$ obtained from $G$ by removing $v$ and by replacing each edge $(v,w)$, where $w \neq x$, with an edge $(x,w)$ (if it does not already belong to $G$); and (ii) a planar straight-line drawing $\Gamma'$ of $G'$ such that each vertex different from $v$ is mapped to the same point as in $\Gamma$. Also, the {\em uncontraction of $v$ from $x$ into $\Gamma$} is the reverse operation of the contraction of $v$ onto $x$ in $\Gamma$, i.e., the operation that produces a planar straight-line drawing $\Gamma$ of $G$ from a planar straight-line drawing $\Gamma'$ of $G'$.

A vertex $v$ in a plane graph $G$ is a {\em quasi-contractible vertex} if (i) $\textrm{deg}(v)\le 5$ and, (ii) for any two neighbors $u$ and $w$ of $v$, if $u$ and $w$ are adjacent, then $(u,v,w)$ is a face of $G$. We have the following.

\begin{lemma}\label{le:candidate_exists}(Angelini et al.~\cite{afpr-mpgde-13})
Every plane graph contains a quasi-contractible vertex.
\end{lemma}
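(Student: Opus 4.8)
The statement to prove is Lemma~\ref{le:candidate_exists}: \emph{Every plane graph contains a quasi-contractible vertex}, where a vertex $v$ is quasi-contractible if $\deg(v)\le 5$ and every two adjacent neighbors $u,w$ of $v$ form a face $(u,v,w)$ with $v$.

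\bigskip

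The plan is to reduce the claim to the case of maximal plane graphs and then invoke Euler's formula. First I would dispose of trivial cases: if $G$ has at most two vertices, or is not $2$-connected, a low-degree vertex is easily seen to be quasi-contractible (e.g.\ a degree-$0$ or degree-$1$ vertex is vacuously quasi-contractible, and in general any vertex whose incident edges are pairwise ``non-chording'' works). So assume $G$ is $2$-connected with at least three vertices; hence every face is bounded by a cycle. Now take a maximal plane graph $\widehat G$ obtained from $G$ by adding edges inside faces until no more can be added without destroying planarity or simplicity; $\widehat G$ has the same vertex set as $G$, every internal face is a triangle, and $\widehat G$ is $3$-connected (a maximal plane graph on $\ge 4$ vertices is a triangulation and is $3$-connected). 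The key observation is a \emph{monotonicity} property: if $v$ is quasi-contractible in $\widehat G$, then $v$ is quasi-contractible in $G$. Indeed $\deg_G(v)\le\deg_{\widehat G}(v)\le 5$, and if $u,w$ are neighbors of $v$ in $G$ that are adjacent in $G$, then they are also adjacent and neighbors of $v$ in $\widehat G$, so $(u,v,w)$ is a face of $\widehat G$; since all edges of this triangle already lie in $G$, the triangle $(u,v,w)$ bounds a face of $G$ as well (no edge of $G$ can be drawn inside a facial triangle of $\widehat G$). Thus it suffices to find a quasi-contractible vertex in $\widehat G$.

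\bigskip

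So suppose $\widehat G$ is a triangulation on $n\ge 4$ vertices. By Euler's formula the average degree is $2|E|/n = (6n-12)/n < 6$, so there is a vertex $v$ with $\deg_{\widehat G}(v)\le 5$. It remains to check condition (ii): for any two adjacent neighbors $u,w$ of $v$, the triangle $(u,v,w)$ is a face. Here I would argue that if $u,w$ are neighbors of $v$, adjacent to each other, but $(u,v,w)$ is \emph{not} a face, then $\{u,v,w\}$ is a separating $3$-set of $\widehat G$ and hence (as recalled in the preliminaries) induces a separating $3$-cycle $C=(u,v,w)$ with vertices both inside and outside. The interior of $C$ contains at least one vertex, and so does the exterior. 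The trick is now to choose $v$ more carefully: pick a minimum-degree vertex $v$ of $\widehat G$ that additionally minimizes the number of vertices strictly inside some separating triangle through $v$ — or, more cleanly, pick $v$ to be a minimum-degree vertex lying inside a \emph{smallest} separating triangle $C'$ of $\widehat G$ (innermost), and show that inside $C'$ the triangulation is again a triangulation in which $v$ has degree $\le 5$ and no separating triangle through $v$ survives, because any such triangle would be strictly smaller, contradicting minimality. Formally: if $\widehat G$ has no separating triangle, every $\{u,v,w\}$ as above already bounds a face and we are done with any $\deg\le5$ vertex; otherwise induct on $n$ by recursing into the interior of an innermost separating triangle $C'$, which is itself a triangulation on fewer vertices, find a quasi-contractible vertex $v$ there, and observe that $v$ remains quasi-contractible in $\widehat G$ because every separating triangle of $\widehat G$ through $v$ would lie inside $C'$ and thus be a separating triangle of the smaller triangulation.

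\bigskip

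The main obstacle I anticipate is exactly this last point — the ``chording'' condition (ii) is global, not local, so a naive minimum-degree vertex need not satisfy it, and one must rule out separating triangles through the chosen vertex. The induction on the innermost-separating-triangle is the natural device, but it requires care to verify that the induced triangulation on (the closure of) the interior of $C'$ is simple and that no separating triangle of that subgraph was ``created'' by the recursion, and that the degree of $v$ does not increase when passing back to $\widehat G$ (it does not, since we only delete vertices and edges outside $C'$). Once (ii) is secured for $\widehat G$, the transfer back to $G$ via the monotonicity observation above is routine.
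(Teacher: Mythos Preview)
The paper does not prove this lemma at all; it simply cites it from~\cite{afpr-mpgde-13} and moves on. So there is no ``paper's own proof'' to compare against, and your proposal should be judged on its own merits.

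Your reduction to maximal plane graphs via the monotonicity observation is correct and clean, and the separating-triangle recursion is the standard device. There is, however, one genuine gap in the inductive step. When you recurse into the interior of an innermost separating triangle $C'=(a,b,c)$ and find a quasi-contractible vertex $v$ ``there,'' nothing in your induction hypothesis prevents $v$ from being one of $a,b,c$. If that happens, your parenthetical justification ``the degree of $v$ does not increase when passing back to $\widehat G$ (it does not, since we only delete vertices and edges outside $C'$)'' is simply false: a corner of $C'$ has strictly more neighbors in $\widehat G$ than in the interior triangulation, so both the degree bound and condition~(ii) can fail in $\widehat G$.

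The fix is easy but must be made explicit: strengthen the statement you induct on to ``every triangulation on $n\ge 4$ vertices has an \emph{internal} quasi-contractible vertex.'' For the base case (no separating triangle), a short count shows the $n-3$ internal vertices have total degree at most $6n-12-9=6(n-3)-3$, so some internal vertex has degree $\le 5$; with no separating triangle present, condition~(ii) holds automatically. In the inductive step, the vertex you obtain is then strictly inside $C'$, its neighborhood in $\widehat G$ coincides with its neighborhood in the interior triangulation, and every triangle through it and two of its neighbors already lies inside $C'$, so both conditions lift to $\widehat G$ as you claim. With this adjustment the argument is complete.
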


In the remainder of the paper, even when not explicitly specified, we will only consider and perform contractions of quasi-contractible vertices.

Let $\Gamma_1$ and $\Gamma_2$ be two straight-line planar drawings of the same plane graph $G$. We define a \emph{pseudo-morph} of $\Gamma_1$ into $\Gamma_2$ as follows: (A) a unidirectional morph with $m$ morphing steps of $\Gamma_1$ into $\Gamma_2$ is a pseudo-morph with $m$ steps of $\Gamma_1$ into $\Gamma_2$; (B) a unidirectional morph with $m_1$ morphing steps of $\Gamma_1$ into a straight-line planar drawing $\Gamma^x_1$ of $G$, followed by a pseudo-morph with $m_2$ steps of $\Gamma^x_1$ into a straight-line planar drawing $\Gamma^x_2$ of $G$, followed by a unidirectional morph with $m_3$ morphing steps of $\Gamma^x_2$ into $\Gamma_2$ is a pseudo-morph of $\Gamma_1$ into $\Gamma_2$ with $m_1+m_2+m_3$ steps; and (C) denote by $\Gamma'_1$ and $\Gamma'_2$ the straight-line planar drawings of the plane graph $G'$ obtained by contracting a quasi-contractible vertex $v$ of $G$ onto $x$ in $\Gamma_1$ and in $\Gamma_2$, respectively; then, the contraction of $v$ onto $x$, followed by a pseudo-morph with $x$ steps of $\Gamma'_1$ into $\Gamma'_2$, followed by the uncontraction of $v$ from $x$ into $\Gamma_2$ is a pseudo-morph with $m+2$ steps of $\Gamma_1$ into $\Gamma_2$.

Pseudo-morphs have two useful and powerful features.

First, it is easy to design an inductive algorithm for constructing a pseudo-morph between any two planar straight-line drawings $\Gamma_1$ and $\Gamma_2$ of the same $n$-vertex plane graph $G$. Namely, consider any quasi-contractible vertex $v$ of $G$ and let $x$ be any neighbor of $v$. Morph unidirectionally $\Gamma_1$ and $\Gamma_2$ into two planar straight-line drawings $\Gamma^x_1$ and $\Gamma^x_2$, respectively, in which $v$ is $x$-contractible. Now contract $v$ onto $x$ in $\Gamma^x_1$ and in $\Gamma^x_2$ obtaining two planar straight-line drawings  $\Gamma_1'$ and $\Gamma_2'$, respectively, of the same $(n-1)$-vertex plane graph $G'$. Then, the algorithm is completed by inductively computing a pseudo-morph of $\Gamma_1'$ into $\Gamma_2'$.

Second, computing a pseudo-morph between $\Gamma_1$ and $\Gamma_2$ leads to computing a planar unidirectional morph between $\Gamma_1$ and $\Gamma_2$, as formalized in Lemma~\ref{le:pseudo-to-morph}. We remark that, although Lemma~\ref{le:pseudo-to-morph} has never been stated as below, its proof can be directly derived from the results of Alamdari et al.~\cite{aac-mpgdpns-13,afpr-mpgde-13} and, mainly, of Barrera-Cruz et al.~\cite{bhl-mpgdum-13}.

\begin{lemma} \label{le:pseudo-to-morph} 
Let $\Gamma_s$ and $\Gamma_t$ be two straight-line planar drawings of a plane graph $G$. Let $\cal P$ be a pseudo-morph with $m$ steps transforming $\Gamma_s$ into $\Gamma_t$. It is possible to construct a planar unidirectional morph $M$ with $m$ morphing steps transforming $\Gamma_s$ into $\Gamma_t$.
\end{lemma}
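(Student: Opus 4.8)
The plan is to convert a pseudo-morph into an ordinary unidirectional morph by induction on the number of contractions appearing in the recursive definition of the pseudo-morph, peeling off the operations according to rules (A), (B), and (C). The base of the induction is rule (A): a pseudo-morph that is itself a unidirectional morph is already the desired object, with the same number of steps. For the inductive step we must handle the two composite rules. Rule (B) is essentially trivial: a concatenation of a unidirectional morph, a (shorter, in the induction sense) pseudo-morph, and another unidirectional morph becomes, by the inductive hypothesis applied to the middle piece, a concatenation of three unidirectional morphs, which is a unidirectional morph with the sum of the step counts — exactly the count prescribed by (B).

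The substance is rule (C), where a quasi-contractible vertex $v$ is contracted onto a neighbour $x$ in both $\Gamma_1$ and $\Gamma_2$, a pseudo-morph $\cal P'$ with $m$ steps transforms $\Gamma_1'$ into $\Gamma_2'$ in the smaller graph $G'$, and then $v$ is uncontracted. By the inductive hypothesis, $\cal P'$ yields a planar unidirectional morph $M' = \morph{\Gamma_1' = \Delta_0, \Delta_1, \dots, \Delta_m = \Gamma_2'}$ of $G'$. I would then reintroduce $v$ into every intermediate drawing $\Delta_i$, placing it sufficiently close to the position of $x$ so that the result $\Delta_i^+$ is a planar straight-line drawing of $G$ in which $v$ is $x$-contractible, and so that contracting $v$ onto $x$ in $\Delta_i^+$ gives back exactly $\Delta_i$. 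The key geometric fact, which follows from $v$ being quasi-contractible (degree at most $5$, and every edge between two neighbours of $v$ bounds a face with $v$), is that there is always such a placement: a small disk around $x$ lies in the kernel of $v$ once the edges $(v,w)$ are drawn as segments from that disk, because locally around $x$ the cyclic order of the neighbours of $v$ matches a convex wedge structure. Having built the sequence $\Delta_0^+, \dots, \Delta_m^+$, each linear morph $\morph{\Delta_i^+, \Delta_{i+1}^+}$ is planar and unidirectional: the vertices other than $v$ follow the same parallel trajectories as in $M'$, and $v$ can be made to follow a trajectory parallel to that common direction while staying within the moving kernel (shrinking $v$'s offset from $x$ as needed, or invoking the trajectory-tracking argument of Barrera-Cruz et al.~\cite{bhl-mpgdum-13} that lets a contracted vertex shadow its host unidirectionally). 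Prepending the morphing step that slides $v$ from its position in $\Gamma_1$ (i.e.\ in $\Gamma^x_1$, where it is $x$-contractible) to its position in $\Delta_0^+$, and appending the symmetric step at the end, accounts for the two extra steps in the ``$m+2$'' bookkeeping of rule (C); alternatively these two steps are absorbed into the outer rule-(B) applications that always surround a rule-(C) contraction in the algorithm, so the total count is preserved exactly.

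I expect the main obstacle to be the rule-(C) uncontraction step, specifically ensuring that $v$ can be carried along each morphing step $\morph{\Delta_i^+, \Delta_{i+1}^+}$ by a \emph{unidirectional} (parallel-trajectory) motion without ever leaving its kernel or creating a crossing, rather than merely a planar one. The cleanest way around this is to keep $v$ infinitesimally close to $x$ throughout — formally, to choose $v$'s offset from $x$ small relative to the minimum feature size over all the finitely many drawings $\Delta_i$ and over all the linear interpolations between consecutive ones — so that $v$'s presence perturbs nothing, and then let $v$ translate rigidly by the same vector as $x$ during each step; such a motion is trivially parallel to $x$'s trajectory and hence compatible with unidirectionality, and planarity is inherited from that of $M'$ together with the feature-size choice. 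The remaining routine points — that contraction and uncontraction are mutually inverse as claimed, that quasi-contractibility guarantees the needed kernel space, and that $v$'s initial and final offsets can be reached by two extra unidirectional steps from $\Gamma^x_1$ and into $\Gamma^x_2$ respectively — are then dispatched by the definitions and by Lemma~\ref{le:candidate_exists}, and the induction closes with the step count matching that assigned by rules (A), (B), (C).
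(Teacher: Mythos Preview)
Your inductive skeleton and your handling of rules (A) and (B) match the paper's proof exactly. The gap is in your preferred treatment of rule (C). The ``fixed offset'' argument---keep $v$ at $x+o$ for a single small vector $o$ and let $v$ translate rigidly with $x$ at each step---does not work, because the constraint on $v$'s placement is \emph{angular}, not just metric. In each $\Delta_i$, the point $v$ must lie in the circular sector $S_i$ at $x$ determined by the two edges of $G'$ incident to $x$ that flank the slot where $(x,v)$ sits in the rotation system of $G$ (more generally, in the kernel wedge determined by $v$'s neighbours). As $M'$ progresses over its $m$ steps, the directions of those edges from $x$ can rotate through an arbitrarily large total angle, so $S_0,\dots,S_m$ need not have any common direction, no matter how small the radius. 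A concrete obstruction: take $v$ of degree~$1$ inside a $4$-cycle through $x$; a morph of $G'$ that swings the square around $x$ by $180^\circ$ forces $v$ to switch from one side of $x$ to the opposite, and no fixed $o$ achieves that while preserving the embedding.

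What is actually needed is precisely the Barrera-Cruz et al.\ mechanism you mention only in passing: the offset must change from step to step, and unidirectionality forces $o_{i+1}-o_i$ to be parallel to the direction of the $i$-th step of $M'$. Their backward ``nice point'' construction shows that such a sequence can always be threaded through the sectors $S_0,\dots,S_m$. The paper follows exactly this route and devotes its effort to extending the sector definition to the cases $\deg(v)\le 2$ and $v$ incident to the outer face, which the original argument (stated for maximal plane graphs with $v$ internal of degree $\ge 3$) does not cover. So your architecture is right and you name the correct tool, but the concrete argument you give for the crucial step does not go through; you must commit to the nice-point construction rather than the rigid-offset shortcut.
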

\begin{proof}
The proof is by induction primarily on the number $k$ of contractions in $\cal P$ and secondarily on the number $x$ of steps of $\cal P$.

If $k=0$, then we are in Case (A) of the definition of a pseudo-morph; hence, $\cal P$ is a planar unidirectional morph with $x$ morphing steps transforming $\Gamma_s$ into $\Gamma_t$.

If $k>0$ and the first step of $\cal P$ is a unidirectional morphing step transforming $\Gamma_s$ into a straight-line planar drawing $\Gamma'_s$ of $G$, then we are in Case (B) of the definition of a pseudo-morph; denote by ${\cal P}'$ the pseudo-morph composed of the last $m-1$ steps of $\cal P$. By induction, there exists a planar unidirectional morph $M'$ with $m-1$ morphing steps transforming $\Gamma'_s$ into $\Gamma_t$. Hence the first morphing step of $\cal P$  followed by $M'$ is a planar unidirectional morph with $x$ morphing steps transforming $\Gamma_s$ into $\Gamma_t$.

The case in which $k>0$ and the last step of $\cal P$ is a unidirectional morphing step can be discussed analogously.

If $k>0$ and neither the first nor the last step of $\cal P$ is a unidirectional morphing step, then we are in Case (C) of the definition of a pseudo-morph. Hence, the first step of $\cal P$ is a contraction of a quasi-contractible vertex $v$ on a neighbor $x$, resulting in a planar straight-line drawing  $\Gamma_s'$ of an $(n-1)$-vertex plane graph $G'$. Also, the last step of $\cal P$ starts from a drawing $\Gamma_t'$ of $G'$ and uncontracts $v$ from $x$ into $\Gamma_t$.

Denote by ${\cal P}'$ the pseudo-morph with $m-2$ steps that is the part of $\cal P$ transforming $\Gamma_s'$ into $\Gamma_t'$. By induction, there exists a planar unidirectional morph $M'=\morph{\Gamma'_s = \Gamma'_1, \dots, \Gamma'_{m-2}=\Gamma'_t}$ with $m-2$ morphing steps transforming $\Gamma'_s$ into $\Gamma'_t$. For each $i=1,\dots,x-2$, we are going to construct a drawing $\Gamma_i$ of $G$ by placing vertex $v$ in a suitable position in $\Gamma'_i$ in such a way that the morph $M$ with $m$ morphing steps composed of a morphing step \mmorph{\Gamma_s, \Gamma_1}, followed by the morph \mmorph{\Gamma_1, \dots, \Gamma_{m-2}}, followed by a morphing step \mmorph{\Gamma_{m-2},\Gamma_t} is planar and unidirectional.

This strategy of constructing $M$ starting from $M'$ by suitably placing $v$ in each drawing of $M'$ is the same that was applied in~\cite{aac-mpgdpns-13,afpr-mpgde-13,bhl-mpgdum-13}. It should be noted that the algorithm for placing $v$ in $\Gamma'_1, \dots, \Gamma'_{x-2}$ differs slightly in those three papers. We opt here for an algorithm almost identical to the one in~\cite{bhl-mpgdum-13}, as it ensures that $M$ is a unidirectional morph. However, since in~\cite{bhl-mpgdum-13} $G$ is assumed to be a maximal plane graph, vertex $v$ can always be chosen to be an internal vertex of $G$ with degree at least $3$. In our case, instead, $v$ might be incident to the outer face of $G$ and might have degree $1$ or $2$.

We now describe the algorithm in~\cite{bhl-mpgdum-13} for placing $v$ when $v$ is internal and $\textrm{deg}(v) = 5$; then, we will argue that an analogous technique can be applied even if $v$ is incident to the outer face of $G$ and has degree $1$ or $2$.

Observe that, at any time instant $t$ during $M'$, there exists a disk of radius $\epsilon_t > 0$ that is centered at $m$ and that does not contain any vertex or edge other than $x$ and its incident edges. Let $\epsilon = \min_t\{\epsilon_t\}$ be the minimum $\epsilon_t$ among all time instants $t$ of $M'$.

\begin{figure}[htb]
    \centering
    \subfigure[]{\includegraphics[scale=0.35]{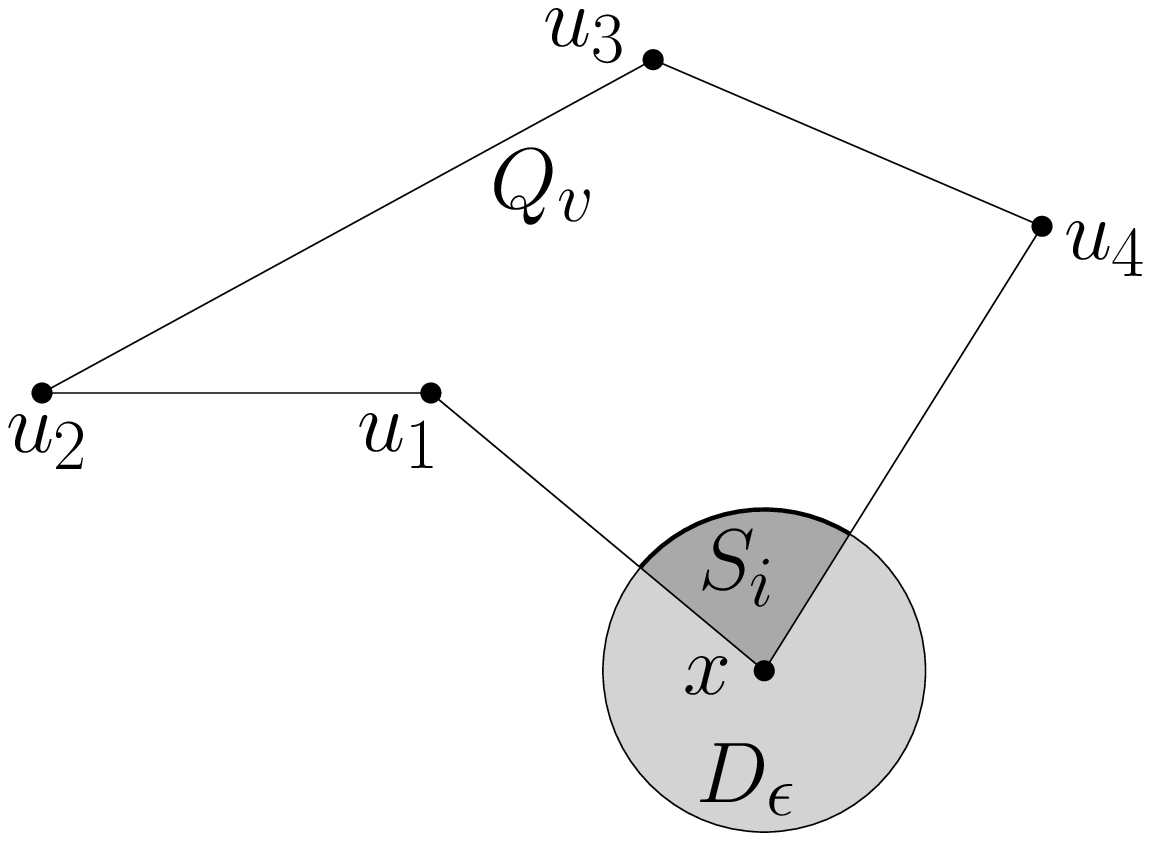}}\hspace{1cm}
    \subfigure[]{\includegraphics[scale=0.35]{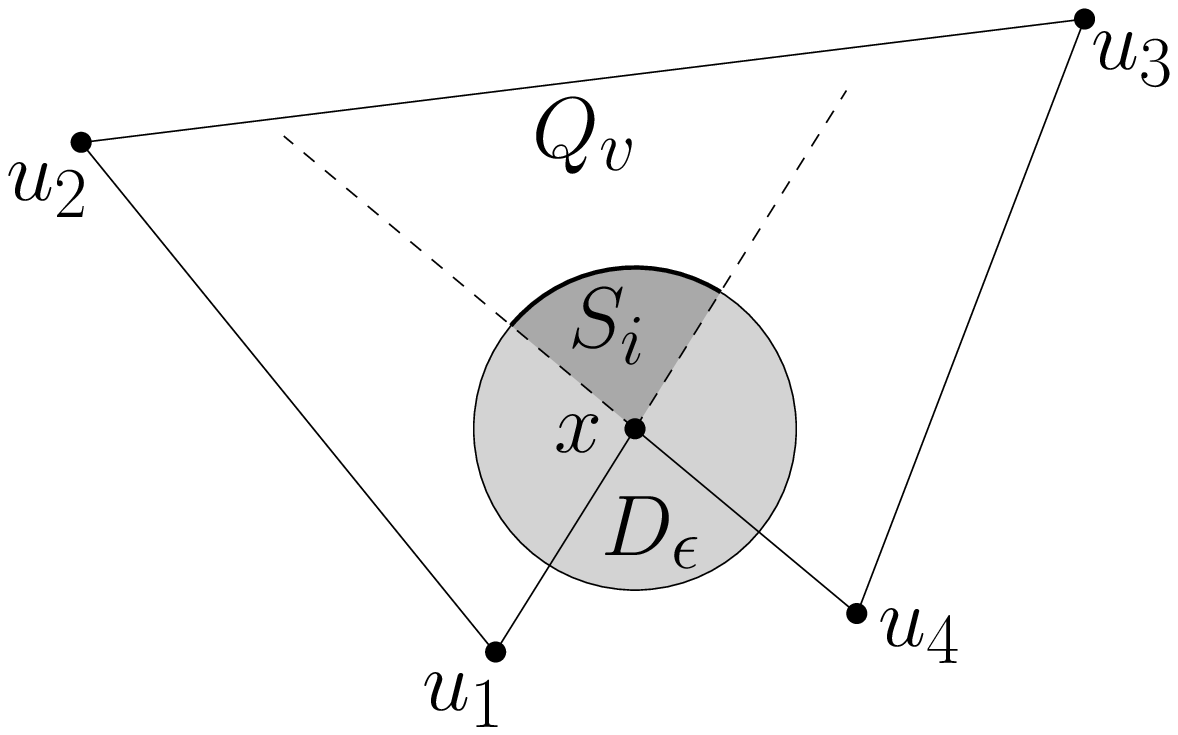}}
\caption{Circular sector $S_i$ if (a) the internal angle of $Q_v$ incident to $x$ is smaller than or equal to $\pi$ or (b) the internal angle of $Q_v$ incident to $x$ is larger than $\pi$.}
\label{fig:sector}
\end{figure}

For each $i=1,\dots,m-2$, let $S_i$ be the circular sector resulting from the intersection between a disk $D_{\epsilon}$ centered at $x$ with radius $\epsilon$ and the kernel of the polygon $Q_v$ induced by the neighbors of $v$ in $\Gamma_i$. In particular (see Fig.~\ref{fig:sector}), if the internal angle of $Q_v$ incident to $x$ is smaller than or equal to $\pi$, then $S_i$ is delimited by the two radii of $D_{\epsilon}$ that overlap with the two edges of $Q_v$ incident to $x$, while if such an angle is larger than $\pi$ then $S_i$ is delimited by the two radii of $D_{\epsilon}$ that overlap with the elongations emanating from $x$ of the two edges of $Q_v$ incident to $x$. Barrera-Cruz et al. prove in~\cite{bhl-mpgdum-13} that each circular sector $S_i$ contains at least one \emph{nice} point, defined as follows. All the points of $S_{m-2}$ are nice. For $i=1,\dots,m-3$, a point $p_i$ of $S_i$ is nice if there exists a nice point $p_{i+1}$ in $S_{i+1}$ such that the line passing through $p_{i}$ and $p_{i+1}$ is parallel to the trajectory followed by each vertex during the unidirectional morphing step transforming $\Gamma'_i$ into $\Gamma'_{i+1}$. The proof in~\cite{bhl-mpgdum-13} is completed by showing that placing $v$ on the nice point $p_{i}$ in $\Gamma'_i$ and on the corresponding nice point $p_{i+1}$ in $\Gamma'_{i+1}$ yields two drawings $\Gamma_{i}$ and $\Gamma_{i+1}$ of $G$ such that \mmorph{\Gamma_{i},\Gamma_{i+1}} is planar and, by construction, unidirectional.

In order to adapt this algorithm to our setting, it is sufficient to describe how to compute each circular sector $S_i$, since the rest of the proof works exactly as described in~\cite{bhl-mpgdum-13} for the case in which $\textrm{deg}(v) = 5$. The complication here is that the neighbors of $v$ might not create a polygon $Q_v$ enclosing $v$ in its interior, hence it is not possible to use the concept of ``kernel of a polygon'' in order to define $S_i$. To overcome this problem, we use the concept of ``kernel of a vertex'' $v$, defined as the region of the plane such that each of its points has direct visibility to all the neighbors of $v$. Observe that this is the same property satisfied by the kernel of $Q_v$, however the kernel of $v$ is well-defined even if the neighbors of $v$ do not induce a polygon enclosing $v$ in its interior, e.g., if $v$ is incident to the outer face or $\textrm{deg}(v)\le 2$.

More in detail, if $\textrm{deg}(v) = 1$, then $S_i$ is the intersection of $D_{\epsilon}$ with the region of $\Gamma'_i$ representing the face of $G'$ that contains $v$ in $G$. If $\textrm{deg}(v) = 2$, then $S_i$ is the intersection of: (i) $D_{\epsilon}$, (ii) the region of $\Gamma'_i$ representing the face of $G'$ that contains $v$ in $G$, and (iii) the half-plane that is to the left (right) of the oriented straight line from a neighbor $w$ of $v$ to the other neighbor $z$ of $v$ if $w$, $z$, and $v$ appear in this counter-clockwise (resp. clockwise) order along cycle $(w,z,v)$ in $G$. Finally, if $3 \le \textrm{deg}(v) \le 5$, then let $w$ and $z$ be the two neighbors of $x$ in $G$ such that edges $(x,w)$, $(x,v)$, and $(x,z)$ appear consecutively around $x$ in this clockwise order; then, if the angle spanned when rotating $(x,w)$ clockwise till coinciding with $(x,z)$ is smaller than or equal to $\pi$, then $S_i$ is delimited by the two radii of $D_{\epsilon}$ that overlap with edges $(x,w)$ and $(x,z)$, otherwise $S_i$ is delimited by the two radii of $D_{\epsilon}$ that overlap with the elongations of edges $(x,w)$ and $(x,z)$ emanating from $x$. We observe that an analogous definition of circular sectors $S_i$ was provided in~\cite{afpr-mpgde-13} (although the morphs constructed in~\cite{afpr-mpgde-13} are not unidirectional).

We conclude the proof by observing that the first and the last morphing steps \mmorph{\Gamma_s, \Gamma_1} and \mmorph{\Gamma_{m-2},\Gamma_t} of $M$ are planar, since $v$ has been placed on a nice point in $\Gamma_1$ and in $\Gamma_{m-2}$, and unidirectional, since $v$ is the only vertex moving during these two steps.
\end{proof}

\remove{
The basic ideas for a proof of Lemma~\ref{le:pseudo-to-morph}, which appear already with a different flavour in~\cite{aac-mpgdpns-13,afpr-mpgde-13}, are: (1) contractions and uncontractions can be simulated by single unidirectional morphing steps; and (2) consider a planar unidirectional morph $M'$ of the $(n-1)$-vertex plane graph $G'$ obtained from $G$ by contracting a quasi-contractible vertex $v$ onto a neighbor $x$; then, it is possible to design a suitable trajectory for $v$ so that $M'$ together with this trajectory provides a unidirectional morph $M$ of $\Gamma_s$ into $\Gamma_t$. See~\cite{aac-mpgdpns-13,afpr-mpgde-13,bhl-mpgdum-13} for further details.
}

\subsection{Hierarchical Graphs and Level Planarity} \label{subse:hierarchical-level}

A {\em hierarchical graph} is a tuple $(G,\vec d,L,\gamma)$ where: (i) $G$ is a graph; (ii) $\vec d$ is an oriented straight line in the plane; (iii) $L$ is a set of parallel lines (sometimes called {\em layers}) that are orthogonal to $\vec d$; the lines in $L$ are assumed to be ordered in the same order as they are intersected by $\vec d$ when traversing such a line according to its orientation; and (iv) $\gamma$ is a function that maps each vertex of $G$ to a line in $L$ in such a way that, if an edge $(u,v)$ belongs to $G$, then $\gamma(u)\neq \gamma(v)$. A {\em level drawing} of $(G,\vec d,L,\gamma)$ (sometimes also called {\em hierarchical drawing}) maps each vertex $v$ of $G$ to a point on the line $\gamma(v)$ and each edge $(u,v)$ of $G$ such that line $\gamma(u)$ precedes line $\gamma(v)$ in $L$ to an arc $\vec{uv}$ monotone with respect to $\vec d$. A {\em hierarchical plane graph} is a hierarchical graph $(G,\vec d,L,\gamma)$ such that $G$ is a plane graph and such that a level planar drawing $\Gamma$ of $(G,\vec d,L,\gamma)$ exists that ``respects'' the embedding of $G$ (that is, the rotation system and the outer face of $G$ in $\Gamma$ are the same as in the plane embedding of $G$). Given a hierarchical plane graph $(G,\vec d,L,\gamma)$, an {\em st-face} of $G$ is a face delimited by two paths $(s=u_1,u_2,\dots,u_k=t)$ and $(s=v_1,v_2,\dots,v_l=t)$ such that $\gamma(u_i)$ precedes $\gamma(u_{i+1})$ in $L$, for every $1\leq i\leq k-1$, and such that $\gamma(v_i)$ precedes $\gamma(v_{i+1})$ in $L$, for every $1\leq i\leq l-1$.  We say that $(G,\vec d,L,\gamma)$ is a {\em hierarchical plane st-graph} if every face of $G$ is an st-face. Let $\Gamma$ be any straight-line level planar drawing of a hierarchical plane graph $(G,\vec d,L,\gamma)$ and let $f$ be a face of $G$; then, it is easy to argue that $f$ is an st-face if and only if the polygon delimiting $f$ in $\Gamma$ is $\vec d$-monotone.

In this paper we will use a result of Hong and Nagamochi on the existence of convex straight-line level planar drawings of hierarchical plane st-graphs~\cite{hn-cdhpgcpg-10}. Here we explicitly formulate a weaker version of their main theorem.\footnote{We make some remarks. First, the main result in~\cite{hn-cdhpgcpg-10} proves that a convex straight-line level planar drawing of $(G,\vec d,L,\gamma)$ exists even if a convex polygon representing the cycle delimiting the outer face of $G$ is arbitrarily prescribed. Second, the result holds for a super-class of the triconnected planar graphs, namely for all the graphs that admit a convex straight-line drawing~\cite{cyn-lacdp-84,t-prg-84}. Third, the result assumes that the lines in $L$ are horizontal; however, a suitable rotation of the coordinate axes shows how that assumption is not necessary. Fourth, looking at the figures in~\cite{hn-cdhpgcpg-10} one might get the impression that the lines in $L$ need to be equidistant; however, this is nowhere used in their proof, hence the result holds for any set of parallel lines.}

\begin{theorem} \label{th:hong-nagamochi} (Hong and Nagamochi~\cite{hn-cdhpgcpg-10})
Let $(G,\vec d,L,\gamma)$ be a triconnected hierarchical plane st-graph. There exists a convex straight-line level planar drawing of $(G,\vec d,L,\gamma)$.
\end{theorem}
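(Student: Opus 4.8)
The plan is to build the drawing as a Tutte-style barycentric embedding in which the level assignment pins down one coordinate of every vertex. Orient the plane so that $\vec d$ points upward; then $\gamma$ fixes the coordinate along $\vec d$ (the ``height'') of each vertex, and, since the two endpoints of any edge lie on distinct lines of $L$ and these lines are orthogonal to $\vec d$, every straight-line drawing that places each vertex on its prescribed line automatically draws every edge as a $\vec d$-monotone segment, i.e.\ is already a level drawing. So only the coordinate orthogonal to $\vec d$ (the ``abscissa'') of each vertex is free, and the task reduces to choosing these abscissas so that the drawing is planar, respects the given embedding, and has every face — internal and outer — bounded by a convex polygon.

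I would proceed in two steps. First, fix the outer boundary. Being a face of a hierarchical plane st-graph, the outer cycle is an st-face and hence consists of two $\vec d$-monotone paths from a common lowest vertex $s$ to a common highest vertex $t$; along each such path the levels strictly increase, so the heights of the boundary vertices are already pairwise distinct on each path, and it is elementary to choose their abscissas so that one path becomes a strictly convex chain bulging to one side and the other a strictly convex chain bulging to the other side, giving a strictly convex outer polygon consistent with $L$. Second, for every internal vertex $v$ impose the harmonic equation $x(v)=\sum_{u}\lambda_{v,u}\,x(u)$, the sum over the neighbors $u$ of $v$, with fixed positive weights $\lambda_{v,u}$ summing to $1$. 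The coefficient matrix is weakly diagonally dominant, with strict dominance in every row whose vertex has a boundary neighbour; since every internal vertex has a path to the boundary, the matrix is irreducibly diagonally dominant, hence nonsingular, and the abscissas are uniquely determined, giving a straight-line drawing $\Gamma$.

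The remaining, and I expect hardest, step is to prove that $\Gamma$ is planar and convex. Tutte's classical argument yields this for a triconnected graph with convex outer boundary when \emph{both} coordinates are harmonic, whereas here the height is not harmonic but instead increases monotonically along the two chains of every face; it is not a priori obvious that these two facts combine to exclude the degeneracies (an internal vertex all of whose neighbors lie weakly to one side of a line through it; a face collapsing onto a segment) that Tutte's proof rules out. The adaptation I would pursue exploits the defining feature of an st-graph, namely that the edges entering an internal vertex $v$ and those leaving it each form a contiguous arc of its rotation system, so that $v$ always has neighbors strictly below and strictly above it; this, together with $x(v)$ being a strict convex combination of the abscissas of its neighbors — and, if necessary, a non-uniform choice of the weights $\lambda_{v,u}$ as a function of the layers so as to temper the pinned coordinate — should supply the non-degeneracy needed to run Tutte's (or Thomassen's) scheme to the conclusion that $\Gamma$ is planar with all faces convex. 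Forbidding three consecutive collinear edges at a vertex (so that internal angles are $<\pi$, as ``convex'' requires) can be arranged by a generic perturbation, and convexity of the outer polygon holds by construction. A fallback, if the harmonic analysis proves too delicate, is induction: contract an internal vertex of degree at least $3$ (or a non-boundary edge) to obtain a smaller triconnected hierarchical plane st-graph, draw it convexly by the inductive hypothesis, and re-insert the vertex — the difficulty there being to restore the vertex exactly on its layer while keeping every incident face convex, which is precisely why I would attempt the harmonic route first.
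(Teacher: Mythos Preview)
This theorem is stated in the paper as an external result of Hong and Nagamochi and is used as a black box inside the {\sc fast convexifier} subroutine; the paper supplies no proof of it (only a footnote with remarks on its scope). There is consequently no ``paper's own proof'' against which to compare your attempt.

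On the substance of your sketch: the idea of pinning the $\vec d$-coordinate via $\gamma$ and determining the orthogonal coordinate by a one-dimensional harmonic system is natural, and you are candid that the crux --- carrying Tutte's planarity/convexity argument through when only one coordinate is barycentric --- is not done. That is indeed the real gap: Tutte's proof uses that an internal vertex lies in the relative interior of the convex hull of its neighbors, which requires \emph{both} coordinates to be strict convex combinations; here the pinned height of an internal vertex need not be any convex combination of its neighbors' heights, so the standard ``no spiderweb collapses'' argument does not transfer, and one-coordinate harmonicity is known to yield straight-line level \emph{planarity} but not, without further work, convexity of every face. Your appeal to the st-structure (neighbors strictly above and strictly below each internal vertex) is the right extra ingredient, but turning it into a proof is exactly the content of the Hong--Nagamochi paper. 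For orientation, their argument is not barycentric at all: it extends the inductive convex-drawing framework of Thomassen and of Chiba--Yamanouchi--Nishizeki to the hierarchical setting, which is closer to your ``fallback'' route than to your main line.
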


Let $\Gamma$ be a straight-line level planar drawing of a hierarchical plane graph $(G,\vec d,L,\gamma)$. Since each edge $(u,v)$ of $G$ is represented in $\Gamma$ by a $\vec d$-monotone arc, the fact that $(u,v)$ intersects a line $l_i\in L$ does not depend on the actual drawing $\Gamma$, but only on the fact that $l_i$ lies between lines $\gamma(u)$ and $\gamma(v)$ in $L$. Assume that each line $l_i\in L$ is oriented so that $\vec d$ cuts $l_i$ from the right to the left of $l_i$. We say that an edge $e$ {\em precedes} ({\em follows}) a vertex $v$ on a line $l_i$ in $\Gamma$ if $\gamma(v)=l_i$, $e$ intersects $l_i$ in a point $p_i(e)$, and $p_i(e)$ precedes (resp. follows) $v$ on $l_i$ when traversing such a line according to its orientation. Also, we say that an edge $e$ {\em precedes} ({\em follows}) an edge $e'$ on a line $l_i$ in $\Gamma$ if $e$ and $e'$ both intersect $l_i$ at points $p_i(e)$ and $p_i(e')$, and $p_i(e)$ precedes (resp. follows) $p_i(e')$ on $l_i$ when traversing such a line according to its orientation.

Now consider two straight-line level planar drawings $\Gamma_1$ and $\Gamma_2$ of a hierarchical plane graph $(G,\vec d,L,\gamma)$. We say that $\Gamma_1$ and $\Gamma_2$ are {\em left-to-right equivalent} if, for any line $l_i\in L$, for any vertex or edge $x$ of $G$, and for any vertex or edge $y$ of $G$, we have that $x$ precedes (follows) $y$ on $l_i$ in $\Gamma_1$ if and only if $x$ precedes (resp. follows) $y$ on $l_i$ in $\Gamma_2$. We are going to make use of the following lemma.

\begin{lemma} \label{le:unidirectional}
Let $\Gamma_1$ and $\Gamma_2$ be two left-to-right equivalent straight-line level planar drawings of the same hierarchical plane graph $(G,\vec d,L,\gamma)$. Then the linear morph \mmorph{\Gamma_1,\Gamma_2} transforming $\Gamma_1$ into $\Gamma_2$ is planar and unidirectional.
\end{lemma}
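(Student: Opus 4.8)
The plan is to show two things about the linear morph $\morph{\Gamma_1,\Gamma_2}$: that it is unidirectional, and that it is planar at every time instant. Unidirectionality is the easy half. Since $\Gamma_1$ and $\Gamma_2$ are both level drawings of the same hierarchical plane graph $(G,\vec d,L,\gamma)$, each vertex $v$ lies on the line $\gamma(v)$ in both drawings; because $\gamma(v)$ is orthogonal to $\vec d$, the straight-line trajectory of $v$ during the linear morph is contained in $\gamma(v)$ and hence is parallel to every other vertex trajectory. So all trajectories are parallel to a common direction (orthogonal to $\vec d$), which is exactly the definition of a unidirectional morph from~\cite{bhl-mpgdum-13}.

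For planarity I would argue that every intermediate drawing $\Gamma(t)$, $t\in[0,1]$, is itself a straight-line level planar drawing of $(G,\vec d,L,\gamma)$ that is left-to-right equivalent to $\Gamma_1$ (and hence to $\Gamma_2$). First, each intermediate drawing is a valid level drawing: vertices stay on their prescribed lines (as noted above), and each edge $(u,v)$ with $\gamma(u)$ before $\gamma(v)$ stays $\vec d$-monotone throughout, because at time $t$ its endpoints have the same $\vec d$-coordinates as in $\Gamma_1$ and $\Gamma_2$, so the segment still has a positive projection on $\vec d$. Hence $\Gamma(t)$ has a well-defined left-to-right order of the crossings of edges and vertices with each line $l_i\in L$. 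The key claim is that this order is the same as in $\Gamma_1$. For a fixed line $l_i$, consider the positions along $l_i$ of (a) each vertex $v$ with $\gamma(v)=l_i$ and (b) each point $p_i(e)$ where an edge $e$ crosses $l_i$. A vertex position is a convex combination (interpolating with parameter $t$) of its positions in $\Gamma_1$ and $\Gamma_2$. The crossing point $p_i(e)$ of an edge $e=(u,v)$ with $l_i$ moves along $l_i$ as $u$ and $v$ move along their (parallel) lines, and one checks that, since $\gamma(u)$, $\gamma(v)$, $l_i$ are fixed parallel lines and the $\vec d$-coordinates of $u$ and $v$ are fixed, the coordinate of $p_i(e)$ along $l_i$ at time $t$ is also exactly the $t$-interpolation between its coordinates in $\Gamma_1$ and $\Gamma_2$. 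So every relevant point on $l_i$ moves monotonically (at constant speed) along $l_i$, from its $\Gamma_1$-position to its $\Gamma_2$-position. Since two such points are in the same relative order in $\Gamma_1$ and in $\Gamma_2$ (by left-to-right equivalence), and each moves at constant speed, they cannot swap order at any intermediate $t$; thus $\Gamma(t)$ is left-to-right equivalent to $\Gamma_1$.

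It remains to deduce planarity of $\Gamma(t)$ from the fact that it is a level drawing with the same left-to-right order on every line as the planar drawing $\Gamma_1$. Two edges $e$ and $e'$ can cross in a level drawing only if there is some line $l_i\in L$ cut by both of them on which their relative order differs from the order on some other common line $l_j$ — i.e. their crossing points swap between two layers — or an edge passes over a vertex on some line; but the cut pattern (which lines an edge crosses) is determined by $\gamma$ alone, and on every line the order agrees with $\Gamma_1$, which is crossing-free. Making this precise is the one genuinely technical point: I would phrase it as "for any two lines $l_i, l_j$ consecutive among those cut by both $e$ and $e'$, the strip between $l_i$ and $l_j$ contains no vertex of $e$ or $e'$ in its interior, so $e$ and $e'$ restricted to that strip are straight segments with endpoints on $l_i\cup l_j$; if they do not cross on $l_i$ and do not cross on $l_j$ in the same order, they do not cross inside the strip." Applying this strip-by-strip, together with the vertex-vs-edge orderings to rule out a vertex lying on an edge, shows $\Gamma(t)$ has no crossings and no overlaps. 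Hence every morphing step is planar, and $\morph{\Gamma_1,\Gamma_2}$ is a planar unidirectional morph.

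The main obstacle I anticipate is precisely this last deduction — rigorously reducing "no crossings" to "consistent left-to-right order on each line," i.e. the standard but slightly fiddly fact that a level drawing whose consecutive-layer orderings are all consistent with those of a planar level drawing is itself planar. The interpolation-is-monotone observations (for vertices trivially, for edge-line crossing points by a short computation using that $\vec d$-coordinates are frozen) are routine and I would present them briefly rather than in full detail.
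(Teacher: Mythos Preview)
Your argument is correct and shares its core with the paper's proof: both hinge on the observation that the intersection point $p_i(e)$ of an edge $e$ with a fixed line $l_i$ moves at constant speed along $l_i$ (because the $\vec d$-coordinates of the endpoints are frozen, so the convex-combination coefficients defining $p_i(e)$ are time-independent), and then invoke the elementary fact (your ``points moving at constant speed in the same order at both endpoints cannot swap''; the paper's Lemma~\ref{le:auxiliary-unidirectional} from~\cite{bhl-mpgdum-13}) to conclude that relative orders on each line are preserved.

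Where you diverge is in the final step. You deduce planarity of each $\Gamma(t)$ by a strip-by-strip argument: between any two consecutive lines of $L$, both edges are straight segments whose endpoints on the two bounding lines are in the same order as in $\Gamma_1$, so they cannot cross inside the strip. This is perfectly valid but, as you note, slightly fiddly to state carefully (one has to handle endpoints landing on the bounding lines, etc.). The paper instead sidesteps the strip argument with a topological shortcut: it only proves directly that no \emph{vertex} ever lies on an \emph{edge} during the morph, and then observes that an edge--edge crossing during a continuous morph that starts planar must be preceded in time by a vertex--edge overlap (the first moment two edges touch, an endpoint of one meets the other). Since vertex--edge overlaps are already excluded, edge--edge crossings are excluded for free. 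This buys a shorter write-up; your route buys a static, instant-by-instant picture of why $\Gamma(t)$ is planar, which some readers may find more transparent.
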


In order to prove Lemma~\ref{le:unidirectional}, we first recall an auxiliary lemma appeared in~\cite{bhl-mpgdum-13} stating that if two points $x$ and $y$ move at constant speed on the same line $l$ and $x$ precedes (follows) $y$ on $l$ both at the beginning and at the end of the movement, then $x$ precedes (follows) $y$ on $l$ during the whole movement.

\begin{lemma} \label{le:auxiliary-unidirectional}(Barrera-Cruz et al.~\cite{bhl-mpgdum-13})
Let $l$ be an oriented straight line and let $x_0$, $x_1$, $y_0$, and $y_1$ be points on $l$. Assume that $x_i$ precedes $y_i$ on $l$, for $i = 0,1$. Consider a point $x$ that moves in one unit of time from $x_0$ to $x_1$, and a point $y$ that moves in one unit of time from $y_0$ to $y_1$. Then, $x$ precedes $y$ on $l$ during the entire movement.
\end{lemma}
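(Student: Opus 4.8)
The plan is to reduce this two-dimensional statement to a one-dimensional comparison and then exploit the fact that a point moving at constant speed along a line has a position that is an affine function of time. Since $x_0,x_1,y_0,y_1$ and the moving points $x$ and $y$ all lie on the single oriented line $l$, I would first fix an orientation-preserving coordinate $s\colon l\to\mathbb{R}$, i.e., a coordinate such that a point $p$ precedes a point $q$ on $l$ if and only if $s(p)<s(q)$. Writing $a_i=s(x_i)$ and $b_i=s(y_i)$ for $i=0,1$, the hypothesis that $x_i$ precedes $y_i$ on $l$ becomes simply $a_i<b_i$.

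Next I would express the coordinates of the moving points as functions of time. Because $x$ travels from $x_0$ to $x_1$ at constant speed in one unit of time, its coordinate at time $t\in[0,1]$ is the affine interpolation $a(t)=(1-t)\,a_0+t\,a_1$, and likewise $y$ has coordinate $b(t)=(1-t)\,b_0+t\,b_1$. The assertion ``$x$ precedes $y$ on $l$ at time $t$'' is then equivalent to $a(t)<b(t)$, so it suffices to show that the difference $d(t)=b(t)-a(t)$ is strictly positive throughout $[0,1]$.

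Finally, I would observe that $d(t)=(1-t)\,(b_0-a_0)+t\,(b_1-a_1)$ is a convex combination of the two quantities $b_0-a_0$ and $b_1-a_1$, both of which are strictly positive by hypothesis. Since $(1-t)\ge 0$, $t\ge 0$, and $(1-t)+t=1$ for every $t\in[0,1]$, the value $d(t)$ is a convex combination of positive numbers and is therefore itself strictly positive; equivalently, $d(t)$ is affine in $t$ and positive at the endpoints $t=0$ and $t=1$, hence positive on the entire interval. Thus $a(t)<b(t)$, i.e., $x$ precedes $y$ on $l$, for every $t\in[0,1]$, as required.

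There is essentially no hard step here: the only point that needs any care is the initial reduction, namely the observation that both points move along the \emph{same} line $l$, so that their motions are genuinely one-dimensional and the relation ``precedes'' can be faithfully encoded by a single scalar coordinate. Once this is noted, the conclusion follows immediately from the linearity of constant-speed motion, and the strictness of the inequality is preserved because a nonnegative combination of strictly positive numbers with weights summing to one is strictly positive.
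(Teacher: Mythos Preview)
Your proof is correct. Note, however, that the paper does not actually give its own proof of this lemma: it is quoted as a result of Barrera-Cruz et al.\ and used as a black box. Your argument---parametrize $l$ by a real coordinate, write each moving point as an affine function of $t$, and observe that $b(t)-a(t)$ is a convex combination of the positive quantities $b_0-a_0$ and $b_1-a_1$---is exactly the standard one-line justification, and nothing more is needed.
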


We now exhibit a proof of Lemma~\ref{le:unidirectional}.

\noindent\begin{proofx}
Morph \mmorph{\Gamma_1,\Gamma_2} is clearly unidirectional. We prove that it is planar.

Lemma~\ref{le:auxiliary-unidirectional} and the fact that $\Gamma_1$ and $\Gamma_2$ are left-to-right equivalent directly imply that, if two vertices lie on the same line $l \in L$, then they never overlap during \mmorph{\Gamma_1,\Gamma_2}.

We prove that there exists no overlap between a vertex $u$ and an edge $e$ of $G$ during \mmorph{\Gamma_1,\Gamma_2}. Such a proof also implies that there is no crossing between two edges at any time $t$ during \mmorph{\Gamma_1,\Gamma_2}; in fact, such a crossing can only happen if an end-vertex of one of the two edges overlaps the other edge at a time instant $t' \leq t$.

In order to prove that there exists no overlap between $u$ and $e$, it suffices to prove that the point $p_i(e)$ in which $e$ intersects line $l_i = \gamma(u)$ moves at constant speed during \mmorph{\Gamma_1,\Gamma_2}, since in this case Lemma~\ref{le:auxiliary-unidirectional} and the fact that $\Gamma_1$ and $\Gamma_2$ are left-to-right equivalent imply that $u$ and $p_i(e)$ never overlap.

The fact that $p_i(e)$ moves at constant speed during \mmorph{\Gamma_1,\Gamma_2} directly follows from: (i) the two end-vertices $v$ and $w$ of $e$ move at constant speed on two lines $\gamma(v)$ and $\gamma(w)$ that are parallel to $l_i$; and (ii) for any time instant $t$ of \mmorph{\Gamma_1,\Gamma_2}, the coefficients that express $p_i(e)$ as a convex combination of the positions of $v$ and $w$ are the same.


This concludes the proof of the lemma.
\end{proofx}


\section{A Morphing Algorithm} \label{se:algorithm}

In this section we describe an algorithm to construct a planar unidirectional morph with $O(n)$ steps between any two straight-line planar drawings $\Gamma_s$ and $\Gamma_t$ of the same $n$-vertex plane graph $G$. The algorithm relies on two subroutines, called {\sc fast convexifier} and {\sc contractibility creator}, which are described in Sections~\ref{subse:convexifier} and~\ref{subse:contractibility}, respectively. The algorithm is described in Section~\ref{subse:algorithm}.

\subsection{Fast Convexifier} \label{subse:convexifier}

Consider a straight-line planar drawing $\Gamma$ of an $n$-vertex maximal plane graph $G$, for some $n\geq 4$. Let $v$ be a quasi-contractible internal vertex of $G$ and let $C_v$ be the cycle of $G$ induced by the neighbors of $v$. See Fig.~\ref{fig:convexifier-process}(a). In this section we show an algorithm, that we call {\sc fast convexifier}, morphing $\Gamma$ into a straight-line planar drawing $\Gamma_M$ of $G$ in which $C_v$ is convex. Algorithm {\sc fast convexifier} consists of a single unidirectional morphing step.

\begin{figure}[htb]
\begin{center}
\begin{tabular}{c c c}
\mbox{\includegraphics[scale=0.278]{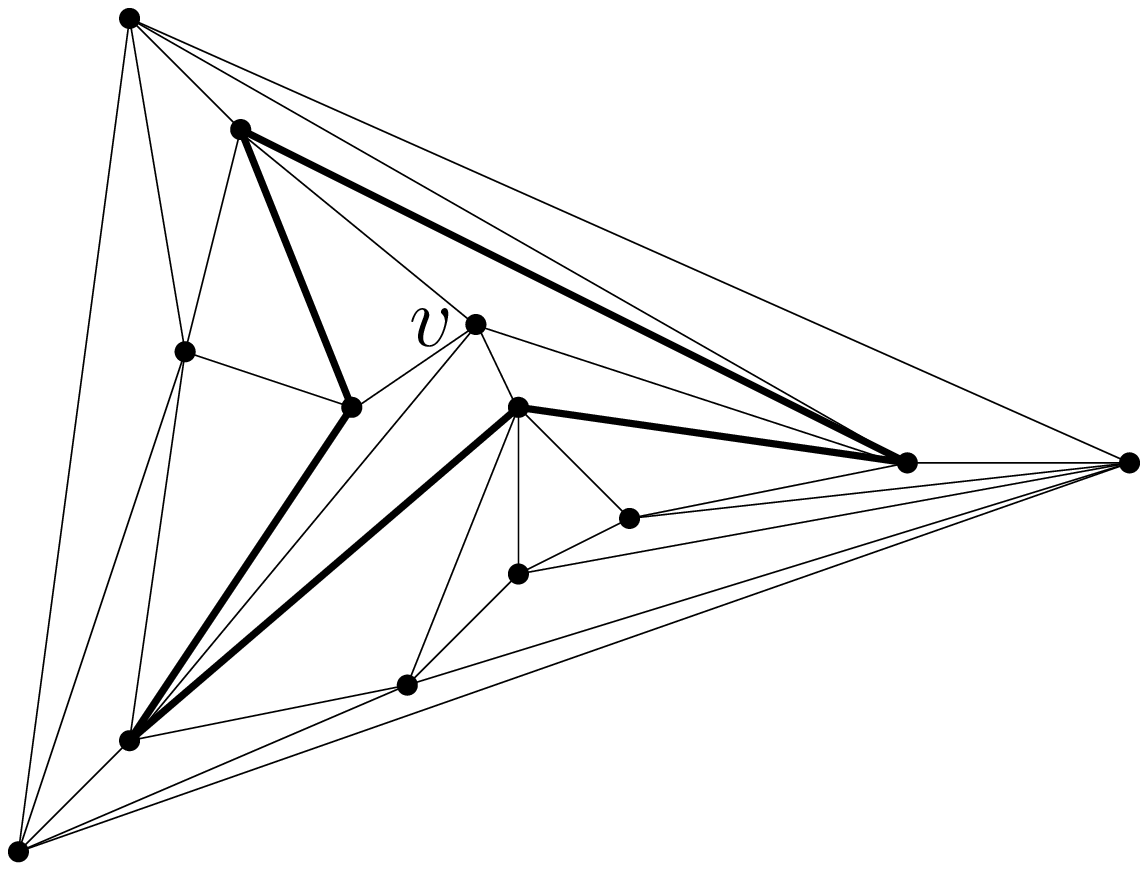}} \hspace{1mm} &
\mbox{\includegraphics[scale=0.278]{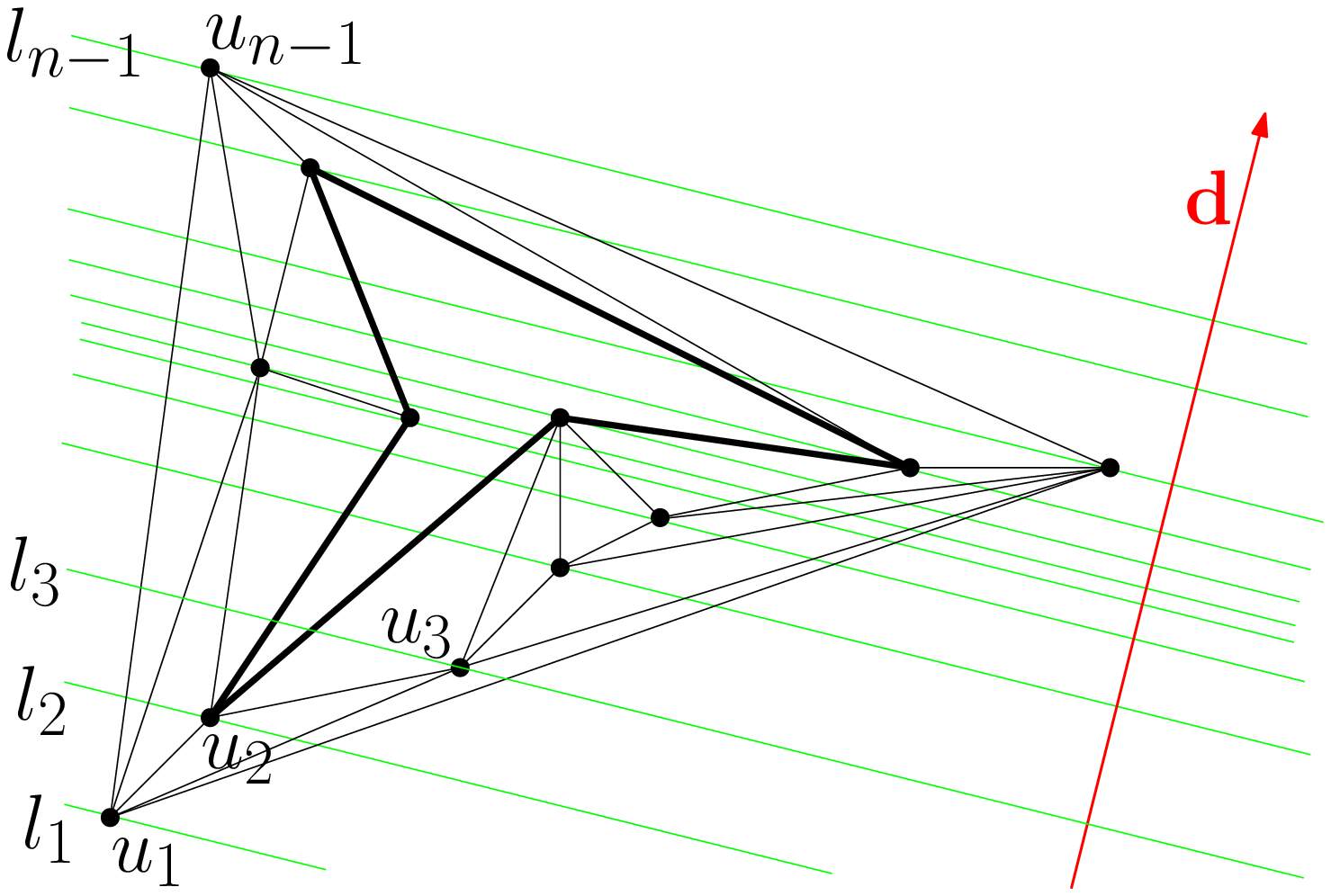}} \hspace{1mm} &
\mbox{\includegraphics[scale=0.278]{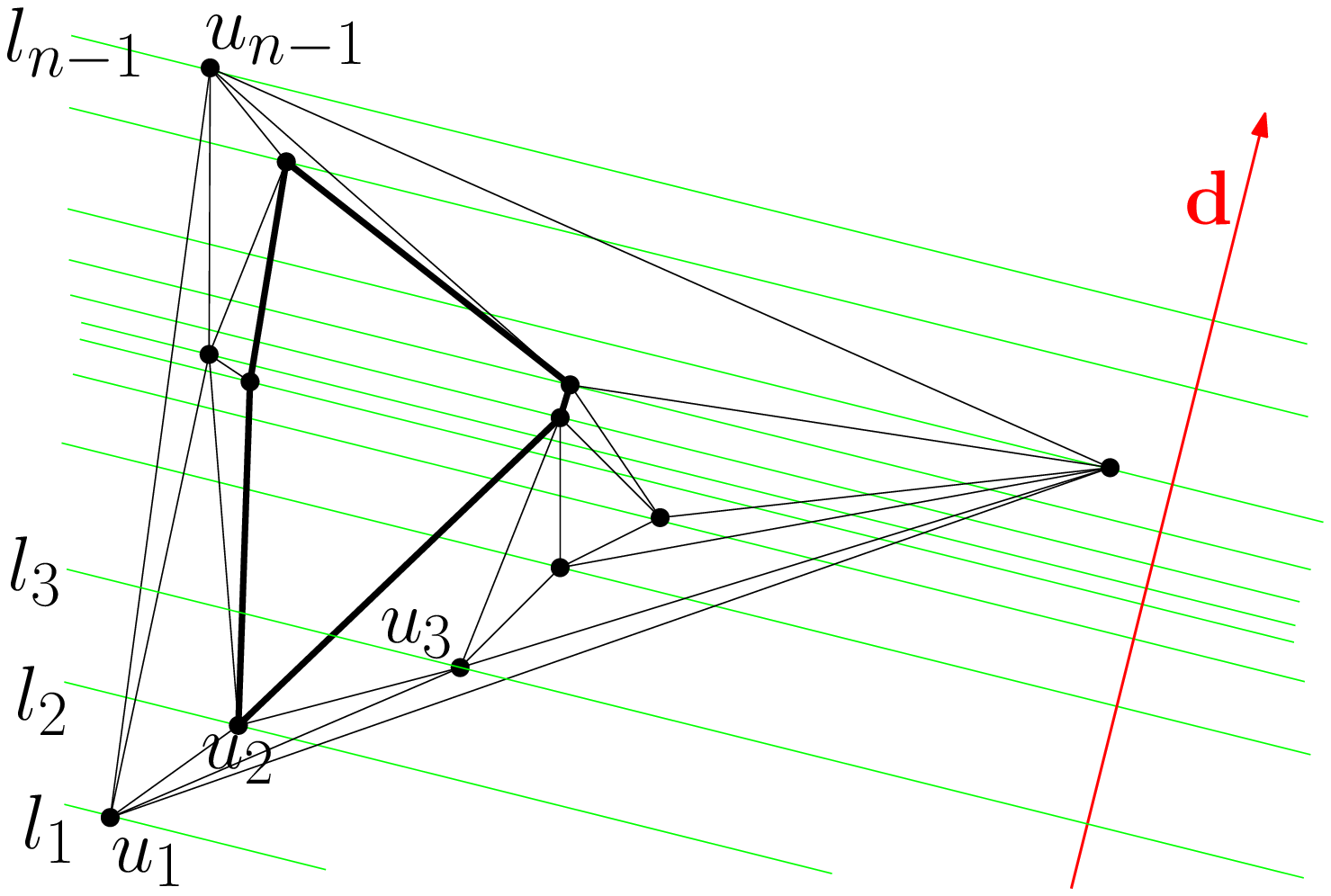}} \\
(a) \hspace{1mm} & (b) \hspace{1mm} & (c)
\end{tabular}
\caption{(a) Straight-line planar drawing $\Gamma$ of $G$. (b) Straight-line level planar drawing $\Gamma'$ of $(G',\vec d,L',\gamma')$. (c) Convex straight-line level planar drawing $\Gamma'_M$ of $(G',\vec d,L',\gamma')$.}
\label{fig:convexifier-process}
\end{center}
\end{figure}

Let $G'$ be the $(n-1)$-vertex plane graph obtained by removing $v$ and its incident edges from $G$. Also, let $\Gamma'$ be the straight-line planar drawing of $G'$ obtained by removing $v$ and its incident edges from $\Gamma$. We have the following lemma.

\begin{lemma} \label{le:triconnected-after-removal}
Graph $G'$ is $3$-connected.
\end{lemma}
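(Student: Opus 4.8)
The plan is to show that $G'-S$ is connected for every $S\subseteq V(G')$ with $|S|\leq 2$; since $G'$ has $n-1\geq 3$ vertices, this is precisely the statement that $G'$ is $3$-connected. Assume for a contradiction that some such $S$ disconnects $G'$, and let $A\sqcup B=V(G')\setminus S$ be a partition into two nonempty parts with no edge of $G'$ joining them. I would first record the one place where quasi-contractibility of $v$ is used: since for any two adjacent neighbors $u,w$ of $v$ the triangle $(u,v,w)$ is a face of $G$, two neighbors of $v$ are adjacent in $G$ if and only if they are consecutive along $C_v$; that is, $C_v$ is a chordless (induced) cycle of $G$. Also, $|V(C_v)|=\deg(v)\geq 3>|S|$, so $C_v$ has a vertex outside $S$.

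First I would dispose of the case in which the vertices of $C_v$ outside $S$ all lie on one side of the partition, say all in $A$. (This certainly happens whenever $C_v-S$ is connected as a subgraph of the cycle $C_v$, which is the case unless $S$ is a pair of vertices of $C_v$ that are not consecutive along $C_v$, since deleting zero vertices, one vertex, or two consecutive vertices from a cycle leaves a path or the whole cycle.) In this case, the neighbors of $v$ in $G-S$ are exactly the vertices of $C_v$ outside $S$, hence all in $A$, so $A\cup\{v\}$ and $B$ form a disconnection of $G-S$. Thus $S$ is a separating set of $G$ of size at most $2$, contradicting the classical fact that a maximal plane graph with at least $4$ vertices is $3$-connected.

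It remains to handle $S=\{c_i,c_j\}$ where $c_i,c_j$ are two non-consecutive vertices of $C_v$ (so $\deg(v)\geq 4$) and, by the previous paragraph, some vertex of $C_v$ outside $S$ lies in $A$ and some lies in $B$. Consider $S'=S\cup\{v\}$; then $|S'|=3$ and $G-S'=G'-S=A\sqcup B$ is disconnected, so $S'$ is a separating $3$-set of the maximal plane graph $G$. By the fact recalled in Section~\ref{se:preliminaries}, $S'$ induces a separating $3$-cycle of $G$; in particular $c_i$ and $c_j$ are adjacent in $G$. But $c_i$ and $c_j$ are non-consecutive neighbors of $v$ and $C_v$ is chordless, so $c_ic_j\notin E(G)$ --- a contradiction. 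Hence no such $S$ exists, and $G'$ is $3$-connected.

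I expect this last case to be the only genuine obstacle: a separator of $G'$ of size at most $2$ need not remain a separator of $G$, because $v$ may be adjacent to neighbors lying on both sides of the separator and thereby reconnect the two parts in $G$. The resolution is that, precisely in that situation, one can instead \emph{add} $v$ to the separator to obtain a separating $3$-set of $G$, at which point the chordlessness of $C_v$ (a consequence of quasi-contractibility) contradicts the structural property that every separating $3$-set of a maximal plane graph induces a $3$-cycle.
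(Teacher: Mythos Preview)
Your proof is correct and follows essentially the same approach as the paper. The paper's argument is slightly more streamlined: it immediately sets $S=S'\cup\{v\}$ (so $G-S=G'-S'$ is disconnected), handles $|S|\le 2$ by $3$-connectivity of $G$, and for $|S|=3$ uses that a separating $3$-set in a maximal plane graph induces a separating $3$-cycle through $v$, which contradicts quasi-contractibility. You arrive at the same separating $3$-cycle, but only after first disposing of the case where $v$ can be placed on one side of the cut; this extra case split is harmless but unnecessary, since adding $v$ to the separator works uniformly.
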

\begin{proof}
Suppose, for a contradiction, that $G'$ contains a set $S'$ of vertices with $|S'|\leq 2$ whose removal disconnects $G'$. It follows that removing the vertices in $S=S'\cup\{v\}$ from $G$ disconnects $G$. If $|S|=1$ or $|S|=2$, then $G$ contains a separation $1$-set or $2$-set, respectively, in both cases contradicting the fact that $G$ is a maximal plane graph. If $|S|=3$, then $S$ is a separating $3$-set. However, any separating $3$-set in a maximal plane graph induces a separating $3$-cycle $C$. Hence, $C$ contains at least one neighbor of $v$ in its interior and at least one neighbor of $v$ in its exterior. This contradicts the assumption that $v$ is a quasi-contractible vertex of $G$.
\end{proof}

Consider the polygon $Q_v$ representing $C_v$ in $\Gamma$ and in $\Gamma'$. By Lemma~\ref{le:monotone-in-one-direction}, $Q_v$ is $\vec d$-monotone, for some oriented straight line $\vec d$. Slightly perturb the slope of $\vec d$ so that no line through two vertices of $G$ in $\Gamma$ is perpendicular to $\vec d$. If the perturbation is small enough, then $Q_v$ is still $\vec d$-monotone.
Denote by $u_1,\dots,u_{n-1}$ the vertices of $G'$ ordered according to their projection on $\vec d$. For $1\leq i\leq n-1$, denote by $l_i$ the line through $u_i$ orthogonal to $\vec d$. Let $L'=\{l_1,\dots,l_{n-1}\}$; note that the lines in $L'$ are parallel and distinct. Let $\gamma'$ be the function that maps $u_i$ to $l_i$, for $1\leq i\leq n-1$. See Fig.~\ref{fig:convexifier-process}(b).

\begin{lemma} \label{le:orientation}
$(G',\vec d,L',\gamma')$ is a hierarchical plane st-graph.
\end{lemma}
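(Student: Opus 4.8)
The plan is to verify the four defining conditions of a hierarchical plane st-graph for $(G',\vec d,L',\gamma')$: that it is a hierarchical graph (well-defined $\gamma'$), that $G'$ is a plane graph, that it admits a level planar drawing respecting its embedding, and that every face is an st-face. The first condition is essentially immediate from the construction, and the fact that $G'$ is a plane graph was established when $G'$ was defined. So the real content is producing a level planar drawing respecting the embedding of $G'$ and showing every face is an st-face.

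For the level planar drawing respecting the embedding, the natural candidate is $\Gamma'$ itself. First I would observe that each line $l_i \in L'$ passes through exactly one vertex $u_i$ by construction (the perturbation of $\vec d$ guarantees the projections of the $n-1$ vertices onto $\vec d$ are pairwise distinct, so the orthogonal lines $l_i$ are distinct). So $\Gamma'$ maps $u_i$ onto $\gamma'(u_i) = l_i$ as required. Next I need each edge $(u_i,u_j)$ of $G'$ to be drawn as a $\vec d$-monotone arc — but in $\Gamma'$ edges are straight segments, so this amounts to saying no edge of $G'$ is perpendicular to $\vec d$, which is exactly what the slope-perturbation step guarantees (a straight segment has a positive projection on $\vec d$ as soon as it is not orthogonal to $\vec d$, after orienting it from the lower to the higher endpoint). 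Since $\Gamma'$ is by definition a straight-line planar drawing with the embedding of $G'$, it respects the embedding; hence $(G',\vec d,L',\gamma')$ is a hierarchical plane graph.

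It remains to show every face of $G'$ is an st-face. By the remark in the excerpt (just before Theorem~\ref{th:hong-nagamochi}), for a straight-line level planar drawing a face is an st-face if and only if the polygon delimiting it is $\vec d$-monotone. So I would argue that every face polygon of $\Gamma'$ is $\vec d$-monotone. Here is where I expect the main obstacle, and where the hypotheses on $v$ get used. The faces of $G'$ are precisely the faces of $G$, except that the faces of $G$ incident to $v$ — which are the triangles formed by $v$ and consecutive edges of $C_v$ — merge into a single face whose boundary is the polygon $Q_v$ representing $C_v$. Every other face of $G'$ is a face of the maximal plane graph $G$, hence a triangle, and a triangle is convex and therefore $\vec d$-monotone by Lemma~\ref{le:convex-is-monotone} (using that $\vec d$ is not perpendicular to any line through two vertices of $G$, which the perturbation ensures). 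For the one special face bounded by $Q_v$: $v$ is quasi-contractible and internal, so $C_v$ is a cycle with $\deg(v) \le 5$ vertices, i.e.\ $Q_v$ is a planar polygon with at most $5$ vertices, and Lemma~\ref{le:monotone-in-one-direction} tells us $Q_v$ is monotone with respect to \emph{some} oriented line — and we chose $\vec d$ to be exactly such a line, so $Q_v$ is $\vec d$-monotone. (The outer face is a triangle of $G$, again convex, again $\vec d$-monotone.) Thus every face of $G'$ is $\vec d$-monotone, hence an st-face, and $(G',\vec d,L',\gamma')$ is a hierarchical plane st-graph.

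The step I would flag as requiring care is the bookkeeping of which faces of $G'$ correspond to which faces of $G$, and in particular making sure the definition of ``st-face'' — a face delimited by two $\vec d$-increasing paths between common endpoints $s$ and $t$ — is met, rather than just $\vec d$-monotonicity of the boundary polygon; but the equivalence stated in the excerpt bridges exactly that gap once we know the boundary polygon is $\vec d$-monotone. Everything else (distinctness of the $l_i$, non-perpendicularity of edges to $\vec d$, $\Gamma'$ respecting the embedding) follows directly from the perturbation argument already set up before the lemma statement.
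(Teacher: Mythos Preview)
Your proposal is correct and follows essentially the same approach as the paper: show that $\Gamma'$ itself is a straight-line level planar drawing of $(G',\vec d,L',\gamma')$, then argue that every face polygon is $\vec d$-monotone (the triangular faces via Lemma~\ref{le:convex-is-monotone}, and $Q_v$ by the very choice of $\vec d$), hence every face is an st-face. The paper's proof is terser but makes exactly the same moves; your additional bookkeeping on the perturbation and the face correspondence between $G$ and $G'$ is fine and does not diverge from the intended argument.
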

\begin{proof}
By construction, $\Gamma'$ is a straight-line level planar drawing of $(G',\vec d,L',\gamma')$, hence $(G',\vec d,L',\gamma')$ is a hierarchical plane graph. Further, every polygon delimiting a face of $G'$ in $\Gamma'$ is $\vec d$-monotone. This is true for $Q_v$ by construction and for every other polygon $Q_i$ delimiting a face of $G'$ in $\Gamma'$ by Lemma~\ref{le:convex-is-monotone}, given that $Q_i$ is a triangle and hence it is convex. Since every polygon delimiting a face of $G'$ in $\Gamma'$ is $\vec d$-monotone, every face of $G'$ is an st-face, hence $(G',\vec d,L',\gamma')$ is a hierarchical plane st-graph.
\end{proof}

By Lemmata~\ref{le:triconnected-after-removal} and~\ref{le:orientation}, $(G',\vec d,L',\gamma')$ is a triconnected hierarchical plane st-graph. By Theorem~\ref{th:hong-nagamochi}, a convex straight-line level planar drawing $\Gamma'_M$ of $(G',\vec d,L',\gamma')$ exists. Denote by $Q^M_v$ the convex polygon representing $C_v$ in $\Gamma'_M$. See Fig.~\ref{fig:convexifier-process}(c).

Denote by $r$ and $s$ the minimum and the maximum index such that $u_r$ and $u_s$ belong to $C_v$, respectively. Denote by $l(v)$ the line through $v$ orthogonal to $\vec d$ in $\Gamma$. If $l(v)$ were contained in the half-plane delimited by $l_r$ and not containing $l_s$, then $v$ would not lie inside $Q_v$ in $\Gamma$, as the projection of every vertex of $Q_v$ on $\vec d$ would follow the projection of $v$ on $\vec d$. Analogously, $l(v)$ is not contained in the half-plane delimited by $l_s$ and not containing $l_r$. It follows that $l(v)$ is ``in-between'' $l_r$ and $l_s$, that is, $l(v)$ lies in the strip defined by $l_r$ and $l_s$.

Construct a straight-line planar drawing $\Gamma_M$ of $G$ from $\Gamma'_M$ by placing $v$ on any point at the intersection of $l(v)$ and the interior of $Q^M_v$. Observe that such an intersection is always non-empty, given that $l_r$ and $l_s$ have non-empty intersection with $Q^M_v$, given that  $l(v)$ is in-between $l_r$ and $l_s$, and given that $Q^M_v$ is a convex polygon.

\begin{figure}[htb]
\begin{center}
\mbox{\includegraphics[scale=0.42]{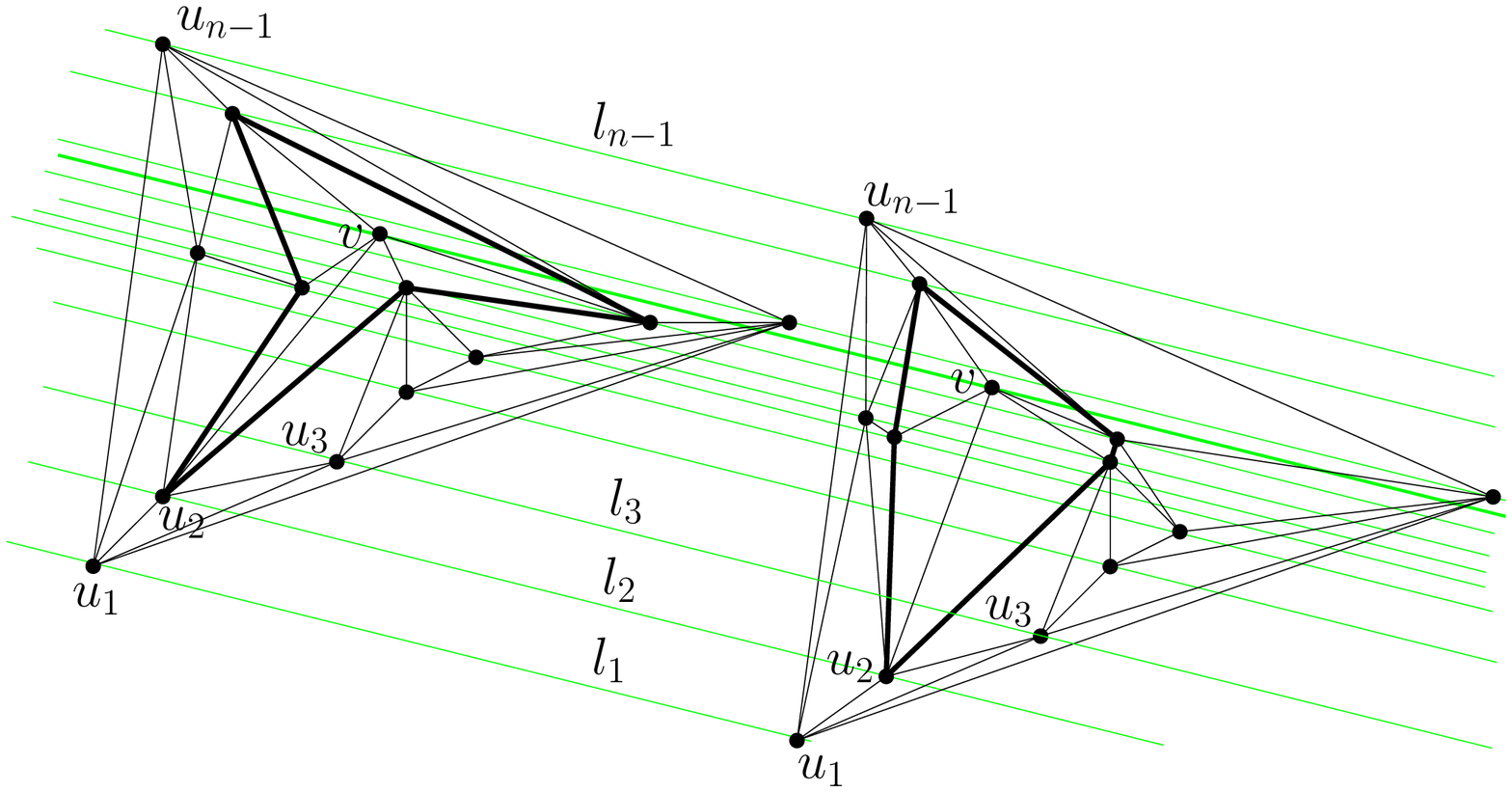}}
\caption{Morphing $\Gamma$ into a straight-line planar drawing $\Gamma_M$ of $G$ in which the polygon $Q^M_v$ representing $C_v$ is convex. The thick green line is $l(v)$.}
\label{fig:convexifier-morphing}
\end{center}
\end{figure}

Let $\gamma$ be the function that maps $v$ to $l(v)$ and $u_i$ to $l_i$, for $1\leq i\leq n-1$. We have that $\Gamma$ and $\Gamma_M$ are left-to-right equivalent straight-line level planar drawings of $(G,\vec d,L'\cup \{l(v)\},\gamma)$. By Lemma~\ref{le:unidirectional}, the linear morph transforming $\Gamma$ into $\Gamma_M$ is planar and unidirectional. Further, the polygon $Q^M_v$ representing $C_v$ in $\Gamma_M$ is convex. Thus, algorithm {\sc fast convexifier} consists of a single unidirectional morphing step transforming $\Gamma$ into $\Gamma_M$. See Fig.~\ref{fig:convexifier-morphing}.

\subsection{Contractibility Creator} \label{subse:contractibility}

In this section we describe an algorithm, called {\sc contractibility creator}, that receives a straight-line planar drawing $\Gamma$ of a plane graph $G$, a quasi-contractible vertex $v$ of $G$, and a neighbor $x$ of $v$, and returns a planar unidirectional morph with $O(1)$ morphing steps transforming $\Gamma$ into a straight-line planar drawing $\Gamma'$ of $G$ in which $v$ is $x$-contractible.

Denote by $u_1,\dots,u_k$ the clockwise order of the neighbors of $v$. If $k=1$, then $v$ is $x$-contractible in $\Gamma$, hence algorithm {\sc contractibility creator} returns $\Gamma'=\Gamma$.

If $k\geq 2$, consider any pair of consecutive neighbors of $v$, say $u_i$ and $u_{i+1}$ (where $u_{k+1}=u_1$). See Fig.~\ref{fig:contractibility}(a). If edge $(u_i,u_{i+1})$ belongs to $G$, then cycle $(u_i,v,u_{i+1})$ delimits a face of $G$, given that $v$ is quasi-contractible. Otherwise, we aim at morphing $\Gamma$ into a straight-line planar drawing of $G$ where a dummy edge $(u_i,u_{i+1})$ can be introduced while maintaining planarity and while ensuring that cycle $(u_i,v,u_{i+1})$ delimits a face of the augmented graph $G\cup\{(u_i,u_{i+1})\}$. This is accomplished as follows:

\begin{figure}[htb]
\begin{center}
\begin{tabular}{c}
\begin{tabular}{c c c}
\mbox{\includegraphics[scale=0.285]{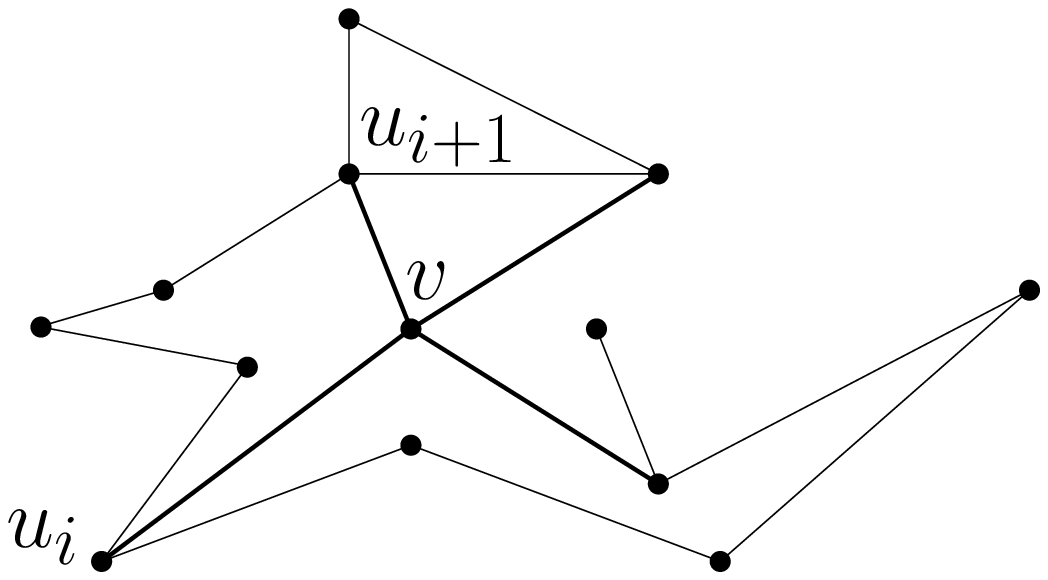}} &
\mbox{\includegraphics[scale=0.285]{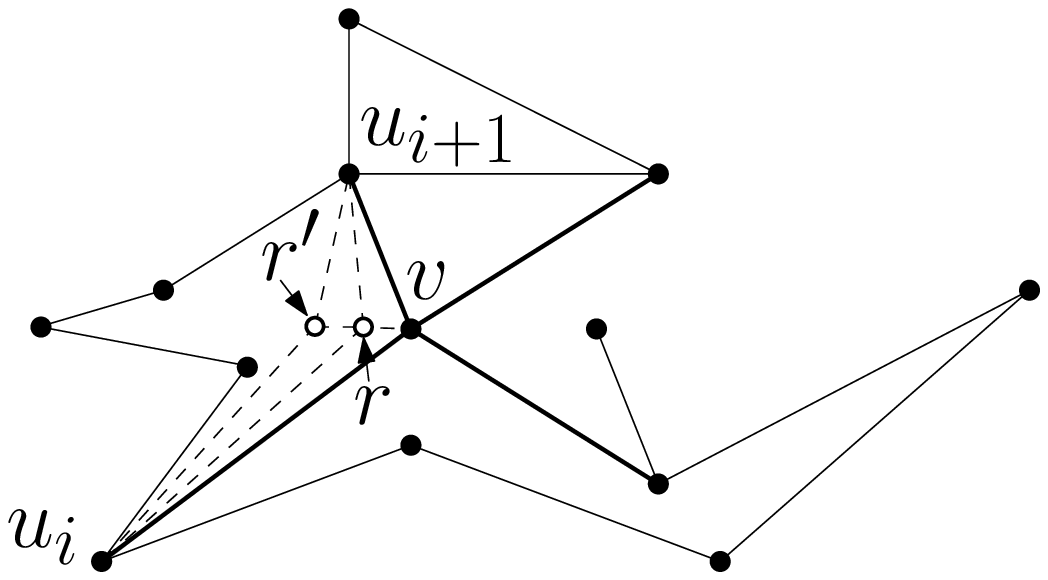}} &
\mbox{\includegraphics[scale=0.285]{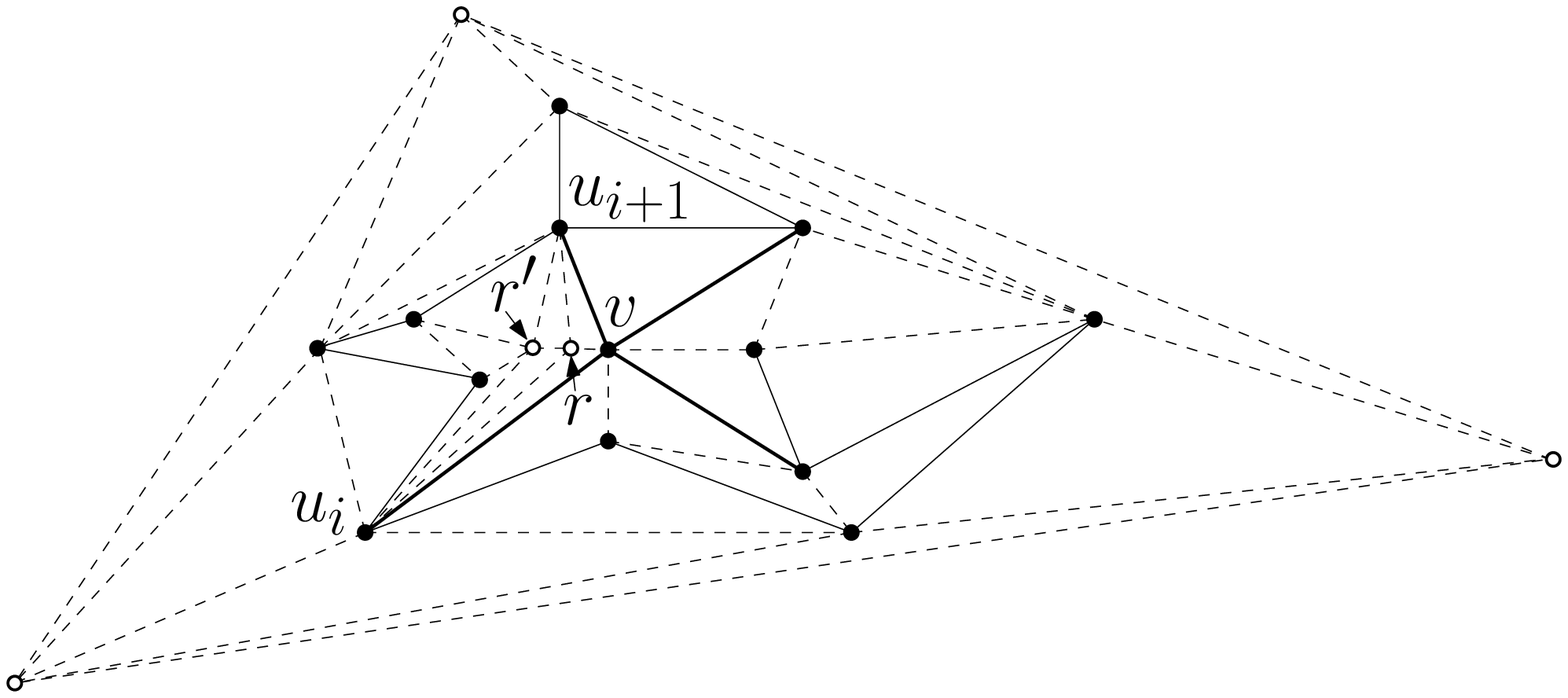}}\\
(a) & (b) & (c)
\end{tabular}\\
\begin{tabular}{c c}
\mbox{\includegraphics[scale=0.285]{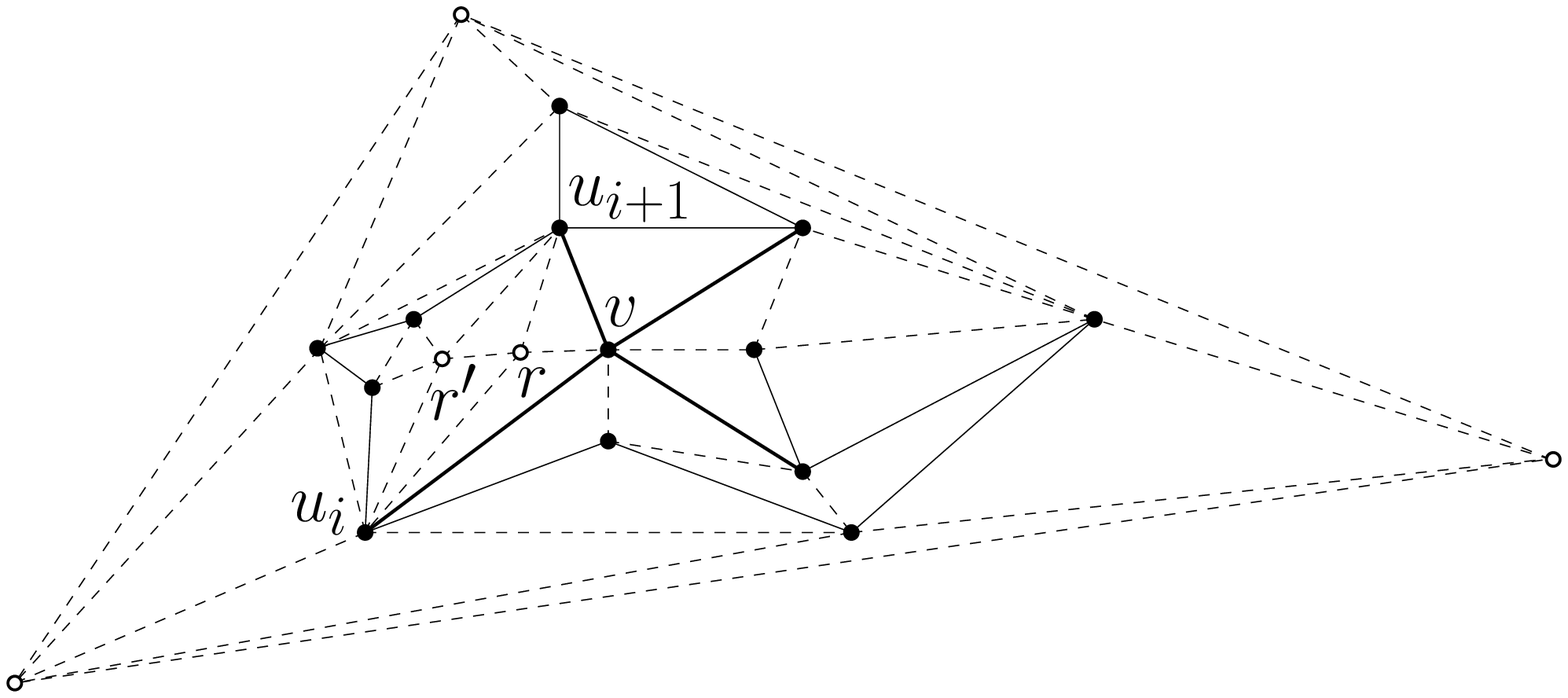}} \hspace{5mm} &
\mbox{\includegraphics[scale=0.285]{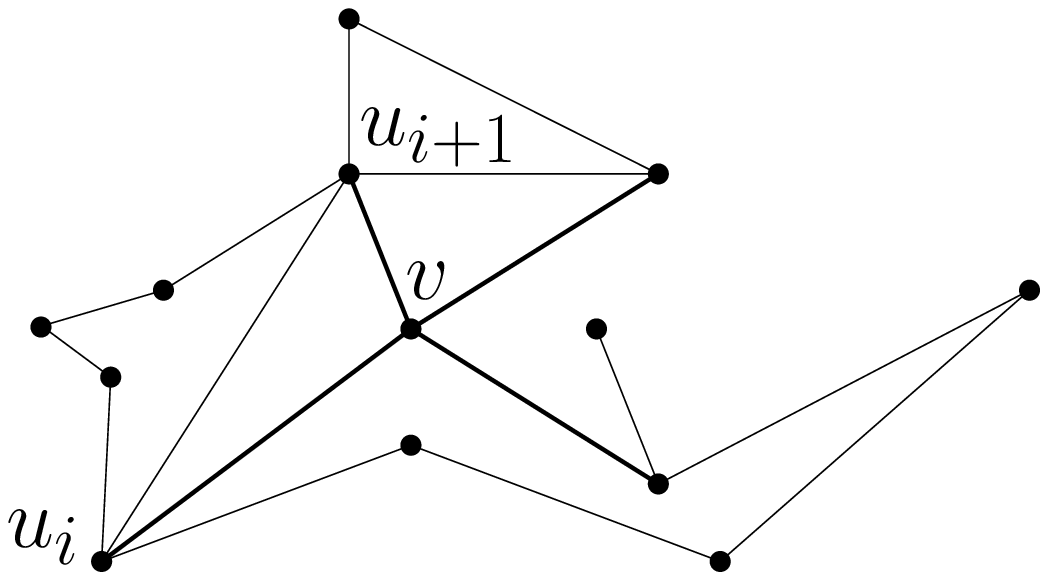}} \\
(d) \hspace{5mm} & (e)
\end{tabular}
\end{tabular}
\caption{(a) Drawing $\Gamma$ of $G$. (b) Drawing $\Gamma^+$ of $G^+$. (c) Drawing $\Gamma^*$ of $G^*$. (d) Drawing $\Gamma^*_M$ of $G^*$. (e) Drawing $\Gamma_M$ of $G\cup\{(u_i,u_{i+1})\}$.}
\label{fig:contractibility}
\end{center}
\end{figure}

\begin{enumerate}
\item We add two dummy vertices $r$ and $r'$, and six dummy edges $(r,v)$, $(r,u_i)$, $(r,u_{i+1})$, $(r',u_i)$, $(r',u_{i+1})$, and $(r,r')$ to $\Gamma$ and $G$, obtaining a straight-line planar drawing $\Gamma^+$ of a plane graph $G^+$, in such a way that $\Gamma^+$ is planar and cycles $(v,r,u_i)$, $(v,r,u_{i+1})$, $(r',r,u_i)$, and $(r',r,u_{i+1})$ delimit faces of $G^+$. See Fig.~\ref{fig:contractibility}(b).
\item We add dummy vertices and edges to $\Gamma^+$ and $G^+$, obtaining a straight-line planar drawing $\Gamma^*$ of a graph $G^*$, in such a way that $\Gamma^*$ is planar, that $G^*$ is a maximal planar graph, and that edges $(u_i,u_{i+1})$ and $(r',v)$ do not belong to $G^*$. Observe that $r$ is a quasi-contractible vertex of $G^*$. See Fig.~\ref{fig:contractibility}(c).
\item We apply algorithm {\sc fast convexifier} to morph $\Gamma^*$ with one unidirectional morphing step into a straight-line planar drawing $\Gamma^*_M$ of $G^*$ such that the polygon of the neighbors of $r$ is convex. See Fig.~\ref{fig:contractibility}(d).
\item We remove from $\Gamma^*_M$ all the dummy vertices and edges that belong to $G^*$ and do not belong to $G$, and we add edge $(u_i,u_{i+1})$ to $\Gamma^*_M$ and $G$, obtaining a straight-line planar drawing $\Gamma_M$ of graph $G\cup\{(u_i,u_{i+1})\}$. See Fig.~\ref{fig:contractibility}(e).
\end{enumerate}

If $k=2$, then after the above described algorithm is performed, we have that $v$ is $x$-contractible in $\Gamma'=\Gamma_M$, both if $x=u_1$ or if $x=u_2$, given that $(v,u_1,u_2)$ delimits a face of $G\cup\{(u_1,u_2)\}$. If $3\leq k\leq 5$, then the above described algorithm is repeated at most $k$ times (namely once for each pair of consecutive neighbors of $v$ that are not adjacent in $G$), at each time inserting an edge between a distinct pair of consecutive neighbors of $v$. Eventually, we obtain a straight-line planar drawing $\Phi$ of plane graph $G\cup\{(u_1,u_2),(u_2,u_3),(u_3,u_4),(u_4,u_5),(u_5,u_1)\}$ in which $v$ is quasi-contractible. Then we add dummy vertices and edges to $\Phi$, obtaining a straight-line planar drawing $\Sigma$ of a graph $H$, in such a way that $H$ is a maximal planar graph and that $v$ is quasi-contractible in $\Sigma$. We apply algorithm {\sc fast convexifier} to morph $\Sigma$ with one unidirectional morphing step into a straight-line planar drawing $\Psi$ of $H$ such that the polygon of the neighbors of $v$ is convex. Hence, $v$ is contractible onto any of its neighbors in $\Psi$. Then, we remove the edges of $H$ not in $G$, obtaining a straight-line planar drawing $\Gamma'$ of $G$ in which $v$ is contractible onto any of its neighbors; hence, $v$ is $x$-contractible in $\Gamma'$. Finally, observe that $\Gamma'$ is obtained from $\Gamma$ in at most $k+1\leq 6$ unidirectional morphing steps.

\subsection{The Algorithm} \label{subse:algorithm}

We now describe an algorithm to construct a pseudo-morph $\cal P$ with $O(n)$ steps between any two straight-line planar drawings $\Gamma_s$ and $\Gamma_t$ of the same $n$-vertex plane graph $G$.

The algorithm works by induction on $n$. If $n=1$, then ${\cal P}$ consists of a single unidirectional morphing step transforming $\Gamma_s$ into $\Gamma_t$. If $n\geq 2$, then let $v$ be a quasi-contractible vertex of $G$, which exists by Lemma~\ref{le:candidate_exists}, and let $x$ be any neighbor of $v$. Let $M_s$ and $M_t$ be the planar unidirectional morphs with $O(1)$ morphing steps produced by algorithm {\sc contractibility creator} transforming $\Gamma_s$ and $\Gamma_t$ into straight-line planar drawings $\Gamma^x_s$ and $\Gamma^x_t$ of $G$, respectively, such that $v$ is $x$-contractible both in $\Gamma^x_s$ and in $\Gamma^x_t$. Let $G'$ be the $(n-1)$-vertex plane graph obtained by contracting $v$ onto $x$ in $G$, and let $\Gamma_s'$ and $\Gamma_t'$ be the straight-line planar drawings of $G'$ obtained from $\Gamma^x_s$ and $\Gamma^x_t$, respectively, by contracting $v$ onto $x$. Further, let ${\cal P}'$ be the inductively constructed pseudo-morph between $\Gamma_s'$ and $\Gamma_t'$. Then, pseudo-morph $\cal P$ is defined as the unidirectional morph $M_s$ transforming $\Gamma_s$ into $\Gamma^x_s$, followed by the contraction of $v$ onto $x$ in $\Gamma^x_s$, followed by the pseudo-morph ${\cal P}'$ between $\Gamma_s'$ and $\Gamma_t'$, followed by the uncontraction of $v$ from $x$ into $\Gamma^x_t$, followed by the unidirectional morph $M^{-1}_t$ transforming $\Gamma^x_t$ into $\Gamma_t$. Observe that $\cal P$ has a number of steps which is a constant plus the number of steps of ${\cal P}'$. Hence, ${\cal P}$ consists of $O(n)$ steps.

A unidirectional planar morph $M$ between $\Gamma_s$ and $\Gamma_t$ can be constructed with a number of morphing steps equal to the number of steps of $\cal P$, by Lemma~\ref{le:pseudo-to-morph}. This proves the following:

\begin{theorem}\label{th:main}
Let $\Gamma_s$ and $\Gamma_t$ be any two straight-line planar drawings of the same $n$-vertex plane graph $G$. There exists an algorithm to construct a planar unidirectional morph with $O(n)$ morphing steps transforming $\Gamma_s$ into $\Gamma_t$.
\end{theorem}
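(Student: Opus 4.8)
The plan is to reduce Theorem~\ref{th:main} to the pseudo-morph machinery developed in Section~\ref{se:preliminaries}. First I would build, by induction on the number $n$ of vertices of $G$, a pseudo-morph ${\cal P}$ with $O(n)$ steps transforming $\Gamma_s$ into $\Gamma_t$; then a single application of Lemma~\ref{le:pseudo-to-morph} yields a planar unidirectional morph with the same number of morphing steps, which is exactly the claimed bound.

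For the base case $n=1$ a single unidirectional morphing step translating the unique vertex suffices (Case (A) of the definition of pseudo-morph). For $n\ge 2$, I would pick a quasi-contractible vertex $v$ of $G$, which exists by Lemma~\ref{le:candidate_exists}, and an arbitrary neighbor $x$ of $v$. Running algorithm {\sc contractibility creator} on $\Gamma_s$ and, independently, on $\Gamma_t$ produces planar unidirectional morphs $M_s$ and $M_t$, each with $O(1)$ morphing steps, reaching drawings $\Gamma^x_s$ and $\Gamma^x_t$ in which $v$ is $x$-contractible. Contracting $v$ onto $x$ in both drawings gives straight-line planar drawings $\Gamma'_s$, $\Gamma'_t$ of the same $(n-1)$-vertex plane graph $G'=G/(v,x)$, to which the inductive hypothesis applies, yielding a pseudo-morph ${\cal P}'$ with $O(n-1)$ steps. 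I then define ${\cal P}$ as $M_s$, followed by the contraction of $v$ onto $x$, followed by ${\cal P}'$, followed by the uncontraction of $v$ from $x$, followed by $M_t^{-1}$; Cases (B) and (C) of the definition certify that this is a legal pseudo-morph, and its number of steps is a constant plus that of ${\cal P}'$, hence $O(n)$. Unfolding the recursion gives the $O(n)$ bound on ${\cal P}$, and Lemma~\ref{le:pseudo-to-morph} converts ${\cal P}$ into the desired planar unidirectional morph.

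The heart of the argument --- and the step I expect to be the main obstacle --- is establishing that {\sc contractibility creator} uses only $O(1)$ morphing steps, since this is precisely what improves the previously best $O(n^2)$ bound to $O(n)$. That subroutine relies on algorithm {\sc fast convexifier} of Section~\ref{subse:convexifier}, whose correctness rests on the monotonicity lemmas (Lemmas~\ref{le:convex-is-monotone}--\ref{le:monotone-in-one-direction}), on Lemma~\ref{le:unidirectional} (so that a morph between two left-to-right equivalent straight-line level planar drawings is a single planar unidirectional step), and crucially on the theorem of Hong and Nagamochi (Theorem~\ref{th:hong-nagamochi}) guaranteeing a convex straight-line level planar drawing of the triconnected hierarchical plane st-graph obtained by deleting the quasi-contractible vertex. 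A further subtlety, already handled inside the proof of Lemma~\ref{le:pseudo-to-morph}, is that the contracted vertex may be incident to the outer face or have degree $1$ or $2$, so in passing from ${\cal P}$ to the actual morph the trajectory of $v$ must be chosen through nice points of circular sectors defined via the kernel of the vertex $v$ rather than via the kernel of the polygon induced by its neighbors.
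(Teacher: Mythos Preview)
Your proposal is correct and follows essentially the same approach as the paper: an inductive construction of a pseudo-morph with $O(n)$ steps via {\sc contractibility creator} and contraction of a quasi-contractible vertex, followed by an application of Lemma~\ref{le:pseudo-to-morph}. Your identification of the key ingredients ({\sc fast convexifier}, the monotonicity lemmata, Lemma~\ref{le:unidirectional}, and Theorem~\ref{th:hong-nagamochi}) matches the paper's development exactly.
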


\section{A Lower Bound} \label{se:lower}

In this section we show two straight-line planar drawings $\Gamma_s$ and $\Gamma_t$ of an $n$-vertex path $P=(v_1,\dots,v_n)$, and we prove that any planar morph $M$ between $\Gamma_s$ and $\Gamma_t$ requires $\Omega(n)$ morphing steps. In order to simplify the description, we consider each edge $e_i=(v_i,v_{i+1})$ as oriented from $v_i$ to $v_{i+1}$, for $i=1,\dots,n-1$.

Drawing $\Gamma_s$ (see Fig.~\ref{fig:lb-path}) is such that all the vertices of $P$ lie on a horizontal straight-line with $v_i$ to the left of $v_{i+1}$, for each $i = 1, \dots, n-1$.

Drawing $\Gamma_t$ (see Fig.~\ref{fig:lb-spiral}) is such that:
\begin{itemize}
\item for each $i = 1, \dots, n-1$ with $i \textrm{ mod } 3 \equiv 1$, the (green) segment representing $e_i$ is horizontal with $v_i$ to the left of $v_{i+1}$;
\item for each $i = 1, \dots, n-1$ with $i \textrm{ mod } 3 \equiv 2$, the (blue) segment representing $e_i$ is parallel to line $y=\tan(\frac{2\pi}{3})x$ with $v_i$ to the right of $v_{i+1}$; and
\item for each $i = 1, \dots, n-1$ with $i \textrm{ mod } 3 \equiv 0$, the (red) segment representing $e_i$ is parallel to line $y=\tan(-\frac{2\pi}{3})x$ with $v_i$ to the right of $v_{i+1}$.
\end{itemize}

\begin{figure}[h]
\centering
\subfigure[]{\includegraphics[scale=.5]{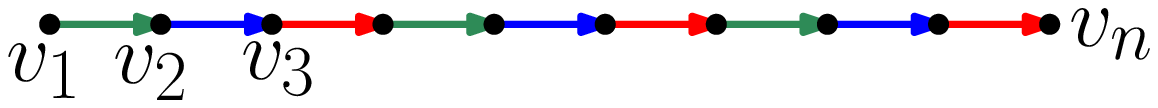}\label{fig:lb-path}} \hspace{1cm}
\subfigure[]{\includegraphics[scale=.5]{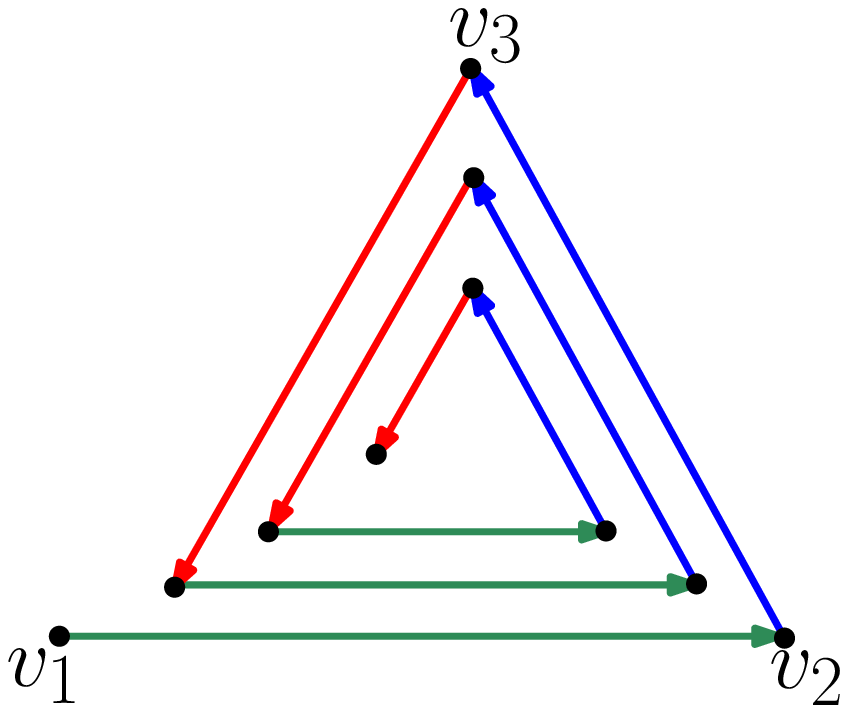}\label{fig:lb-spiral}}
\caption{Drawings $\Gamma_s$ (a) and $\Gamma_t$ (b).}
\label{fig:lb-drawings}
\end{figure}

Let $M=\morph{\Gamma_s=\Gamma_1,\dots,\Gamma_m=\Gamma_t}$ be any planar morph transforming $\Gamma_s$ into $\Gamma_t$.

For $i=1,\dots,n$ and $j=1,\dots,m$, we denote by $v_i^j$ the point where vertex $v_i$ is placed in $\Gamma_j$; also, for $i=1,\dots,n-1$ and $j=1,\dots,m$ we denote by $e_i^j$ the directed straight-line segment representing edge $e_i$ in $\Gamma_j$.

For $1\leq j\leq m-1$, we define the \emph{rotation} $\rho_i^j$ of $e_i$ around $v_i$ during the morphing step $\langle \Gamma_{j},\Gamma_{j+1} \rangle$ as follows (see Fig.~\ref{fig:lb-rotation}). Translate $e_i$ at any time instant of $\langle \Gamma_{j},\Gamma_{j+1} \rangle$ so that $v_i$ stays fixed at a point $a$ during the entire morphing step. After this translation, the morph between $e_i^{j}$ and $e_i^{j+1}$ is a rotation of $e_i$ around $a$ (where $e_i$ might vary its length during $\langle \Gamma_{j},\Gamma_{j+1} \rangle$) spanning an angle $\rho_i^j$, where we assume $\rho_i^j>0$ if the rotation is counter-clockwise, and $\rho_i^j<0$ if the rotation is clockwise. We have the following.


\begin{figure}[h!]
\centering
\hfill
\subfigure[]{\includegraphics[scale=.4]{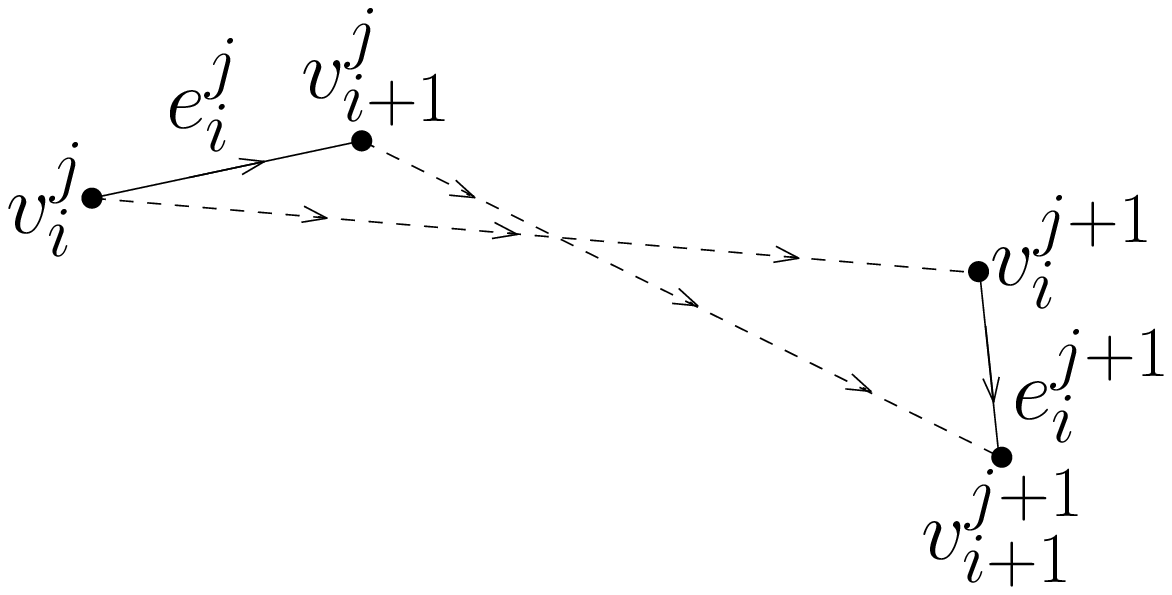}\label{fig:lb-rotation-a}}
\hfill
\subfigure[]{\includegraphics[scale=.4]{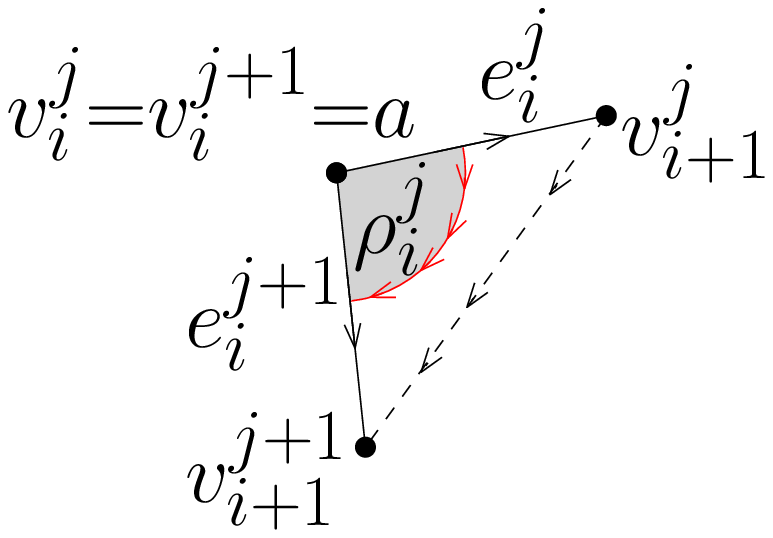}\label{fig:lb-rotation-c}}
\hfill
\caption{Rotation $\rho_i^j$. (a) Morph between $e_i^{j}$ and $e_i^{j+1}$. (b) Translation of the positions of $e_i$ during $\langle \Gamma_{j},\Gamma_{j+1} \rangle$, resulting in $e_i$ spanning an angle $\rho_i^j$ around $v_i$.}
\label{fig:lb-rotation}
\end{figure}

\begin{lemma}\label{le:lb-rotation}
For each $j=1,\dots,m-1$ and $i=1,\dots,n-1$, we have $|\rho_i^j| < \pi$.
\end{lemma}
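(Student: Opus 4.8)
The plan is to argue that a single morphing step cannot rotate an edge by $\pi$ or more, by exploiting planarity together with the fact that $v_i$ stays fixed (after the translation) and that each vertex moves along a straight-line trajectory. First I would fix $j$ and $i$, perform the translation described in the definition of $\rho_i^j$ so that $v_i$ is pinned at a point $a$ throughout the step $\langle\Gamma_j,\Gamma_{j+1}\rangle$, and then track the position of the \emph{other} endpoint $v_{i+1}$; call its trajectory $\ell$. After the translation, $v_{i+1}$ still moves along a straight-line segment, since translating by a continuously-varying vector that is affine in time preserves the property of moving at constant speed along a line (the translation vector is $a - v_i^j - t\,(v_i^{j+1}-v_i^j)$, which is affine in $t$, so the translated position of $v_{i+1}$ is also affine in $t$). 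Hence $e_i$, after translation, is the segment from the fixed point $a$ to a point moving along a line $\ell$.

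The key geometric observation is then: if a segment has one endpoint fixed at $a$ and the other endpoint moves along a straight line $\ell$ not through $a$, the direction of the segment sweeps a total angle strictly less than $\pi$ — indeed the set of directions from $a$ to points of $\ell$ is an open half-plane's worth of directions, i.e.\ an open arc of length exactly $\pi$ on the circle of directions, and as the moving endpoint traverses $\ell$ monotonically the direction changes monotonically, so the swept angle is strictly less than $\pi$. I would make this precise: parametrize $\ell$ as $p(t) = p_0 + t\,w$ for $t$ in some interval; the direction $\theta(t)$ of $p(t)-a$ is a strictly monotone function of $t$ (its derivative has constant sign, equal to the sign of the signed area of $a$, $p(t)$, $p(t)+w$), and $\lim_{t\to\pm\infty}\theta(t)$ differ by exactly $\pi$; therefore over any finite sub-interval the total variation of $\theta$ is strictly less than $\pi$, giving $|\rho_i^j|<\pi$. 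The case where $\ell$ passes through $a$ needs a brief separate remark: then $v_{i+1}$ would pass through $v_i$ during the step, contradicting planarity (two vertices would overlap), so this case does not occur; likewise if $v_{i+1}$ is momentarily at $a$ at an endpoint of the step, $e_i$ would be degenerate in $\Gamma_j$ or $\Gamma_{j+1}$, which is also excluded since these are straight-line drawings with distinct vertex positions.

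The main obstacle I anticipate is handling the bookkeeping of the translation carefully enough to be rigorous: one must verify that ``translate $e_i$ so that $v_i$ stays fixed'' genuinely yields a rotation about a single fixed center and that the moving endpoint's path remains a line segment, rather than, say, a more complicated curve. This is the step where one must be explicit that the translation vector is an affine function of time, and thus composing it with the (affine in time) original trajectory of $v_{i+1}$ still gives an affine, hence straight-line constant-speed, trajectory. Once that is established, the remainder is the elementary monotonicity-of-angle argument sketched above. I would also note that strictness ($<\pi$ rather than $\le\pi$) is where planarity is genuinely used, via the non-degeneracy and non-overlap conditions; without those one could only claim $|\rho_i^j|\le\pi$.
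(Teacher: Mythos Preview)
Your argument is correct and takes a genuinely different route from the paper's. Both proofs start from the same translation (which is part of the definition of $\rho_i^j$), but the paper then returns to the \emph{untranslated} picture: assuming $|\rho_i^j|\ge\pi$, it finds a time $t_\pi$ at which the edge is antiparallel to its initial direction, observes that the straight-line trajectories $\overline{v_i(0)v_i(t_\pi)}$ and $\overline{v_{i+1}(0)v_{i+1}(t_\pi)}$ intersect (or are aligned), and uses a similar-triangles computation to locate a single time at which $v_i$ and $v_{i+1}$ coincide at that intersection point. You instead stay in the translated frame and use the elementary fact that the argument of $p(t)-a$, with $p$ moving affinely along a line not through $a$, is strictly monotone with total variation bounded by the $\pi$-gap between the two limiting directions of the line. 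Your approach is shorter and avoids the case split and the similar-triangles algebra; the paper's approach has the mild advantage of exhibiting the collision explicitly in the original drawing, which makes the role of planarity slightly more tangible.

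One small point to tighten: in your ``$\ell$ passes through $a$'' remark, you conclude that $v_{i+1}$ must pass through $v_i$ during the step. That is only true if $a$ lies on the \emph{segment} of $\ell$ traversed for $t\in[0,1]$. If $\ell$ contains $a$ but the traversed segment does not, there is no collision---but then $p(t)-a$ has constant direction, so $\rho_i^j=0$ and the bound holds trivially. Add that one-line sub-case and the proof is complete.
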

\begin{proof}
Assume, for a contradiction, that $|\rho_i^j| \geq \pi$, for some $1\leq j\leq x-1$ and $1\leq i\leq n-1$. Also assume, w.l.o.g., that the morphing step $\langle \Gamma_{j},\Gamma_{j+1} \rangle$ happens between time instants $t=0$ and $t=1$. For any $0\leq t\leq 1$, denote by $v_i(t)$, $v_{i+1}(t)$, $e_i(t)$, and $\rho_i^j(t)$ the position of $v_i$, the position of $v_{i+1}$, the drawing of $e_i$, and the rotation of $e_i$ around $v_i$ at time instant $t$, respectively. Note that $v_i(0)=v_i^j$, $v_{i+1}(0)=v_{i+1}^j$, $e_i(0)=e_i^j$, $\rho_i^j(0)=0$, and $\rho_i^j(1)=\rho_i^j$. Since a morph is a continuous transformation and since $|\rho_i^j| \geq \pi$, there exists a time instant $t_{\pi}$ with $0< t_{\pi}\leq 1$ such that $|\rho_i^j(t_{\pi})|=\pi$.

We prove that there exists a time instant $t_r$ with $0< t_r\leq t_{\pi}$ in which $v_i(t)$ and $v_{i+1}(t)$ coincide, thus contradicting the assumption that morph $\langle \Gamma_{j},\Gamma_{j+1} \rangle$ is planar.

\begin{figure}[h!]
\centering{
\includegraphics[scale=0.4]{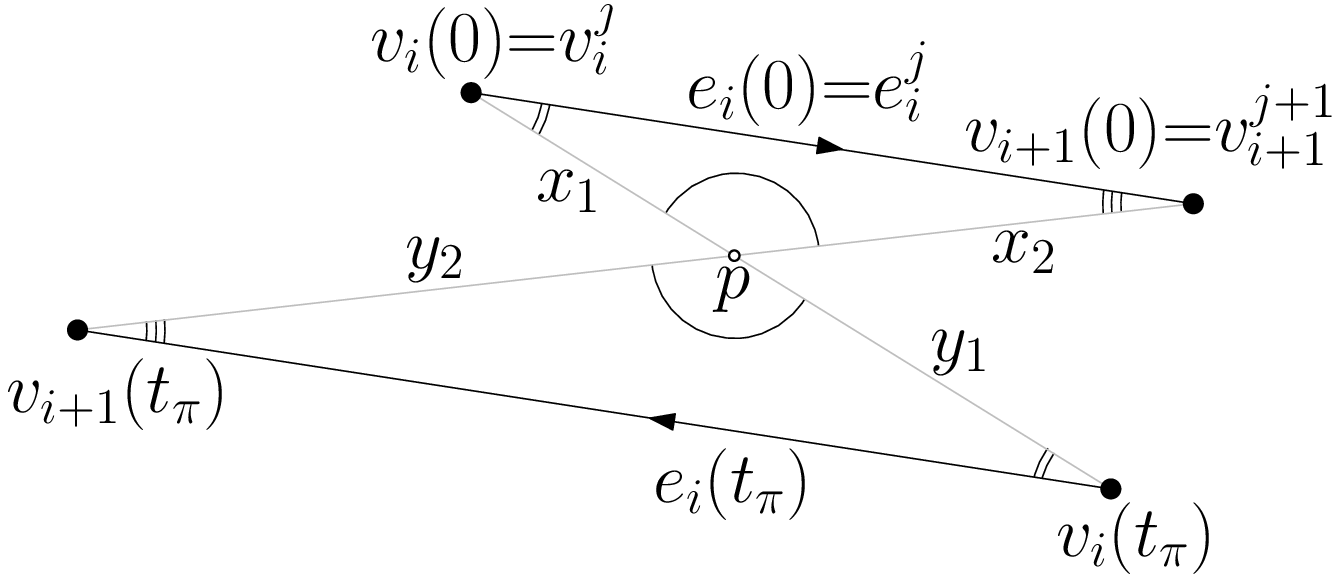}\label{fig:lb-rotation-proof-a}}
\caption{Illustration for the proof of Lemma~\ref{le:lb-rotation}.}
\label{fig:pi-rotation}
\end{figure}

Since $|\rho_i^j(t_{\pi})|=\pi$, it follows that $e_i(t_{\pi})$ is parallel to $e_i(0)$ and oriented in the opposite way. This easily leads to conclude that $t_r$ exists if $e_i(t_{\pi})$ and $e_i(0)$ are aligned. Otherwise, the straight-line segments $\overline{v_i(0) v_i(t_{\pi})}$ and $\overline{v_{i+1}(0) v_{i+1}(t_{\pi})}$ meet in a point $p$. Refer to Fig.~\ref{fig:pi-rotation}. Let $x_1=|\overline{p v_i(0)}|$, $x_2=|\overline{p v_{i+1}(0)}|$, $y_1=|\overline{p v_i(t_{\pi})}|$, and $y_2=|\overline{p v_{i+1}(t_{\pi})}|$. By the similarity of triangles $(v_i(0),p,v_{i+1}(0))$ and $(v_i(t_{\pi}),p,v_{i+1}(t_{\pi}))$, we have $\frac{x_1}{y_1}=\frac{x_2}{y_2}$ and hence $\frac{x_1}{x_1+y_1}=\frac{x_2}{x_2+y_2}$. Thus, $v_i(\frac{x_1}{x_1+y_1}t_{\pi})$ and $v_{i+1}(\frac{x_1}{x_1+y_1}t_{\pi})$ are coincident with $p$. This contradiction proves the lemma.
\end{proof}

For $j=1,\dots,m-1$, we denote by $M_j$ the subsequence $\morph{\Gamma_1,\dots,\Gamma_{j+1}}$ of $M$; also, for $i=1,\dots,n-1$, we define the \emph{total rotation} $\rho_i(M_j)$ of edge $e_i$ around $v_i$ during morph $M_j$ as $\rho_i(M_j)=\sum_{m=1}^{j}\rho_i^m$.

We will show in Lemma~\ref{le:lb-linear-total-rotation} that there exists an edge $e_i$, for some $1 \le i \le n-1$, whose total rotation $\rho_i(M_{m-1})=\rho_i(M)$ is $\Omega(n)$. In order to do that, we first analyze the relationship between the total rotation of two consecutive edges of $P$.

\begin{lemma}\label{le:lb-diff-rot}
For each $j=1,\dots,m-1$ and for each $i=1,\dots,n-2$, we have that $|\rho_{i+1}(M_j)-\rho_{i}(M_j)|<\pi$.
\end{lemma}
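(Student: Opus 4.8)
The plan is to track, for a fixed morphing step $\langle \Gamma_j, \Gamma_{j+1}\rangle$, how the angle between two consecutive edges $e_i$ and $e_{i+1}$ can change, and then sum over all steps. The key observation is that the angle between the directed segments $e_i^j$ and $e_{i+1}^j$ at vertex $v_{i+1}$ (say, measured as the angle $\angle(v_i^j, v_{i+1}^j, v_{i+2}^j)$) must stay in the open interval $(0,2\pi)$ at every time instant, since a morphing step is planar and so $v_i$, $v_{i+1}$, $v_{i+2}$ can never become collinear with $v_{i+1}$ between $v_i$ and $v_{i+2}$, nor can any two of them coincide. Actually the cleaner route: during a single planar morphing step the three points $v_i, v_{i+1}, v_{i+2}$ are never collinear (this would force either a vertex-vertex overlap or a vertex-edge overlap of the path $P$), hence the signed angle $\theta_i(t)$ they span at $v_{i+1}$ — taken as a continuous function of $t$ starting from its value in $\Gamma_j$ — stays within an open interval of length $2\pi$, so $|\theta_i(1) - \theta_i(0)| < 2\pi$.

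First I would set up notation: fix $j$, parametrize the step over $t\in[0,1]$, and for $i=1,\dots,n-2$ let $\theta_i(t)$ be the continuous determination of the angle by which one must rotate $e_i$ counterclockwise around $v_{i+1}$ (after translating so $v_{i+1}$ is fixed) to align it with the reversed direction of $e_{i+1}$, i.e. essentially the turning angle of the path at $v_{i+1}$; choose the continuous branch with $\theta_i$ starting from the value determined by $\Gamma_j$. Next I would argue that $\theta_i(t)$ never hits a value congruent to $0 \bmod \pi$ in a way that would make the three vertices collinear — more precisely, $v_i, v_{i+1}, v_{i+2}$ collinear during a planar morph of the path is impossible, because either two vertices coincide (forbidden) or $v_{i+1}$ lies strictly between $v_i$ and $v_{i+2}$ on a line, forcing edge $e_i$ and edge $e_{i+1}$ to overlap along a segment (forbidden), or $v_{i+1}$ lies outside segment $v_iv_{i+2}$, which again forces $e_i$ and $e_{i+1}$ to overlap (one contains part of the other). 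So $\theta_i(t) \notin \pi\mathbb{Z}$ for all $t$, hence $\theta_i$ stays in a single open interval $(k\pi, (k+1)\pi)$ of length $\pi$, giving $|\theta_i(1) - \theta_i(0)| < \pi$.

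Then I would relate $\theta_i(1) - \theta_i(0)$ to $\rho_{i+1}^j - \rho_i^j$. The rotation $\rho_i^j$ is the net turn of the direction of $e_i$ during the step, and similarly $\rho_{i+1}^j$ for $e_{i+1}$; since the turning angle at $v_{i+1}$ differs from (direction of $e_{i+1}$) $-$ (direction of $e_i$) by the constant $\pi$, we get $\theta_i(1) - \theta_i(0) = \rho_{i+1}^j - \rho_i^j$ exactly (as continuous quantities, using that all three are defined by continuous tracking from $\Gamma_j$). Combining with the previous paragraph yields $|\rho_{i+1}^j - \rho_i^j| < \pi$ for each single step. Finally, summing $\rho^m$ over $m=1,\dots,j$ — wait, here I must be careful: summing $|\rho_{i+1}^m - \rho_i^m| < \pi$ over $m$ would only give a bound of $j\pi$, not $\pi$. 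So the telescoping must instead be done on the continuous function directly: define $\Theta_i(M_j) = \rho_{i+1}(M_j) - \rho_i(M_j) = \sum_{m=1}^j (\rho_{i+1}^m - \rho_i^m)$, and observe this equals $\theta_i$ tracked continuously through the \emph{entire} morph $M_j$ minus its initial value, i.e. $\rho_{i+1}(M_j) - \rho_i(M_j) = \theta_i^{(j)} - \theta_i^{(0)}$ where $\theta_i^{(0)}$ is determined by $\Gamma_s$ and $\theta_i^{(j)}$ by $\Gamma_{j+1}$, tracked continuously over all of $M_j$. The same collinearity argument applies over the whole of $M_j$ (not just one step): $v_i, v_{i+1}, v_{i+2}$ are never collinear at any time during the entire planar morph $M$, so the continuous tracking of $\theta_i$ over $M_j$ stays in one open interval of length $\pi$, giving $|\rho_{i+1}(M_j) - \rho_i(M_j)| < \pi$.

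The main obstacle I anticipate is making the "continuous determination of the angle" argument fully rigorous — specifically, showing that the net rotation $\rho_i(M_j)$ (defined step-by-step via translated copies of $e_i$) genuinely coincides with the continuous angular tracking of the direction of $e_i$ over the whole morph, and that the turning angle $\theta_i$ at $v_{i+1}$, tracked continuously, equals $\rho_{i+1}(M_j) - \rho_i(M_j)$. This requires that $e_i$ never degenerates to a point during the morph (so its direction is always well-defined) — which follows from planarity since $v_i \neq v_{i+1}$ always — and a careful continuity/lifting argument for the angle function on the plane minus the origin. The collinearity-forbidden step is conceptually easy but needs the clean case analysis (coincidence, betweenness, or outside) spelled out to conclude a genuine overlap of path edges in each case. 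Once these are in place, the bound $|\rho_{i+1}(M_j) - \rho_i(M_j)| < \pi$ follows immediately.
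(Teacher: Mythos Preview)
Your overall strategy---tracking the angle between consecutive edge directions continuously and arguing it can never reach a forbidden value---is exactly the paper's approach. However, your case analysis contains a genuine error that breaks the argument as written.

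You claim that $v_i,v_{i+1},v_{i+2}$ can never be collinear during a planar morph, and in particular that ``$v_{i+1}$ lies strictly between $v_i$ and $v_{i+2}$ on a line, forcing edge $e_i$ and edge $e_{i+1}$ to overlap along a segment.'' This is false: when $v_{i+1}$ is strictly between $v_i$ and $v_{i+2}$, the segments $e_i=[v_i,v_{i+1}]$ and $e_{i+1}=[v_{i+1},v_{i+2}]$ meet only at $v_{i+1}$, which is perfectly legal. Indeed, this straight-through configuration is precisely what occurs in $\Gamma_s$, so your claim ``$\theta_i(t)\notin\pi\mathbb{Z}$ for all $t$'' already fails at $t=0$. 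Consequently your conclusion that $\theta_i$ is confined to an open interval of length $\pi$ does not follow from non-collinearity alone.

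What is actually forbidden is only the \emph{fold-back} configuration, where $v_{i+1}$ lies outside the segment $[v_i,v_{i+2}]$ on the common line (your second sub-case, which you handle correctly). This rules out exactly one residue class modulo $2\pi$ for the angle between the edge directions, so a priori $\theta_i$ is confined only to an interval of length $2\pi$. The missing ingredient is the specific initial drawing: in $\Gamma_s$ the edges $e_i$ and $e_{i+1}$ have the \emph{same} direction, so the continuously tracked difference of rotations starts at $0$, and the forbidden fold-back values sit at $\pm\pi$. Hence the difference stays in $(-\pi,\pi)$, giving $|\rho_{i+1}(M_j)-\rho_i(M_j)|<\pi$. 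This is precisely how the paper argues: it invokes the fact that in $\Gamma_1$ the two edges lie on the same line with the same orientation, so a rotation difference of exactly $\pi$ would make them parallel with opposite orientations and hence overlapping at $v_{i+1}$. Once you replace your faulty collinearity claim with this use of the initial configuration, your proof becomes correct and coincides with the paper's.
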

\begin{proof}
Suppose, for a contradiction, that $|\rho_{i+1}(M_j)-\rho_{i}(M_j)|\geq \pi$ for some $1\leq j \leq m-1$ and $1\leq i \leq n-2$. Assume that $j$ is minimal under this hypothesis. Since each vertex moves continuously during $M_j$, there exists an intermediate drawing $\Gamma^*$ of $P$, occurring during morphing step $\morph{\Gamma_j,\Gamma_{j+1}}$, such that $|\rho_{i+1}(M^*)-\rho_{i}(M^*)| = \pi$, where $M^*=\morph{\Gamma_1,\dots,\Gamma_j,\Gamma^*}$ is the morph obtained by concatenating $M_{j-1}$ with the morphing step transforming $\Gamma_{j}$ into $\Gamma^*$.
Recall that in $\Gamma_1$ edges $e_i$ and $e_{i+1}$ lie on the same straight line and have the same orientation. Then, since $|\rho_{i+1}(M^*)-\rho_{i}(M^*)| = \pi$, in $\Gamma^*$ edges $e_i$ and $e_{i+1}$ are parallel and have opposite orientations. Also, since edges $e_i$ and $e_{i+1}$ share vertex $v_{i+1}$, they lie on the same line. This implies that such edges overlap, contradicting the hypothesis that $M^*$, $M_j$, and $M$ are planar.
\end{proof}



\remove{
\begin{figure}[h]
\hfill
\subfigure[]{\includegraphics[scale=0.855]{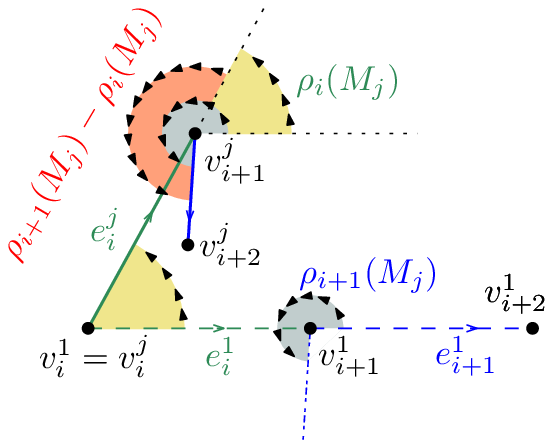}\label{fig:lb-diff-rot-proof-b}}
\hfill
\subfigure[]{\includegraphics[scale=0.855]{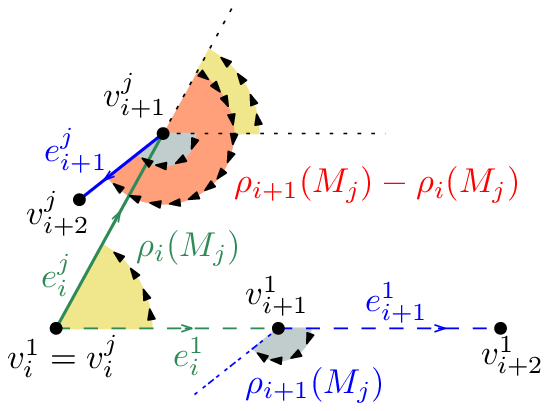}\label{fig:lb-diff-rot-proof-d}}
\hfill
\caption{Illustration of two cases in in which $|\rho_i(M_j)-\rho_{i+1}(M_j)| > \pi$.
In both pictures, segments $e_i^1$ and $e_{i+1}^1$ as in $\Gamma_1$ are represented by dashed segments (green and blue, respectively), while segments $e_i^j$ and $e_{i+1}^j$ as in $\Gamma_j$ are represented by solid segments (green and blue, respectively). The total rotation $\rho_i(M_j)$ of $e_i$ is represented by a yellow angle, the total rotation $\rho_{i+1}(M_j)$ of $e_{i+1}$ is represented by a blue/silver angle, while the difference between $\rho_i(M_j)$ and $\rho_{i+1}(M_j)$ is represented by a red angle. In \subref{fig:lb-diff-rot-proof-b}, $\rho_i(M_j) >0$, $\rho_{i+1}(M_j) >0$, and $\rho_i(M_j)-\rho_{i+1}(M_j) > \pi$. In \subref{fig:lb-diff-rot-proof-d}, $\rho_i(M_j) >0$, $\rho_{i+1}(M_j) <0$, and $\rho_i(M_j)-\rho_{i+1}(M_j) < -\pi$.}
\label{fig:lb-rotation-diffs}
\end{figure}



\begin{figure}[h!]
\centering
\hfill \subfigure[]{\includegraphics[scale=0.83]{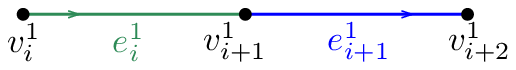}\label{fig:lb-diff-rot-proof-a}}
\hfill
\subfigure[]{\includegraphics[scale=0.83]{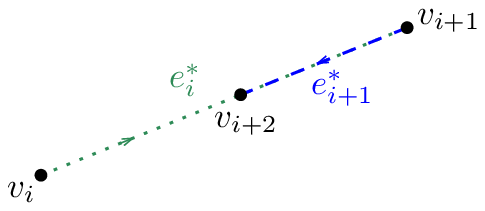}\label{fig:lb-diff-rot-proof-c}}\hfill
\caption{\subref{fig:lb-diff-rot-proof-a}~Segments $e_i^1$ (green) and $e_{i+1}^1$ (blue) as in $\Gamma_1$. \subref{fig:lb-diff-rot-proof-c}~Segments $e_i^*$ and $e_{i+1}^*$ are parallel and have opposite orientations in $\Gamma^*$, and hence overlap. Edges $e_i^*$ and $e_{i+1}^*$ are drawn as (green) dotted and (blue) dashed segments, respectively, in order to better show their overlapping.}
\label{fig:lb-diff-rot-proof}
\end{figure}
}

We are now ready to prove the key lemma for the lower bound.

\begin{lemma}\label{le:lb-linear-total-rotation}
There exists an index $i$ such that $|\rho_i(M)| \in \Omega(n)$.
\end{lemma}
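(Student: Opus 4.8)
The plan is to exploit the two previous lemmas together with the specific ``spiral'' structure of $\Gamma_t$. First I would record the starting and ending configurations of every edge. In $\Gamma_s=\Gamma_1$ all edges $e_i$ are horizontal and pointing rightward, so the slope angle of each $e_i$ is $0$ (modulo $2\pi$). In $\Gamma_t=\Gamma_m$, by the defining construction of $\Gamma_t$, consecutive edges differ in slope by exactly $\pm\frac{2\pi}{3}$: going from $e_i$ to $e_{i+1}$, the direction rotates by $\frac{2\pi}{3}$ (always in the same sense, say counter-clockwise, once we account for the reversed orientation of the blue and red edges). Hence in $\Gamma_t$ the directed edge $e_i$ points in direction $\approx (i-1)\cdot\frac{2\pi}{3} \pmod{2\pi}$. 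The key point is that the \emph{total rotation} $\rho_i(M)$ is a real number (not taken modulo $2\pi$) that is congruent modulo $2\pi$ to the net change of the slope angle of $e_i$ from $\Gamma_s$ to $\Gamma_t$; so $\rho_i(M)\equiv (i-1)\tfrac{2\pi}{3}\pmod{2\pi}$, and in particular $\rho_1(M)\equiv 0$.

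Next I would combine this with Lemma~\ref{le:lb-diff-rot}, which says $|\rho_{i+1}(M_j)-\rho_i(M_j)|<\pi$ for every $j$, and in particular for $j=m-1$, i.e.\ $|\rho_{i+1}(M)-\rho_i(M)|<\pi$. Writing $\delta_i:=\rho_{i+1}(M)-\rho_i(M)$, we know $|\delta_i|<\pi$ and $\delta_i\equiv \frac{2\pi}{3}\pmod{2\pi}$ (the slope difference between $e_i$ and $e_{i+1}$ in $\Gamma_t$, since in $\Gamma_s$ the two edges are parallel and co-oriented). The only real number in the open interval $(-\pi,\pi)$ congruent to $\frac{2\pi}{3}$ modulo $2\pi$ is $\frac{2\pi}{3}$ itself (the other candidate, $-\frac{4\pi}{3}$, lies outside $(-\pi,\pi)$). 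Therefore $\delta_i=\frac{2\pi}{3}$ for \emph{every} $i=1,\dots,n-2$. Telescoping, $\rho_{n-1}(M)=\rho_1(M)+\sum_{i=1}^{n-2}\delta_i = \rho_1(M) + (n-2)\cdot\frac{2\pi}{3}$. Since $\rho_1(M)$ is some fixed value congruent to $0$ mod $2\pi$ (and, by essentially the same argument applied along $e_1$ alone, or by a crude bound, of size $O(1)$ relative to $n$), we get $|\rho_{n-1}(M)| = \frac{2\pi}{3}(n-2) + O(1) \in \Omega(n)$, which is exactly the claim with $i=n-1$.

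The step I expect to require the most care is making rigorous the claim that ``the total rotation is congruent mod $2\pi$ to the net slope change,'' and in particular getting the sign/orientation bookkeeping right for the blue and red edges of $\Gamma_t$ (which are drawn with $v_i$ to the \emph{right} of $v_{i+1}$, i.e.\ the directed segment $e_i$ points in the \emph{negative} direction of the stated line). Concretely: the direction angle of the directed segment $e_i^j$, viewed as a continuous function of $j$ obtained by accumulating the rotations $\rho_i^1,\rho_i^2,\dots$, starts at $0$ at $j=1$ and ends at $\rho_i(M)$ at $j=m$; on the other hand the actual direction of $e_i$ in $\Gamma_m$ is a well-defined angle mod $2\pi$, so $\rho_i(M)$ reduced mod $2\pi$ must equal that direction. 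One should double-check that the direction of $e_i$ in $\Gamma_t$ is indeed $(i-1)\frac{2\pi}{3}$ up to the orientation reversals: for $i\equiv 1$ it is $0$; for $i\equiv 2$ the line has slope $\tan\frac{2\pi}{3}$ but $e_i$ points the reverse way, giving direction $\frac{2\pi}{3}-\pi = -\frac{\pi}{3}\equiv\frac{5\pi}{3}\equiv 2\cdot\frac{2\pi}{3}-2\pi$; for $i\equiv 0$ the line has slope $\tan(-\frac{2\pi}{3})$ reversed, giving direction $-\frac{2\pi}{3}+\pi=\frac{\pi}{3}\equiv\frac{2\pi}{3}\cdot? $ — one verifies this is $\equiv \frac{2\pi}{3}\cdot(i-1)$ mod $2\pi$ in each residue class, so the pattern ``each successive edge is rotated by $\frac{2\pi}{3}$'' holds consistently. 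Everything else is the short modular-arithmetic argument above, using Lemma~\ref{le:lb-diff-rot} to pin $\delta_i$ to the unique value $\frac{2\pi}{3}$ and then telescoping; Lemma~\ref{le:lb-rotation} is not strictly needed for this lemma but confirms the setup is consistent. Hence the lemma follows.
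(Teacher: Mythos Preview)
Your approach is essentially the same as the paper's: write $\rho_{i+1}(M)-\rho_i(M)\equiv\frac{2\pi}{3}\pmod{2\pi}$, use Lemma~\ref{le:lb-diff-rot} to force this difference to equal exactly $\frac{2\pi}{3}$ (the paper phrases this as ``$z_i=0$''), and telescope to obtain $\rho_{n-1}(M)=\rho_1(M)+\frac{2\pi}{3}(n-2)$.

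There is, however, one genuine gap in your final step. You assert that $\rho_1(M)$ is ``of size $O(1)$ relative to $n$,'' but you give no argument for this, and in fact it need not be true: $\rho_1(M)$ is an arbitrary multiple of $2\pi$ depending on the morph $M$, and nothing you have proved prevents $M$ from winding $e_1$ around many times (e.g.\ $\rho_1(M)=-2\pi\lfloor(n-2)/3\rfloor$ is perfectly consistent with everything established so far). The ``same argument applied along $e_1$ alone'' does not help, since there is no neighboring edge constraining $e_1$ from below. The paper avoids this by the obvious triangle-inequality observation: from $|\rho_{n-1}(M)-\rho_1(M)|\in\Omega(n)$ it follows immediately that $|\rho_1(M)|\in\Omega(n)$ or $|\rho_{n-1}(M)|\in\Omega(n)$, which is precisely the existential statement the lemma asks for. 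With this one-line fix your proof is complete and matches the paper's.

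(A minor arithmetic slip: for $i\equiv 2\pmod 3$ the directed edge $e_i$ points \emph{leftward} along a line of slope $\tan\frac{2\pi}{3}$, so its direction is $\frac{2\pi}{3}$, not $-\frac{\pi}{3}$; this is actually what makes your pattern $(i-1)\cdot\frac{2\pi}{3}$ work directly, without the extra ``$-2\pi$'' you tried to insert.)
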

\begin{proof}
Refer to Fig.~\ref{fig:lb-drawings}. For every $1\leq i\leq n-2$, edges $e_i$ and $e_{i+1}$ form an angle of $\pi$ radiants in $\Gamma_s$, while they form an angle of $\frac{\pi}{3}$ radiants in $\Gamma_t$. Hence, $\rho_{i+1}(M)=\rho_i(M)+\frac{2\pi}{3} + 2z_i\pi$, for some $z_i\in \mathbb{Z}$.


In order to prove the lemma, it suffices to prove that $z_i=0$, for every $i=1,\dots,n-2$. Namely, in this case $\rho_{i+1}(M)=\rho_i(M)+\frac{2\pi}{3}$ for every $1\leq i\leq n-2$, and hence $\rho_{n-1}(M) = \rho_1(M) + \frac{2\pi}{3}(n-2)$. This implies $|\rho_{n-1}(M)-\rho_1(M)|\in \Omega(n)$, and thus $|\rho_1(M)|\in \Omega(n)$ or $|\rho_{n-1}(M)|\in \Omega(n)$.

Assume, for a contradiction, that $z_i\neq 0$, for some $1 \leq i \leq n-2$. If $z_i>0$, then $\rho_{i+1}(M)\geq \rho_i(M)+\frac{8\pi}{3}$; further, if $z_i<0$, then $\rho_{i+1}(M)\leq \rho_i(M)-\frac{4\pi}{3}$. Since each of these inequalities contradicts Lemma~\ref{le:lb-diff-rot}, the lemma follows.
\end{proof}

We are now ready to state the main theorem of this section.

\begin{theorem}\label{th:lb-bound}
There exists two straight-line planar drawings $\Gamma_s$ and $\Gamma_t$ of an $n$-vertex path $P$ such that any planar morph between $\Gamma_s$ and $\Gamma_t$ requires $\Omega(n)$ morphing steps.
\end{theorem}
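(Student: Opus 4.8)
The plan is to combine Lemma~\ref{le:lb-linear-total-rotation} with Lemma~\ref{le:lb-rotation} to bound the number of morphing steps from below. Lemma~\ref{le:lb-linear-total-rotation} guarantees an edge $e_i$ of $P$ whose total rotation $\rho_i(M)=\sum_{j=1}^{m-1}\rho_i^j$ satisfies $|\rho_i(M)|\in\Omega(n)$. On the other hand, Lemma~\ref{le:lb-rotation} tells us that $|\rho_i^j|<\pi$ for every single morphing step $\langle\Gamma_j,\Gamma_{j+1}\rangle$. Therefore, by the triangle inequality,
\[
\Omega(n)\;\le\;|\rho_i(M)|\;=\;\Bigl|\sum_{j=1}^{m-1}\rho_i^j\Bigr|\;\le\;\sum_{j=1}^{m-1}|\rho_i^j|\;<\;(m-1)\pi,
\]
so $m-1>\frac{|\rho_i(M)|}{\pi}\in\Omega(n)$, which gives $m\in\Omega(n)$.

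In more detail, the first step is to exhibit the two drawings $\Gamma_s$ and $\Gamma_t$ of the $n$-vertex path $P=(v_1,\dots,v_n)$ as already specified in the excerpt: $\Gamma_s$ places all vertices on a horizontal line in order, and $\Gamma_t$ has the spiral-like shape in which consecutive edges turn by $\tfrac{2\pi}{3}$ at each vertex. The second step is to fix an arbitrary planar morph $M=\langle\Gamma_s=\Gamma_1,\dots,\Gamma_m=\Gamma_t\rangle$ and invoke Lemma~\ref{le:lb-linear-total-rotation} to obtain an index $i$ with $|\rho_i(M)|\in\Omega(n)$. The third step is the displayed chain of inequalities above, using Lemma~\ref{le:lb-rotation} to bound each per-step rotation by $\pi$ in absolute value. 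Concluding, $m\geq 1+\frac{|\rho_i(M)|}{\pi}\in\Omega(n)$, so any planar morph between $\Gamma_s$ and $\Gamma_t$ requires $\Omega(n)$ morphing steps, as claimed.

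I do not expect any real obstacle here: all the technical work has already been done in Lemmata~\ref{le:lb-rotation}, \ref{le:lb-diff-rot}, and \ref{le:lb-linear-total-rotation}. The only mild subtlety is purely bookkeeping — making sure that the quantity measured (the signed total rotation of a \emph{single} edge around one of its endpoints) is exactly the same quantity that is bounded step-wise by Lemma~\ref{le:lb-rotation}, so that the triangle inequality applies verbatim; since $\rho_i(M)$ was \emph{defined} as $\sum_{j}\rho_i^j$, this is immediate. If one wanted an even more self-contained argument one could note that the rotation $\rho_i^j$ is a continuous additive quantity along the morph, so its absolute per-step increments summing to something $\Omega(n)$ while each increment is $<\pi$ forces $\Omega(n)$ increments; but this is just a restatement of the same inequality.
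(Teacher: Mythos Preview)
Your proposal is correct and follows essentially the same approach as the paper: fix the horizontal and spiral drawings of $P$, invoke Lemma~\ref{le:lb-linear-total-rotation} to get an edge with total rotation $\Omega(n)$, and combine with the per-step bound $|\rho_i^j|<\pi$ from Lemma~\ref{le:lb-rotation} to force $m\in\Omega(n)$. Your write-up is in fact slightly more explicit than the paper's, since you spell out the triangle-inequality chain $\Omega(n)\le|\rho_i(M)|\le\sum_j|\rho_i^j|<(m-1)\pi$ rather than leaving it implicit.
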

\begin{proof}
The two drawings $\Gamma_s$ and $\Gamma_t$ of path $P=(v_1,\dots,v_n)$ are those illustrated in Fig.~\ref{fig:lb-drawings}. By Lemma~\ref{le:lb-linear-total-rotation}, there exists an edge $e_i$ of $P$, for some $1 \le i \le n-1$, such that $|\sum_{j=1}^{x-1}\rho_i^j| \in \Omega(n)$. Since, by Lemma~\ref{le:lb-rotation}, we have that $|\rho_i^j| < \pi$ for each $j=1,\dots,x-1$, it follows that $x \in \Omega(n)$. This concludes the proof of the theorem.
\end{proof}

\section{Conclusions} \label{se:conclusions}

In this paper we presented an algorithm to construct a planar morph between two planar straight-line drawings of the same $n$-vertex plane graph in $O(n)$ morphing steps. We also proved that this bound is tight (note that our lower bound holds for any morphing algorithm in which the vertex trajectories are polynomial functions of constant degree).

In our opinion, the main challenge in this research area is the one of designing algorithms to construct planar morphs between straight-line planar drawings with good resolution and within polynomial area (or to prove that no such algorithm exists). In fact, the algorithm we presented, as well as other algorithms known at the state of the art~\cite{aac-mpgdpns-13,afpr-mpgde-13,c-dprc-44,t-dpg-83}, construct intermediate drawings in which the ratio between the lengths of the longest and of the shortest edge is exponential. Guaranteeing good resolution and small area seems to be vital for making a morphing algorithm of practical utility.

Finally, we would like to mention an original problem that generalizes the one we solved in this paper and that we repute very interesting. Let $\Gamma_s$ and $\Gamma_t$ be two straight-line drawings of the same (possibly non-planar) topological graph $G$. Does a morphing algorithm exist that morphs $\Gamma_s$ into $\Gamma_t$ and that preserves the topology of the drawing at any time instant? A solution to this problem is not known even if we allow the trajectories followed by the vertices to be of arbitrary complexity.

\bibliography{bibliography}
\bibliographystyle{plain}

\end{document}